\newcommand{\Tr}{\text{Tr}}
\newcommand{\p}{\partial}
\newcommand{\lan}{\langle}
\newcommand{\ran}{\rangle}
\newcommand{\cmark}{\ding{51}}%
\newcommand{\xmark}{\ding{55}}%
\newcommand{\ra}{\rightarrow}
\newcommand{\wt}{\widetilde}
\newcommand{\uva}{{\mathbf{\hat a}}}
\newcommand{\uvb}{{\mathbf{\hat b}}}
\newcommand{\uvx}{{\mathbf{\hat x}}}
\newcommand{\uvz}{{\mathbf{\hat z}}}
\newcommand{\bfzero}{{\mathbf{0}}}
\renewcommand{\(}{\left(}
\renewcommand{\)}{\right)}
\newcommand{\mt}{\mapsto}
\newcommand{\tp}{\otimes}
\newcommand{\D}{\nabla}
\newcommand\bpm            {\begin{pmatrix}}
	\newcommand\epm           {\end{pmatrix}}
\newcommand{\ms}{\medskip}
\def\app#1#2{%
	\mathrel{%
		\setbox0=\hbox{$#1\sim$}%
		\setbox2=\hbox{%
			\rlap{\hbox{$#1\propto$}}%
			\lower1.1\ht0\box0%
		}%
		\raise0.25\ht2\box2%
	}%
}
\newcommand{\tw}{\textwidth}
\newcommand{\ct}{\Theta}
\newcommand{\inv}{^{-1}}
\newcommand{\ope}\odot
\newcommand{\bi}{\begin{itemize}}
	\newcommand{\ei}{\end{itemize}}
\newtheorem{theorem}{Theorem}
\newtheorem{definition}{Definition}
\newtheorem{corollary}{Corollary}
\newtheorem{proposition}{Proposition}
\newtheorem{lemma}{Lemma}
\newtheorem{remark}{Remark}
\theoremstyle{definition}
\newcommand\bpro		  {\begin{proposition}}
	\newcommand\epro 		  {\end{proposition}}
\newcommand\bproof			  {\begin{proof}}
	\newcommand\eproof 		  {\end{proof}}
\newcommand\ed            {\end{definition}}
\newcommand\be            {\begin{equation}}
\newcommand\ee            {\end{equation}}
\newcommand\ba            {\begin{aligned}}
\newcommand\ea            {\end{aligned}}
\newcommand\bea{\begin{equation}\begin{aligned}}
	\newcommand\eea{\end{aligned}\end{equation}}
\definecolor{darkblue} {rgb}{.1,.5,0.65}
\definecolor{darkgreen}{rgb}{.1,.18,.82}
\newcommand{\sss}{\subsubsection}
\renewcommand{\ss}{\subsection}
\renewcommand{\a}{\alpha}
\renewcommand{\b}{\beta}
\renewcommand{\d}{\delta}
\newcommand{\De}{\Delta}
\newcommand{\g}{\gamma}
\newcommand{\G}{\Gamma}
\newcommand{\s}{\sigma}
\newcommand{\ep}{\varepsilon} %
\renewcommand{\l}{\lambda}
\renewcommand{\o}{\omega}
\renewcommand{\O}{\Omega}
\renewcommand{\r}{\rho}
\newcommand{\z}{\zeta}
\newcommand{\bfxi}{{\boldsymbol{\xi}}}
\newcommand{\bfF}{\mathbf{F}}
\newcommand{\bfd}{\mathbf{d}}
\newcommand{\bfr}{\mathbf{r}}
\newcommand{\bfs}{\mathbf{s}}
\newcommand{\bfu}{\mathbf{u}}
\newcommand{\bfv}{\mathbf{v}}
\newcommand{\bfx}{\mathbf{x}}
\newcommand{\zt}{\mathbb{Z}_2}
\newcommand{\EE}{\mathbb{E}}
\newcommand{\rr}{\mathbb{R}}
\newcommand{\qq}{\qquad}
\newcommand{\zz}{\mathbb{Z}}
\newcommand{\mcc}{\mathcal{C}}
\newcommand{\mce}{\mathcal{E}}
\newcommand{\mcd}{\mathcal{D}}
\newcommand{\mcl}{\mathcal{L}}
\newcommand{\mcs}{\mathcal{S}}
\newcommand{\mcr}{\mathcal{R}}
\newcommand{\sfA}{\mathsf{A}}
\newcommand{\sfC}{\mathsf{C}}
\newcommand{\sfN}{\mathsf{N}}
\newcommand{\sfP}{\mathsf{P}}
\newcommand{\sfT}{\mathsf{T}}
\newcommand{\sfa}{\mathsf{a}}
\newcommand{\sfb}{\mathsf{b}}
\newcommand{\sfc}{\mathsf{c}}
\newcommand{\sfd}{\mathsf{d}}
\newcommand{\scc}{\mathscr{C}}
\newcommand{\scg}{\mathscr{G}}
\renewcommand\qq{\qquad}
\newcommand{\oEE}{\mathop{\mathbb{E}}}
\newcommand{\ethan}[1]{ { \color{blue} \footnotesize \textsf{ethan: \textsl{#1}} }}
\newcommand{\trel}{t_{\rm rel}}
\newcommand{\tmem}{t_{\rm mem}}
\newcommand{\plog}{{p_{\sf log}}}
\renewcommand{\trel}{t_{\sf mem}}
\renewcommand{\tmem}{t_{\sf mem}}
\newcommand{\tdec}{T_{\sf dec}}
\newcommand{\emp}{\varnothing}
\renewcommand{\plog}{p_{\sf log}}
\newcommand{\plc}{{\sf PLC}}
\newcommand{\tsim}{t_{\sf sim}}
\newcommand{\tprep}{T_{\sf init}}
\newcommand{\tinit}{T_{\sf init}}
\newcommand{\off}{{\sf off}}
\newcommand{\rlog}{{\rho_{\sf log}}}
\newcommand{\polylog}{{\rm polylog}}
\newcommand{\kmax}{{k_{\sf max}}}
\newcommand{\cmrk}{{\color{Green}{\cmark}}}
\newcommand{\xmrk}{{\color{red}{\xmark}}} 
\renewcommand{\plc}{{\sf PLF}}
\newcommand{\plf}{{\sf PLF}}
\newcommand{\bw}{{\sf bw}}
\begin{document}

	\title{Local active error correction from simulated confinement}
	
	\author{Ethan Lake}	
	\email{elake@berkeley.edu} 
	\affiliation{Department of Physics, University of California Berkeley}

	\begin{abstract} 
		We refine an old idea for performing fault-tolerant error correction in topological codes by simulating confining interactions between excitations. We implement confinement using an array of local classical processors that measure syndromes, broadcast messages to neighboring processors, and move excitations using received messages. The dynamics of the resulting real-time decoder is geometrically local, homogeneous in spacetime, and self-organized, operating without any form of global control. We prove that below a threshold error rate, it achieves a memory lifetime scaling as a stretched exponential in the linear system size $L$, provided that it has access to $O(\polylog(L))$ noiseless classical bits for each noisy qubit. When applied to the surface code subject to depolarizing noise and measurement errors of equal strength, numerics indicate a threshold at $p_c \approx 1.5\%$.  
	\end{abstract}

	\maketitle
	\tableofcontents

	\section{Introduction and summary} \label{sec:intro} 
	
	Understanding which physical systems are capable of performing quantum error correction (QEC) is a central problem in quantum science. 
	From a theoretical perspective, the characterization of many-body systems capable of autonomously performing QEC is a key part of the ongoing effort to classify noise-robust quantum phases of matter (see e.g. \cite{coser2019classification,rakovszky2024defining,sang2024mixed,Dennis_2002,fan2024diagnostics,cubitt2015stability}).
	From a practical perspective, the construction of fast and accurate QEC schemes (``decoders'') is an essential step towards developing useful quantum computers: while significant progress has been made in recent years \cite{demarti2024decoding,campbell2024series,bluvstein2024logical,preskill2025beyond,bluvstein2025architectural,google_qec}, the best way of performing real-time decoding in a way faithful to the demanding data processing requirements and hardware constraints of modern devices is still the subject of much ongoing research (e.g. \cite{reilly2019challenges,brennan2025classical,maurya2024managing,ziad2024local,liyanage2024fpga,wu2023fusion,caune2024demonstrating,higgott2025sparse,zhou2025low}). 
	
	\begin{figure*}
		\centering 
		\includegraphics[width=1.05\tw]{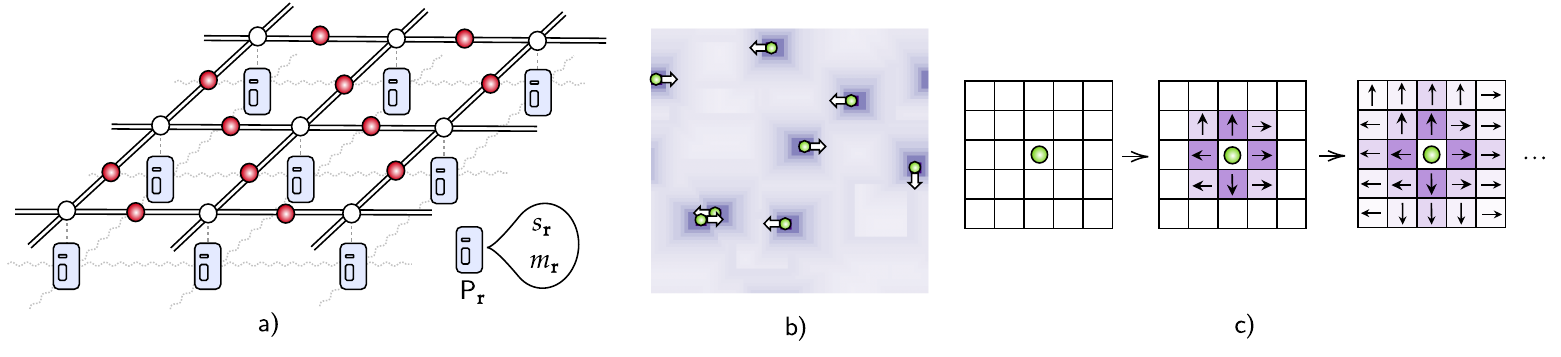}
		\caption{\label{fig:ccqs_fig} A summary of the message-passing architecture that simulates confinement. $\sfa)$ A schematic of a $2d$ CA decoder. Qubits (red balls) live on the links of a square lattice, and each lattice site $\bfr$ hosts a classical processor $\sfP_\bfr$. The classical system performs error correction through a combination of local measurements, local feedback, and communication between neighboring processors. In our decoder, each $\sfP_\bfr$ stores information about defects at $\bfr$ ($s_\bfr$) and messages that are used to generate confining interactions between defects ($m_\bfr$). $\sfb)$ The exchange of messages between defects (green balls) leads to confining interactions. The strength of the messages emitted by defects is indicated by the purple shading, and feedback is applied to move defects in the direction of the strongest received signal (white arrows).  ${\sf c)}$ A schematic of how defects produce messages. In $d$ dimensions, each defect produces multiple types of messages, one for each (positive and negative) coordinate direction. Messages emitted from $\bfr$ along the $\pm \uva$ direction spread to sites $\bfr'$ such that $r'^a-r^a \gtreqless r'^b- r^b$ for all $b=1,\dots,d$, and decrease their strength as they do so, with stronger messages overwriting weaker ones (and with degeneracies along the diagonals being lifted according to the convention in the figure).  }
	\end{figure*}
	
	This work studies the problem of performing real-time decoding on topological stabilizer codes in a way where all operations---both classical and quantum---are geometrically local. Decoders that operate in this setting may be constructed by placing a small classical processor at the location of each stabilizer generator, with each processor using a small amount of working memory to locally measure syndromes, communicate measurement results to its neighbors, and locally apply feedback (see fig.~\ref{fig:ccqs_fig} $\sfa$). These types of decoders are known as {\it cellular automaton} (CA) decoders \cite{Harrington2004,herold2015cellular,herold2017cellular,balasubramanian2024local,kubica2019cellular,breuckmann2016local,ray2024protecting,paletta2025high,lake2025fast}. The high degree of parallelization they afford is particularly relevant in systems where fast gate speeds mandate that error correction be performed quickly.

	We will be concerned in particular with CA decoders which operate without any form of global control, i.e. without any external ``overseer'' that supervises the error correction process, determining when and where various operations should be applied. In this setting, the processors must cooperatively gather local  information and make collective, distributed decisions about large-scale operations: the error-correction process must {\it self-organize}.

	In more detail, this work studies decoders satisfying the following conditions: 
	\begin{enumerate}
		\item {\it Locality:} All operations are local in spacetime, with classical data processing, quantum operations, and noise all occurring on a common timescale. 
		\item {\it Asynchronicity:} error-correcting operations do not have to be precisely synchronized between well-separated spacetime points.
		\item {\it Homogeneity:} the same operations are performed by the decoder at each point in space and time. 
	\end{enumerate}
	Let us elaborate slightly on these criteria. 
	Locality is a natural requirement to demand from a many-body physics perspective (when we say that classical and quantum operations occur on a common time scale, we mean that the decoder can accommodate a situation where the ratio of the classical and quantum clock speeds, with noise operating on the later time scale, is independent of the linear system size $L$). While local classical processing is not per se an important constraint in hardware design (in that near-term devices will not be limited by the time taken for light to cross the device), the high degree of parallelization it naturally produces is. 
	The asynchronicity condition is similar in spirit: requiring precise synchronization mandates that each processor have access to a shared perfectly reliable clock, which is a form of global control that cannot be present in any truly self-organized QEC scheme.\footnote{While synchronization is likely not an important issue in practice, it can become relevant at system sizes which are not ridiculously large (as an example, the FPGA-based surface code decoder of ref.~\cite{liyanage2024fpga} was estimated to run into a bottleneck from the global clock speed at $L \sim 50$).} 
	Finally, homogeneity---especially in time---is a natural property to demand from a many-body system that constitutes a nontrivial phase of matter. From an experimental point of view, its main benefit is that it simplifies the QEC architecture. 
	
	In addition to conditions 1-3, we would ideally also like the processors at each site to use only a constant ($L$-independent) number of classical bits for each qubit, and (at least from a theoretical perspective) we would like to allow these bits to themselves be subject to noise. Unfortunately, we will not be able to achieve this in the present work, and our analytic results will require that each processor store $O(\polylog(L))$ noiseless classical bits per qubit in order to achieve maximal error suppression (numerics indicate that $O(\log(L))$ may be sufficient). This is less than existing strategies based on windowed decoding (e.g. \cite{higgott2025sparse,chan2023actis,bombin2023modular,skoric2023parallel,tan2023scalable,wu2023fusion}), which require $\O(L)$ bits per processor, but is nevertheless conceptually rather unsavory. Whether or not this shortcoming can be overcome in a way that does not involve recourse to the notoriously opaque self-simulation methods of Gacs \cite{gacs2001reliable} is an important question to address in future work. 
	
	{\it Simulated confinement:} For topological codes, a particularly natural QEC scheme with the potential to satisfy criteria 1-3 is to correct errors by simulating a confining attractive interaction between anyons. At a high level, such an interaction is envisaged to move anyons towards their nearest neighbors, thereby inducing a correction that---at small enough error rates---is hoped to serve as a good proxy for a minimal weight matching. 
	This idea has been investigated at various points in the past \cite{Dennis_2002,hamma2009toric,pedrocchi2013enhanced,fujii2014measurement,herold2015cellular,herold2017cellular},\footnote{See in particular the last section of  \href{https://scgp.stonybrook.edu/video_portal/video.php?id=335}{this video lecture} by David Poulin for a nice exposition of  the simulated confinement idea.} but has yet to produce a real-time decoder with a threshold whose existence can be rigorously demonstrated. The constructions coming closest are the ``field-based decoders'' of \cite{herold2015cellular,herold2017cellular}, which use a classical scalar field obeying an anyon-sourced Poisson equation to mediate the desired attractive interaction. Unfortunately, these decoders do not meet our definition of locality, and in fact we will see that they do not possess a threshold in the fault-tolerant setting.

	The main contribution of this work is to show how a confining interaction can be simulated in a way that provably has a threshold, and satisfies criteria 1-3 above. Our construction uses an extension of the message-passing architecture introduced by the author in ref.~\cite{lake2025fast}: in the CA decoder of that work, processors send out messages broadcasting the locations of anyons, and feedback is applied to move anyons in the direction of the messages they receive. This creates a confining interaction between anyons in space, and was shown to produce a threshold for offline decoding.\footnote{By ``offline decoding'', we mean the problem of decoding noisy input states under the assumption that the decoding process is noiseless and measurements are error-free. This is also often referred to as the ``code capacity scenario''.} In this work, we show that this alone is not sufficient to produce a threshold when transient noise and measurement errors are present: in this setting, neither this construction nor the field-based decoders of \cite{herold2015cellular,herold2017cellular} possess a threshold (although the error suppression they achieve may still be useful in an experimental context). When measurements are faulty, the relevant objects that must be annihilated during error correction are not anyons, but rather ``defects''---spacetime events where a syndrome changes its value between measurements \cite{Dennis_2002}---and instead of a confining interaction between anyons in space, we will see that what is needed is a confining interaction between {\it defects} in space{\it time}. We will show that this can be implemented by the addition of a ``small'' (at most $O(\polylog(L))$) number of extra classical bits to each processor, and will rigorously prove that this restores the presence of a threshold. 
	
	\ss{Extended summary}
	
	We now provide a summary of our main results, which will also serve as an outline of the paper. 
	
	\begin{figure*}
		\centering 
		\includegraphics[width=.9\tw]{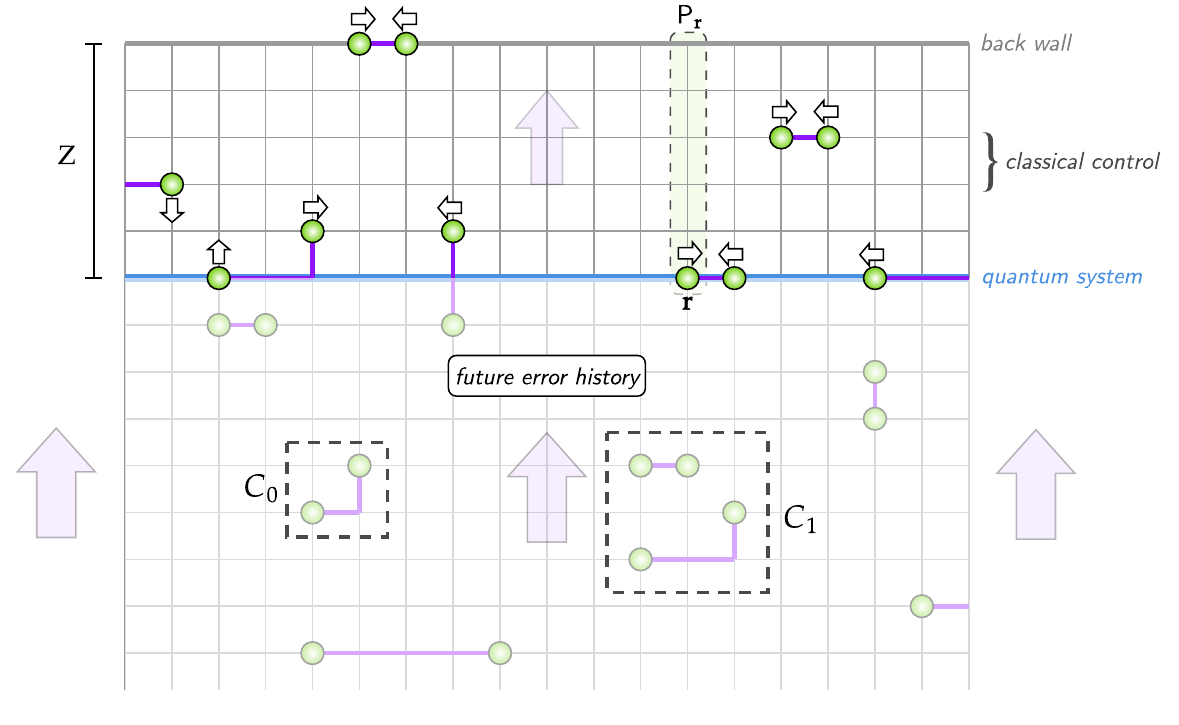}
		\caption{\label{fig:arch_schem} A schematic of the decoder operation and error-correction process, shown in one spatial dimension for simplicity of illustration. Qubits live on the links of the $1d$ chain indicated in blue, and the lattice above this chain (indicated as the {\it {\sf classical control}} region) contains the internal variables stored by the classical processors $\sfP_\bfr$ at each site $\bfr$ (an example of the internal variables controlled by a particular processor is indicated by the green shaded region). The processors store a lattice of classical variables of depth $Z$, and we prove that a threshold is present as long as $Z = \O(\polylog(L))$.  Syndrome-changing events (``defects'') are marked as green circles, and move ``upwards'' in the classical control region under the decoding dynamics until accumulating on the {\it {\sf back wall}}. During this process, defects use the message-passing scheme to attempt to pair-annihilate; the white arrows indicate the directions that the anyons would move in the limit of instantaneous message speeds. 
			The section labeled {\it {\sf future error history}} illustrates errors in spacetime that have not yet been experienced by the system; as time increases these errors are ``fed into'' the quantum system, and then propagate up along the $Z$ direction. The regions marked $C_0$ and $C_1$ indicate noise clusters of varying spacetime support. We show that only clusters of linear size $\gtrsim Z$ can survive to the back wall, where they are then corrected unless they coagulate into a larger cluster of size $\sim L$. }
	\end{figure*}
	
	\sss{Simulating confinement with message passing}
	
	We begin in sec.~\ref{sec:setup} by describing the architecture used by our decoder. Let us first recall the message-passing setup of ref.~\cite{lake2025fast}. At the location $\bfr$ of each stabilizer generator, we place a classical processor $\sfP_\bfr$, which stores two types of information. The first is a copy of the most recently-measured syndrome value $\s_\bfr$ at $\bfr$. The second type is a collection of integer-valued messages labeled by tuples $(s,a)$, with $s\in\{\pm1\}, a \in\{1,\dots,d\}$ with $d$ the spatial dimension (we will consider both $2d$ topological stabilizer codes and the $1d$ repetition code in this work). Anyons (viz. sites with $\s_\bfr = -1$) are taken to emit integer-valued messages, with a message labeled by $(s,a)$ propagating along the $s\uva$ direction and increasing its value at each time step (with smaller-valued messages overwriting larger-valued ones). Feedback is applied by moving each anyon in the direction of the smallest-valued message it receives, with the resulting dynamics serving as a local, time-delayed way of moving each anyon in the direction of its nearest neighbor (see fig.~\ref{fig:ccqs_fig}~$\sfb,\sfc$). 
	
	This setup lets us simulate a confining interaction between anyons in space, which as shown in \cite{lake2025fast} is sufficient for performing offline decoding. As mentioned above, what is needed to perform real-time decoding is rather to simulate confining interactions between defects in spacetime. The most naive way of doing this is to store a full spacetime history of the defects in the $\sfP_\bfr$, and to then perform the aforementioned message-passing decoding dynamics on this history. While parallel windowing techniques could be used to reduce the temporal depth of histories that need to be stored to $\ct(L)$, this approach would nevertheless incur a significant classical overhead, and (more importantly for us) would not satisfy criteria 1-3 above. 
	
	Instead of storing large chunks of the defect spacetime history, our solution will be to store a dynamically-updated buffer of the history on which error correction is constantly being performed. As soon as defects are produced, they enter the buffer and begin using message-passing dynamics to correct themselves. This is done by enlarging each $\sfP_\bfr$ to store defects at $Z$ additional classical locations, which we may arrange most naturally along an additional spatial dimension of size $Z$ (see fig.~\ref{fig:arch_schem}). 
	As defects are produced by noise, the processors move them ``up'' along this direction, and they undergo $(d+1)$-dimensional message-passing dynamics along the way. The message-passing ensures that defects begin to be corrected as they move along the $z$ dimension, with the defects becoming sparser as $z$ gets larger. Intuitively speaking, the flow of defects along the $z$ direction should be thought of as simulating a form of RG flow on the spacetime defect history. 
	
	The surface at $z=Z$---which we will refer to as the ``back wall''---is used to collect defects associated with the largest-scale errors. When a defect reaches $z=Z$ it ``sticks'' to the back wall, where it subsequently undergoes $d$-dimensional message-passing dynamics until it meets another back-wall defect and is annihilated. The dynamics on the back wall can thus be thought of as running a decoder with $Z=0$ subjected to a renormalized error rate, a correspondence that will be utilized to prove the existence of a threshold provided that a sufficient amount of RG flow occurs (which for us will dictate taking $Z = O(\polylog(L))$). See fig.~\ref{fig:arch_schem} for an illustration of these points.

	\sss{Screening and pseudothresholds}
	
	We will characterize the performance of a decoder using its {\it memory time} $\tmem$, defined as the expected time the decoder can be run before suffering a logical error. A decoder is said to have a {\it threshold} at $p_c$ if for noise of strength $p <p_c$, the memory time diverges as the system size is increased: 
	\be \lim_{L\ra\infty} \tmem = \infty \, \, \forall \, \, p < p_c.\ee 
	
	In sec.~\ref{sec:nogo}, we investigate the scaling of $\tmem$ when we perform active error correction with $Z=0$, viz. without any RG. In this limit---which is equivalent to the decoder studied in \cite{lake2025fast}---the confining interaction extends only in space.
	We argue that in this limit, our decoder, as well as the field-based decoders of \cite{herold2015cellular,herold2017cellular}, do {\it not} have a threshold in the presence of transient noise, even when stabilizer measurements are perfect.\footnote{For global decoders, active (real-time) decoding and offline decoding are closely related in the absence of measurement errors. For local decoders this need not be the case, and indeed for us measurement errors are not the essential issue (rather, transient errors are).} Instead, the memory time behaves as 
	\be \label{pseudothresh_intro} \tmem \leq  \exp(a\min(L,b/p^{1/d})) + c,\ee 
	for some $O(1)$ constants $a,b,c$, which is finite (but can be quite large at small $p$) when $L\ra\infty$. 
	The reason for this scaling is quite simple, and occurs when the signals produced by noise ``screen'' the confining interactions between well-separated anyon pairs. 
	We rigorously show that \eqref{pseudothresh_intro} holds in a particular $p$-bounded error model, and show numerically that it holds for i.i.d noise as well. 
	We refer to a decoder with this behavior as having a {\it pseudothreshold}. Pseudothresholds can mimic thresholds in numerics at small system sizes, as $\tmem$ increases superpolynomially in $L$ until a scale $L_*\sim 1/p^{1/d}$.\footnote{With the association $1/p \sim e^{\b}$, the scalings  $\tmem \sim e^{e^\b}$ and $L_*\sim e^\b$ match those of several ``partially self-correcting memories'' realized in disordered $3d$ quantum codes \cite{michnicki20143d,siva2017topological,williamson2023layer,gu2025layer}. }

	\sss{Decoding performance}
	
	While the decoder with $Z=0$ does not have a threshold, $\tmem$ diverges very quickly as $p\ra0$, and increases with $L$ up to a system size that also diverges as $p\ra0$. These decoders are thus ``close'' to having a threshold, and in sec.~\ref{sec:results} we prove that a threshold can be restored---and maximal error suppression achieved---by taking $Z=O(\polylog(L))$. A polylog amount of RG is thus sufficient to filter the noise distribution to the point where the offline decoder on the back wall can achieve a threshold. 
	We formalize this result as follows: 
	\begin{theorem}[existence of a threshold, informal] 
		Consider the message-passing decoder running on a system of linear size $L$ and buffer depth $Z$. For any error model of strength $p$, there exist positive constants $a,\z,\b,p_c$ such that as long as $p\leq p_c$ and 
		\be Z \geq a \(\frac{\log L}{\log(p_c/p)}\)^{1/\z} ,\ee 
		then the memory lifetime satisfies 
		\be \label{tmem_intro} \tmem = (p_c/p)^{\O(L^\b)}.\ee 		
	\end{theorem}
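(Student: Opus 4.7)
The natural strategy is a hierarchical/renormalization-group argument on the spacetime distribution of errors, in the spirit of \cite{Harrington2004,kubica2019cellular}, adapted to the simulated-confinement dynamics. The plan is to show that the buffer of depth $Z$ implements a sequence of coarse-graining steps, each of which reduces a small ``bad'' cluster of errors to a sparser cluster at the next scale, and that after $\log Z$-many such steps the residual noise arriving at the back wall is renormalized to a level at which the offline argument can be closed.

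First I would set up a scale decomposition: fix a growth factor $Q>1$ and define level-$k$ boxes $B_k$ of linear spacetime extent $\ell_k = Q^k$. Call a level-$0$ box bad if it contains an error; recursively, call a level-$k$ box bad if it contains at least two bad level-$(k-1)$ sub-boxes whose parents are disjoint. Standard Peierls/union-bound estimates give that $\Pr[B_k \text{ bad}] \leq (C p)^{c \cdot 2^k}$ for constants $C,c$ depending on the dimension and on $Q$. The key content of the theorem is then to argue that, under the simulated-confinement dynamics inside the buffer, any collection of defects that is contained in a level-$k$ region but not in any bad level-$(k+1)$ region gets annihilated by the time it has advanced a distance $\lesssim \ell_k^{\zeta}$ along the $z$-direction, for some $\zeta>0$ determined by the message propagation speed and the scaling of the message-passing time-delay with separation.

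The main technical step, and the one I expect to be the hard part, is this renormalization lemma. Concretely I would argue: for a well-isolated level-$k$ cluster of defects, the message-passing dynamics (i) causes all defects within the cluster to receive messages dominated by their in-cluster neighbors, because outside defects are at distance $\gg \ell_k$ and their messages are overwritten by closer ones per the message-overwriting rule described in fig.~\ref{fig:ccqs_fig}; (ii) therefore drives the cluster through a sequence of pairwise fusions, each of which takes time polynomial in the pair separation, so the whole cluster fuses within time $\mathrm{poly}(\ell_k)$; and (iii) each fusion advances only a bounded distance along $z$, so the cluster is cleared before it traverses more than $\ell_k^{\zeta}$ of the buffer. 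Isolation is where the hierarchy bookkeeping is essential: I would inductively assume clusters at levels $<k$ have already been cleared and use this to guarantee that the level-$k$ environment really is sparse on the relevant spacetime scale. This is also where the ``screening'' phenomenon of sec.~\ref{sec:nogo} has to be controlled: I expect the proof to break if $Z$ is too small because a level-$k$ cluster needs buffer depth at least $\ell_k^{\zeta}$ to be resolved, and if noise keeps feeding in before it resolves, unresolved clusters can interact with the back wall.

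With the renormalization lemma in hand, the rest is bookkeeping. Choosing $k_{\max}$ such that $\ell_{k_{\max}}^{\zeta} \leq Z < \ell_{k_{\max}+1}^{\zeta}$ ensures that all clusters up to level $k_{\max}$ are cleared inside the buffer. Inverting gives $k_{\max} \gtrsim \zeta \log_Q Z$, which matches the hypothesis $Z \geq a(\log L/\log(p_c/p))^{1/\zeta}$ precisely when $k_{\max} \gtrsim \log_2(L^{\beta}\log(p_c/p))$ for the appropriate $\beta$. Defects surviving to the back wall lie in bad level-$(k_{\max}+1)$ regions, so their effective density is $p_{\mathrm{eff}} \leq (Cp)^{c\cdot 2^{k_{\max}}} \leq (p/p_c)^{L^{\beta}}$ when $p<p_c$. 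The back-wall processors then run a $d$-dimensional simulated-confinement decoder on this renormalized defect soup; a logical error requires a bad cluster spanning the system, which by a final Peierls argument has probability at most $(p/p_c)^{\Omega(L^{\beta})}$ per unit time. Inverting yields the claimed $\tmem = (p_c/p)^{\Omega(L^{\beta})}$. The subtle points I would pay the most attention to in writing up the details are (a) making the asynchronicity and bounded message speed quantitatively compatible with the claim that level-$k$ clusters clear in $\mathrm{poly}(\ell_k)$ time, since this controls the exponent $\zeta$, and (b) ensuring that clusters already in flight through the buffer when a new bad cluster lands do not combine into an artificially larger cluster at the back wall; the latter is handled by taking the hierarchy in spacetime rather than in space alone.
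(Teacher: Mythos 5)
The first half of your plan is essentially the paper's lemma~\ref{lemma:cluster_erosion}: a spacetime (not merely spatial) noise hierarchy with doubly-exponentially decaying cluster probabilities, plus an inductive erosion argument showing that a well-isolated level-$k$ cluster is annihilated before it advances a depth $\sim \ell_k$ into the buffer. That part is sound and matches the paper's route.

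The gap is in your final accounting, and it is not a bookkeeping detail. With $Z=\polylog(L)$, as the hypothesis on $Z$ permits, the deepest level cleared inside the buffer is $k_{\max}\sim\log Z$, so $2^{k_{\max}}=\mathrm{poly}(Z)=\polylog(L)$ and the renormalized density of defects reaching the back wall is $p_{\rm eff}=p^{\polylog(L)}$ --- \emph{not} $p^{L^{\beta}}$. Your chain $p_{\rm eff}\le (Cp)^{c2^{k_{\max}}}\le (p/p_c)^{L^{\beta}}$ requires $2^{k_{\max}}\gtrsim L^{\beta}$, i.e. $k_{\max}\gtrsim\beta\log L$, which forces $Z\gtrsim \ell_{k_{\max}}^{\zeta}=\mathrm{poly}(L)$ and contradicts the polylogarithmic hypothesis you are trying to match. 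The direct renormalization argument therefore yields only $\tmem=(p_c/p)^{\Omega(\polylog(L))}$ (this is exactly the ``easy lower bound'' the paper notes is too conservative), not the claimed $(p_c/p)^{\Omega(L^{\beta})}$.

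To close the gap one must separately control how the clusters that \emph{do} survive to the back wall (all of level $\ge k_Z$) coagulate there. Defects stick to the back wall for unbounded times, so two clusters arbitrarily separated in time can both deposit defects that later merge; the spacetime hierarchy does not control this, contrary to your remark (b). The paper's mechanism is the ``clump'' construction: form the graph of clusters whose back-wall defects have fused, prove (lemma~\ref{lemma:kmax}) that assembling a clump of linear size $\ell$ from $m$ clusters forces at least one cluster to have level $\gtrsim\log_n(\ell/m^{\gamma})$, and then trade the entropy of placing $m$ clusters (each within temporal distance $O(L)$ of another, by linear erosion on the back wall) against the total noise weight $(m-1)2^{k_Z}+(L/m^{\gamma})^{\zeta}$. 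The saddle point over $m$ gives $\beta=\zeta/(1+\gamma\zeta)<1$. Your concluding Peierls bound covers only the single-system-spanning-cluster failure mode and misses this dominant many-medium-cluster channel, which is precisely where the exponent $\beta$ (and the fact that it is strictly below $1$, cf.\ the Gerrymandering discussion) comes from.
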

	While our analytic results require $Z = O(\polylog(L))$, we believe that a threshold should be present for $Z = O({\rm poly}(\log\log L))$ (which however would give a reduced scaling of $\tmem$ below threshold). 
	
	The proof strategy is fairly simple. It uses a spacetime version of the linear erosion property proved for the message-passing decoders in \cite{lake2025fast}, and combines this with a standard clustering argument \cite{gacs2001reliable,bravyi2011analytic} to show that the decoding dynamics corrects clusters of errors of size $\lesssim Z$ before they reach the back wall. The clustering argument then shows that the back wall experiences a noise model with effective error rate of $p_Z \sim p^{Z^\z}$. Since the lifetime of the decoder with $Z=0$ is $\tmem \sim e^{b/p_Z^{1/d}}$, we have $\tmem \sim e^{bp^{-Z^\a/d}}$. As this is doubly exponential in $Z$, we need only take $Z = \polylog(L)$ to guarantee a superpolynomially long lifetime.
	
	To benchmark our decoder's performance, we numerically estimate the threshold error rate it attains for the $2d$ surface code subjected to i.i.d depolarizing and measurement noise of equal strength. For synchronous updates, we find $p_c \approx 1.5\%$, about half the threshold achieved by running minimal-weight perfect matching on the entire spacetime history \cite{wang2003confinement} (for the simplest continuous-time asynchronous scheme we consider, $p_c$ is reduced to $\approx 0.5\%$). For a fully local decoder this is actually quite respectable, especially since no effort has been made to optimize this number; the only other fully local decoder with a rigorously provable threshold we are aware of is \cite{balasubramanian2024local}, where $p_c \lesssim 0.01\%$.

	
	Local decoding algorithms must necessarily be  greedier than global ones, and we show that this greediness comes at a cost: in sec.~\ref{ss:gerry} we prove that the constant $\b$ in \eqref{tmem_intro} is {\it strictly} less than 1, due to a phenomenon reminiscent of Gerrymandering. This is proved by using a result from \cite{lake2025fast} showing that we may always create a Cantor-set-like pattern of errors which has weight $o(L)$ and is guaranteed to produce a logical error when decoded, and then using another noise-clustering argument to show that the entropy of noise events containing such patterns is non-negligible.

	\ss{Comparison with prior work}

	\begin{table*}[ht]
		\centering
		\renewcommand{\arraystretch}{1.5}
		\begin{tabular}{lccccc}
			
			\textbf{} &  \,{\shortstack[c]{\textbf{windowed} \\ \textbf{MWPM} \cite{skoric2023parallel,tan2023scalable,bombin2023modular}}}\,& \, \textbf{Harrington}~\cite{Harrington2004} \, & \, \textbf{$2d$ model of ref.~\cite{balasubramanian2024local}} \, &  \, \textbf{field-based}~\cite{herold2015cellular,herold2017cellular} \, & \, \textbf{this work}\, \\
			\hline\hline
			local & \xmrk & \cmrk & \cmrk & \xmrk & \cmrk \\
			\hline
			homogeneous & \xmrk & \xmrk & \xmrk & \cmrk & \cmrk \\
			\hline
			asynchronous & \xmrk & \xmrk & \xmrk & \cmrk & \cmrk 
			\\ 
			\hline
			best $\tmem$ & $(p_c/p)^{\ct(L)}$ & $(p_c/p)^{\ct(L^\b)}$ & $(p_c/p)^{\ct(L^\b)}$ & $\ct(L^0)$ & $(p_c/p)^{\ct(L^\b)}$ \\  \hline 
			classical overhead & $\o(L)$ & $\polylog(L)$ & $\log(L)$ & $\o(L)$ &  $\polylog(L)$ \\ \hline 
			pheno. $p_c$ & $2.9\%$ \cite{wang2003confinement} & $0.1\%$\,?  \cite{breuckmann2016local} & $\lesssim 0.01\%$ \cite{balasubramanian2024local} & n/a& $1.5\%$ 
		\end{tabular}
		\medskip 
		\caption{\label{tab:comparison} Comparison of various decoders for the $2d$ surface code. ``best $\tmem$'' indicates the optimal scaling of the memory lifetime with system size $L$ under a $p$-bounded noise model (with $\b$ a positive constant strictly smaller than 1), and ``classical overhead'' denotes the number of noiseless classical bits used by the decoder for each noisy qubit in order to achieve this scaling (the majority of which may be inactive during typical sub-threshold noise events). For the decoder of ref.~\cite{balasubramanian2024local} the classical bits are read-only, and can be eliminated entirely in a $3d$ geometry stabilizing ${\rm poly}(L)$ planes of $2d$ toric codes. ``pheno $p_c$'' denotes the approximate error threshold under a phenomenological noise model where i.i.d depolarizing noise and measurement errors are applied with equal strength. Harrington's decoder has what might be a threshold near $0.1\%$, although both the theory and numerics are difficult to conclusively interpret. The field-based decoders do not have a threshold, but display pseudothreshold behavior that mimics a threshold near $0.3\%$ at small system sizes \cite{herold2017cellular}. For our work, we have reported the threshold obtained under synchronous updates. 
		}
	\end{table*}
	
	The original demonstration of fault-tolerant error correction was made in the context of concatenated codes, and was shown to satisfy criterion 1 \cite{aharonov1996faulttolerantquantumcomputation,gottesman2000fault}, but not 2 or 3. More relevant to this work are decoders for $2d$ topological codes, of which many variants exist. 
	A table comparing our work to a selected subset of constructions in the literature is given in tab.~\ref{tab:comparison}, on which we elaborate more below. 
	
	\begin{itemize}			
		\item {\it Windowed decoding:} This approach saves a spacetime history of defect locations to memory, and then (nonlocally) performs global decoding on the recorded history \cite{Dennis_2002}. This idea can be made more efficient by chunking the history into windows which are processed in parallel \cite{skoric2023parallel,tan2023scalable,bombin2023modular}; to achieve maximal error suppression these windows must have a temporal extent of $\O(L)$, for which they achieve essentially the same accuracy as the global decoder acting on the entire history, and decode at a speed determined by that of the global decoder on a spacetime box of size $L^3$. 
		
		In some sense, our approach can be viewed as a dynamically-adapted windowing scheme that is fully translation invariant in space and time. Rather than  having windows hard-coded into the decoding architecture, the RG structure present in the message-passing dynamics autonomously recognizes when hard-to-correct errors have occurred, and defers their correction to sites with larger $z$ coordinates. 		
		
		\item {\it Hierarchical approaches:} The first type of hierarchical decoders are those that act nonlocally in spacetime. They  partition the spacetime history of defects into a hierarchical grouping of boxes, which are then decoded starting from small scales and recursively moving on to larger ones \cite{bravyi2011analytic,duclos2013fault,wootton2015simple,sang2024mixed}. Since they act nonlocally they can obtain fairly high values of $p_c$;\footnote{When we refer here and below to a decoder as having a threshold at $p_c$, we are refering to the threshold obtained for the $2d$ toric code subjected to i.i.d depolarizing noise and measurement noise of equal strength.}  as an example, the protocol of \cite{duclos2013fault} has a threshold at $p_c\approx 1.9\%$. 
		
		The second type are hierarchical CA decoders (the first such construction being Harrington's decoder \cite{Harrington2004}), which perform a similar type of boxing, but do so locally in spacetime. These decoders violate conditions 2 and 3,\footnote{Strictly speaking, the dynamics can be made time- and space-translation invariant by embedding a (large) set of instructions into each processor, which is then advanced in a translation-invariant way. This is a bit against the spirit of what we mean by ``homogeneous'', and we will choose to formally rule it out by demanding that the decoder also function properly when initialized on homogeneous input states.} although the decoder of \cite{balasubramanian2024local} can be made time-translation invariant in three dimensions. Empirically, these decoders are observed to sacrifice a large amount of error-correcting power to achieve locality: Harrington's decoder may have a threshold of at most $0.1\%$ \cite{breuckmann2016local} (although the evidence from both analytics and numerics is rather unclear), and while the decoder of \cite{balasubramanian2024local} (which has less classical overhead than Harrington's decoder) was rigorously proven to have a threshold, the value of $p_c$ was too small to be accurately estimated (part of the difficultly with numerically studying these models is that they are only naturally defined on systems of linear size $n^l$ where $n,l$ are integers, limiting the number of computationally-accessible distinct system sizes). 
		
		Roughly speaking, these constructions may be viewed as adopting a block-spin approach to RG, committing to a particular pattern of coarse-graining in advance. 
		While the $z$ dimension in our approach is also used to implement a form of RG, its interpretation is closer to strong disorder RG: it adaptively selects out and corrects clusters of errors using the heirarchy of scales already present in the noise, rather than hard-coding an explicit hierarchy into the decoding dynamics.

		\item {\it Field-based decoders:} The decoders that served as the biggest inspiration to the present work are the ``field-based decoders'' of refs.~\cite{herold2015cellular,herold2017cellular}, which use a local cellular automaton to induce an attractive long-range force between anyons. To avoid problems stemming from the self-interaction between an anyon and the force field it itself generates, it was shown in \cite{herold2015cellular} that the speed of classical processing must diverge as $L\ra\infty$ (as such, these decoders do not obey our definition of locality). Furthermore, in order to generate a force profile that falls off quickly enough that signals from distant anyons do not swamp those from nearby ones, the classical processors must live in at least three spatial dimensions. 
		
		Our message-passing framework solves the self-interaction problem by having messages ballistically spread outwards from defects, so that defects can never ``catch up'' to the signals they emit. It also does not require the addition of an extra dimension to reduce the strength of the force. More significantly, as mentioned above, field-based decoders are not fully fault tolerant due to screening effects, and the additional $O(\polylog(L))$ classical bits used by our decoder to store the dynamically-updated defect history are needed to fix this problem. 
		
		\item {\it Energy-based confinement:} A final class of decoders are those based on performing thermal dynamics under a Hamiltonian with long-range interactions that energetically confine anyons. These approaches either involve Hamiltonian terms whose strength diverges with system size, or require coupling to a bosonic field that relaxes thermodynamically fast \cite{hamma2009toric,pedrocchi2013enhanced,chesi2010self} (see also \cite{landon2015perturbative,brown2016quantum}). Most importantly, they only correct errors when the noise is thermal with respect to the Hamiltonian in question, and are unable to deal with e.g. even very weak depolarizing noise (we by contrast aim to design decoders that resit {\it any} sufficiently weak noise process).\footnote{For example, error correction in the model of ref.~\cite{pedrocchi2013enhanced} is achieved by virtue of anyon pairs having thermodynamically divergent energy, so that detailed-balance-obeying thermal noise is unable to create anyons. Weak depolarizing noise is ``infinite temperature'' in this respect, and cannot be dealt with energetically.}
	\end{itemize}
	
	\section{Preliminaries}\label{sec:setup}

	\ss{Message-passing architecture} \label{ss:setup}
	
	We begin by defining the message-passing architecture employed by our decoder, which extends the construction of \cite{lake2025fast} in a way that will be essential for performing active decoding. We will focus on the case of a model with $\zt$ excitations for simplicity (viz. either the $1d$ repetition code or the $2d$ toric code), with the qubits laid out in an $L^d$ grid square grid with periodic boundary conditions (although the general construction works for any topological stabilizer code and on open boundary conditions). 
	
	As illustrated in fig.~\ref{fig:arch_schem},  the classical control system that performs error correction is most naturally arranged on an $L^d \times Z$ lattice, where $Z$ is the depth of the control system along a (small) extra dimension; as mentioned above, to get the longest possible memory lifetime, we will never need to let $Z$ scale faster than $\polylog(L)$ (smaller values of $Z$ can still produce a threshold, but may yield smaller values of $\tmem$). We will denote sites of this lattice as 
	\be \bfx = (\bfr,z),\qq \bfr = (r^1,\dots,r^d) \in \zz_L^d,\ee 
	and will use the notation $x^i = r^i$ for $1\leq i \leq d$, and $x^{d+1} = z$. 
	
	The classical control system stores various types of information at each site, which we compartmentalize as $s_\bfx \in \{\pm1\}$ and $m_\bfx^{\pm a} \in \{\infty,1,\dots,m_{\sf max}\}$, with $a \in \{1,\dots,d+1\}$ (the reason for using ``$\infty$'' instead of ``$0$'' will become clear shortly). Sites with $s_\bfx = -1$ will be referred to as hosting ``defects''. The $m^{\pm a}_\bfx$ variables will store messages used to communicate between defects at different locations. $m^{\pm a}_\bfx$ messages propagate along the $\pm \uva$ direction, with their values equal to the time that they have been propagating (with a value of $\infty$ meaning that no message is present). For simplicity we will set $m_{\sf max} = L$, so that storing the values of $m_\bfx^{\pm a}$ requires at most $\lceil \log_2(L) \rceil$ bits. 
	We believe it should be possible to take $m_{\sf max} = O(L^0)$ without compromising the performance of the decoder (in terms of the scaling of $\tmem$ with respect to $p$ and $L$), by taking the messages to evolve according to the proposal in appendix E of \cite{lake2025fast} (which unfortunately is rather difficult to analyze analytically). Because of this, we will not worry about explicitly addressing how to connect the bits associated with storing the values of $m_\bfx^{\pm a}$ in a strictly geometrically local way. 
	
	For simplicity, we will first describe the operation of the decoder in discrete time, with updates synchronized across all sites; various ways of desynchronizing the dynamics will be discussed in sec.~\ref{ss:lind}. At each time step $t$, the automaton does the following in sequence:
	\begin{enumerate}
		\item Measures syndromes at each site $\bfr$, obtaining (potentially faulty) outcomes $\s_\bfr(t)$. 
		\item For each $\bfr$, checks if $\s_\bfr(t) = \s_\bfr(t-1)$,\footnote{Only the two most recently values of $\s_\bfr(t)$ are kept track of.} and if not, creates an defect at  $(\bfr,0)$ by setting $s_{(\bfr,0)} = -1$. 
		\item Moves all defects and messages ``up'' along the $z$ direction, with defects being absorbed onto the back wall (the surface where $z=Z$) when they reach it: 
		\bea \label{rg_cycle} s_\bfx & \mt \begin{dcases} s_\bfx \cdot s_{\bfx-\uvz} & x^{d+1} = Z \\ s_{\bfx-\uvz} & 1 < x^{d+1} < Z \end{dcases} \\  
		m_\bfx^{\pm a} &\mt m^{\pm a}_{\bfx-\uvz} \qq 1<x^{d+1}<Z. \eea 
		\item Performs one round of message-passing dynamics and feedback (see below).
	\end{enumerate}
	
	Without the final step, this would be equivalent to storing a running spacetime syndrome history over a temporal ``buffer window'' of depth $Z$, with syndrome-changing events being deposited onto the back wall.		
	The message-passing dynamics in the last step is responsible for performing error correction, and is only a slight modification of the construction introduced in \cite{lake2025fast}. Intuitively, this scheme attempts to locally compute a proxy for a minimal weight perfect matching (MWPM) of the spacetime error syndromes, with the proxy dynamically updated as new measurements are acquired. Small errors with spatial separation less than $Z$ are corrected before they reach the back wall, while the back wall stores (and eventually corrects) large errors that require a time greater than $Z$ to be matched. 
	
	A single round of message-passing dynamics in the bulk $(1<x^{d+1}<Z)$ consists of the following steps: 
	\begin{enumerate}
		\item Each defect sources messages, which is done by setting $m^{\pm a}_{\bfx\pm\uva} = 1$ at all locations where $s_{\bfx\mp \uva} =-1$. 
		\item Messages of type $\pm a$ propagate along the $\pm\uva$ direction, and create copies of themselves on the neighboring sites in the $\pm\uvb$ directions, with $a\neq b$. Each time they propagate to a new site, they increase their value to reflect the distance they have been propagating.\footnote{With a smaller value of $m_{\sf max}$, we could take e.g. the messages to reset to $\infty$ if they reach $m_{\sf max}$, or to remain at $m_{\sf max}$ but continue to propagate. Since we will set $m_{\sf max} = L$ in the text, we will not worry about this choice.} Finally, if a message of type $\pm a$ propagates to a site where another message is already present, the smaller of the two messages is taken: ``newer'' messages thus overwrite ``older'' ones. 
		
		To describe this formally, define the region 
		\be \label{cdef} \sfC^{\pm a}_\bfx= B_{\bfx}^{(\infty)} \cap B_{\bfx\mp 2\uva}^{(\infty)},\ee 
		where $B_{\bfx}^{(\infty)} = \{ \bfx' \, : \, ||\bfx - \bfx'||_\infty \leq 1\}$ is the unit $\infty$-ball centered at $\bfx$. We then update the messages in the bulk as\footnote{We have chosen to use the 1-norm in the above equation only because it produces higher threshold values under i.i.d noise (as compared to the $\infty$ norm). }
		\be \label{mupdate} m^{\pm a}_\bfx \mt \min \{ m^{\pm a}_{\bfx'} + ||\bfx' - \bfx||_1 \, : \, \bfx' \in \sfC^{\pm a}_\bfx\}.  \ee

		\item The previous two steps are repeated a total of $v$ times, where $v$ is the ``velocity'' of the messages (attaining a threshold will be shown to require $v>2$). 
		
		\item A defect at $\bfx$ is moved in the direction of the smallest-valued message at its location, viz. it moves along $s\uva$, where\footnote{We resolve degeneracies between the different directions as in \cite{pajouheshgar2025exploring}.}
		\be (s,a) = {\rm argmin}_{s',a'} \{ m^{s' a'}_\bfx \}.\ee 
		When defects are moved along a spatial direction, an appropriate Pauli operator implementing their movement is applied to the system.\footnote{To do this in a way which is geometrically local in the $z$ dimension, $\zt$-valued signals storing the parity of the correction at a given link can be broadcast down from larger to smaller $z$ values. We will not worry about explicitly implementing this.} When anyons are moved along $\pm \uvz$ (which they will do when eliminating measurement errors), no such operator is applied.
	\end{enumerate} 
	
	Finally, we slightly modify the message-passing dynamics on the back wall, where it simply performs $d$-dimensional message passing, ignoring messages along the $\pm\uvz$ direction: $m^{\pm a}_\bfx$ are defined only with $a \in \{1,\dots,d\}$, and are updated as in \eqref{mupdate} but with $\sfC^{\pm a}_\bfx$ defined as in \eqref{cdef} but with $d$-dimensional balls. 
	
	In the above description of the automaton we have adopted periodic boundary conditions for simplicity (and will continue to do so in the remainder of this work). With open boundary conditions, it is sufficient to simply take messages to additionally be sourced from rough boundary sites (see \cite{lake2025fast} for details).

	\ss{Noise models and clustering} 
	
	As is common in the error correction literature, we will assume for simplicity that the noise our system experiences is described by an incoherent $p$-bounded process. In more detail, define a noise realization $\sfN$ to be a collection of spacetime points (including both qubit and measurement bit locations) where the noise applies a nontrivial Pauli operator or flips a measurement bit. The noise being $p$-bounded means that the marginal probability of any particular subset $A$ of spacetime locations being in $\sfN$ decays as 
	\be \sfP(A \subset \sfN) < p^{|A|}.\ee 
	This is essentially the most general incoherent noise model one could consider, and does not require that the noise have short range correlations in spacetime. While this definition does not allow for any coherence between errors, it turns out that coherent errors may be handled in an identical way, using the operator decomposition techniques of Gottesman (see \cite{aliferis2005quantum} and in particular chapter 15 of \cite{gottesman2024surviving}). The upshot is that all of our analytic results about thresholds in sec.~\ref{sec:results} hold for extremely general types of error models: roughly, any error model which couples the quantum system and measurement bits to a (perhaps infinitely large) environment with a time-dependent Hamiltonian, with the system-environment couplings being sufficiently weak, will work. Nevertheless, for simplicity, and to avoid a lengthy re-explanation of the techniques in \cite{aliferis2005quantum,gottesman2024surviving}, we will restrict to $p$-bounded stochastic noise channels throughout. 
	
	In the context of CSS codes it is always possible to break up the decoding task into separate tasks for $X$-type and $Z$-type anyons; in what follows we will assume that this has been done, and will focus only on one sector (our noise model is general enough to accommodate correlations between sectors, see e.g. \cite{balasubramanian2024local}). For simplicity, we will also assume that the anyons in this sector have $\zt$ fusion rules, as in the surface code. As explained in corollary~\ref{cor:generalization}, our construction generalizes immediately to any stabilizer code described by an Abelian anyon theory. 
	
	Like many other error correction schemes, the proof that our decoder has a threshold will be given by leveraging a clustering property possessed by $p$-bounded noise at small enough $p$ (see \cite{gacs2001reliable,ccapuni2021reliable,bravyi2011analytic}, etc.). To keep the discussion self-contained, we provide a brief review of the relevant notions below, taking the presentation almost verbatim from \cite{lake2025fast}.\footnote{The definitions used here (and in \cite{ccapuni2021reliable}) differ from the ``chunking'' definition in \cite{bravyi2011analytic}, but for our present application the difference is not important. }
	
	\ms\begin{definition}[clustering]\label{def:clustering}
		Let $\sfN$ be a noise realization. A $(W,B)${\it  -cluster} $C_{(W,B)}$ with size $W$ and buffer $B$ is a subset of $\sfN$ such that 
		\begin{enumerate}
			\item all points in $C_{(W,B)}$ are contained in a radius-$W/2$ $\infty$-ball $B_u(W/2)$ for some spacetime point $u=(\bfx,t)$, and 
			\item these points are separated from other points in $\sfN$ by a buffer region of thickness at least $B$:
			\be \sfN \cap (B_u(W/2+B) \setminus  B_u(W/2)) = \emp.\ee 			
		\end{enumerate}
		We call a spacetime point $u\in \sfN$ {\it $(W,B)$-clustered} if it is a member of a $(W,B)$ cluster, and say that a set is $(W,B)$-clustered if all points it contains are. 
	\end{definition}\ms
	
	We now use this definition to hierarchically refine $\sfN$ into clusters of exponentially increasing sizes: 
	\ms\begin{definition}[noise hierarchies and level-$k$ error rates]\label{def:noise_hier}
		Let $\sfN_0 = \sfN$, and fix positive constants $n,w,b$ such that $w<b$. 
		A $(wn^k,bn^k)_\infty$-cluster will be referred to as a $k$-cluster. 
		
		The {\it $(k+1)$th level clustered noise set $\sfN_{k+1}$} is defined by deleting all $(wn^k, bn^k)$-clustered points from $\sfN_k$.			
		For a $p$-bounded error model $\mce$, the {\it level-$k$ error rate} $p_k$ is the largest probability for any given spacetime location to belong to $\sfN_k$:
		\be p_k = \max_{u}\sfP_{\sfN \sim \mce}\(u\in \sfN_k\).\ee 
	\end{definition}\ms 
	
	The upshot of this clustering definition is that at small enough $p$, large clusters are very rare: the techniques of \cite{gacs2001reliable} show that as long as 
	$0 < w < b < \frac{n-3}4 w$, then there exists a positive constant $p_c$ such that \cite{lake2025fast}
	\be \label{rarenoise} p_k \leq  \(\frac p{p_c}\)^{2^k}.\ee

	\ss{Memory time and logical error rate}
	
	For a noiseless active decoder $\mcd$, we will write $\mcd_{\mce}$ to denote the implementation of $\mcd$ in the presence of a noise process $\mce$. The main quantity we will use to diagnose the performance of $\mcd_\mce$ is the memory time: 
	\begin{definition}[memory time]\label{def:tmem} 
		Given a noiseless offline decoder $\mcd_{\sf off}$,\footnote{viz. one which performs perfect measurements and uses noiseless data processing to noisy input states to noiseless logical ones.} we define the memory time $\tmem$ of $\mcd_{\mce}$ as the expected time at which offline decoding with $\mcd_{\sf off}$ fails:\footnote{Here $\mcd_{\mce}^t$ is shorthand for the channel obtained by evolving an input state for $t$ time steps under the noisy error correcting dynamics, and then tracing out the environment that $\mce$ couples the system to. If $\mce$ has temporal correlations, this need not be expressible as $t$ powers of a quantum channel acting on only the system and measurement qubits.}
		\be \label{tmemdef} \tmem = \max_{\rlog} \min \{ t \, : \, || \mcd_{\sf off} (\mcd_{\mce}^t(\rlog)) - \rlog||_1 < 1/2 \}.\ee 
		$\mcd$ will be said to have a {\it threshold} at $p_c$ under the noise $\mce$ if 
		\be \lim_{L\ra\infty} \tmem = \infty \, \, \forall \, p < p_c. \ee 
	\end{definition}
	
	The implicit dependence of $\tmem$ on $\mcd_{\sf off}$ is not so important for our purposes (provided at least that $\mcd_{\sf off}$ itself has a threshold). For the $2d$ toric code, we will take $\mcd_{\sf off}$ to be $\mcd$ run for a time sufficiently long to guarantee the removal of all defects (for us, this time is $O(L)$). For the $1d$ repetition code, we will simply use a global majority vote. 
	
	In numerics, we will also find it helpful to compute the {\it logical error rate} $\plog$, which we define as the probability of making a logical error after evolving for a time $L$ (as judged by an application of $\mcd_{\sf off}$): 
	\be \label{plogdef} \plog = \frac12 \max_{\rlog} || \mcd_{\sf off}(\mcd_{\sf on,\mce}^L) - \rlog ||_1 .\ee

	\section{Message-passing at $Z=0$ and field-based decoders: no-go results and pseudothresholds}\label{sec:nogo}


	\begin{figure*}
		\centering \includegraphics[width=.75\tw]{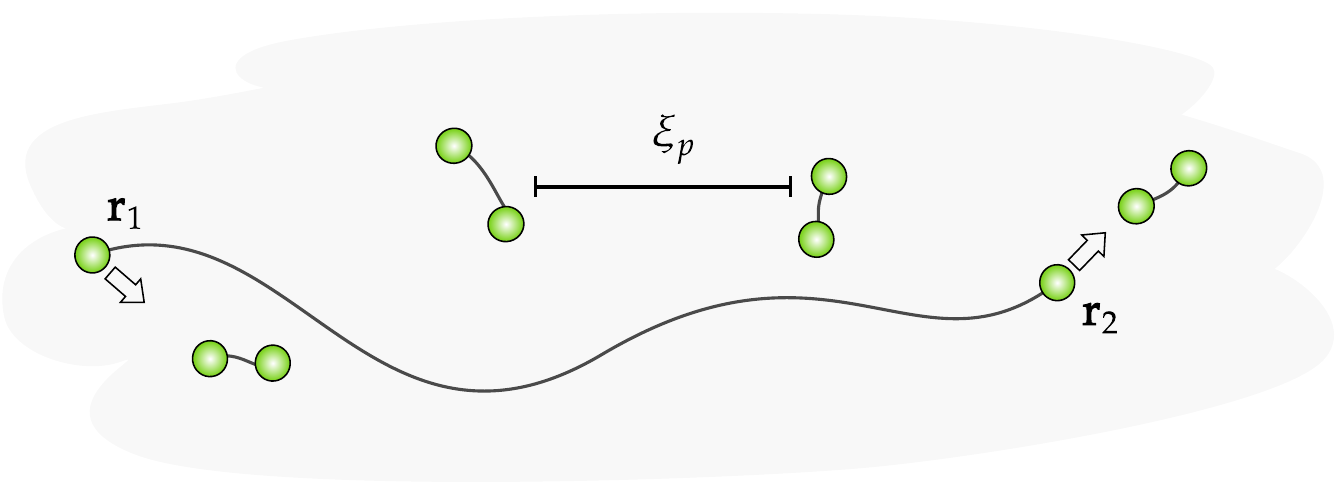}
		\caption{\label{fig:screening_fig} A schematic illustrating why $Z=0$ message-passing decoders and field-based decoders have a pseudothreshold. For noise of strength $p$, the typical distance between anyon pairs created by the noise is $\xi_p \sim p^{-1/d}$. When a pair of anyons $a_1,a_2$ at $\bfr_1,\bfr_2$ is created with separation $||\bfr_1-\bfr_2||\gg \xi_p$, the intervening small pairs created by the noise will screen the interaction between $a_1$ and $a_2$, leading to these anyons randomly diffusing and escaping correction. Since the time scale for the noise to create a pair of separation $r$ is exponentially long in $r$, this argument gives a memory lifetime scaling as $\tmem \sim e^{\xi_p}$. } 
	\end{figure*}

	When $Z=0$, our construction reduces to the message-passing decoder in ref.~\cite{lake2025fast}, which was proven to have a threshold for offline decoding. In this section, we show that it does {\it not} have a threshold for active decoding, meaning here that there exists a $p$-bounded error model for which $\lim_{L\ra\infty}\tmem < \infty$ for all $p>0$. We also show that the same is true for the field-based decoders of \cite{herold2015cellular,herold2017cellular}, even when the power-law interactions are set up instantaneously. In both cases, the absence of a threshold is due to the effects of transient errors, and occurs even when measurement outcomes are perfect.  
	
	For spatially local noise models, we fist give a general argument that the memory time scales (in $d$ dimensions) as 
	\be \label{pseudothresh} \tmem \sim \exp(b\min(L,c/p^{1/d})),\ee 
	where $b,c$ are $O(1)$ constants. This means that as a function of $L$, $\tmem$ increases exponentially until 
	\be L \gtrsim p^{-1/d},\ee 
	at which point $\tmem$ becomes constant. 
	With the analogy $p \sim e^\b$, this scaling mimics the partial-self correction present in several $3d$ quantum codes \cite{michnicki20143d,siva2017topological,williamson2023layer}. 
	Decoders for which $\tmem$ scales as in \eqref{pseudothresh} will be referred to as having a {\it pseudothreshold}, and in numerics can produce logical failure rates that can seem to quite convincingly mimic a threshold. 
	

	
	In the context of $1d$ classical error correction, an essentially identical argument to the one given in this section shows that the same pseudothreshold behavior occurs for the GKL automaton \cite{gacs1978one} and Toom's two-line voting rule, which were conjectured to have thresholds until proven otherwise for strongly biased noise in  \cite{park1997ergodicity}.\footnote{In fact, for strongly biased noise, the scaling is apparently the worse $\tmem \sim e^{\log^2(1/p)}$ \cite{park1997ergodicity}; our argument for $\tmem \sim e^{1/p^{1/d}}$ is an upper bound on $\tmem$ that we believe should be saturated for unbiased noise.} Indeed, the scaling $\tmem \sim e^{1/p}$ has been suggested (from numerics) to hold in the GKL model \cite{de1992gacs}, and our argument strongly suggests that a simpler proof---which does not rely on the noise being biased---is possible. It would be interesting to see whether or not the construction of ref.~\cite{paletta2025high} evades this problem.  
	
	\ss{Message passing with $Z=0$}\label{ss:nogo_message_passing}
	
	\begin{figure*}
		\centering 
		\makebox[\textwidth][s]{%
			\hspace{\fill}%
			\includegraphics[width=.4\textwidth]{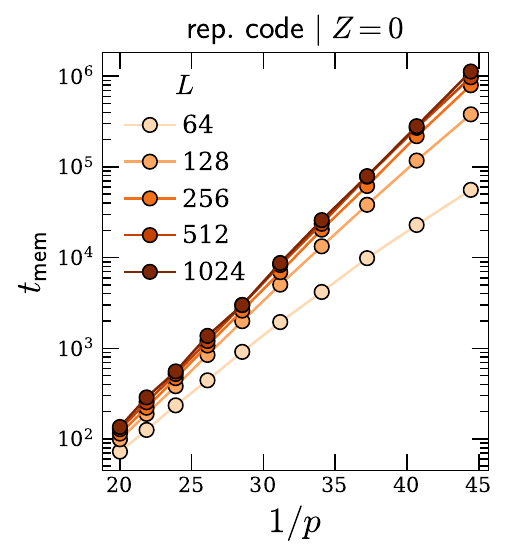}%
			\hspace{\fill}%
			\includegraphics[width=.395\textwidth]{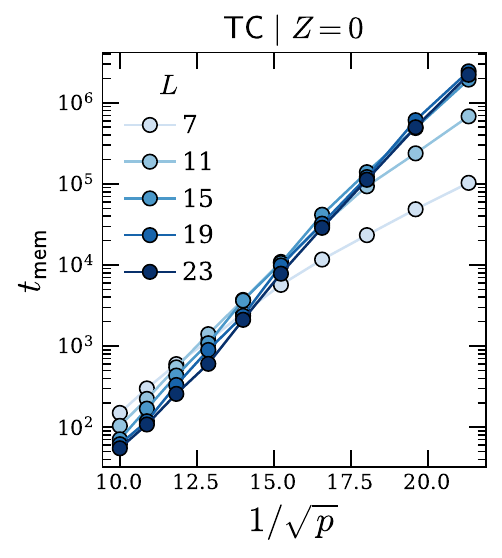}%
			\hspace{\fill}%
		}
		\caption{\label{fig:pseduo_scaling} Pseudothreshold behavior in $\tmem$ for message-passing decoders with $Z=0$, shown for the $1d$ repetition code (left) and the $2d$ toric code (right). } 
	\end{figure*}
	
	
	We begin by discussing the message-passing decoders with $Z=0$. The analysis of field-based decoders is similar, and will be discussed in the following subsection. The basic argument is illustrated in fig.~\ref{fig:screening_fig}. 
	
	To begin, let us first assume that $L = \infty$. At noise strength $p$, the typical distance $\xi_p$ between small anyon pairs created by the noise is, in $d$ dimensions, 
	\be \xi_p\sim p^{-1/d}.\ee 
	Now consider a pair of anyons $a_1,a_2$ separated by a distance $l$, which have been around long enough to have begun exchanging messages with one another. If $l \ll \xi_p$, $l$ is much more likely to decrease than increase at the next time step, since the anyons in the pair are more likely to be closer to one another than to any anyon created by the noise. On the other hand, if $l \gg \xi_p$, the closest anyon to e.g. $a_1$ is very unlikely to be $a_2$, and instead is likely to be one of the anyons created by the noise. Thus when $l \gg \xi_p$, the communication between $a_1$ and $a_2$ will be screened by the noise, and they will simply diffuse in random directions, without being corrected (see fig.~\ref{fig:screening_fig}). 
	
	As a result, $\tmem$ is simply set by the expected time it takes the noise to create a pair of size $\gtrsim \xi_p$ in a region of size $\sim \xi_p^d$.\footnote{Once such a pair has a large probability of having been created in any such region, offline decoding will fail with high probability. } To estimate this time, consider beginning with a small pair of separation $l=1$. 
	At a given time step, $l$ will increase only if at least two errors happen within a distance at most $l$ from the anyons in the pair. Since there are $\sim l^d$ places for such a noise event to happen,  if $l^d < 1/p$, then $l$ will increase only with probability at most $\sim (pl^d)^2$. The probability $p_{\sf grow}$ for a small error to grow to size $\xi_p$ is then 
	\be  \label{pgrow} p_{\sf grow} \sim \prod_{j=1}^{\xi_p} (j^d p)^2= (\xi_p!)^{2d} p^{2\xi_p} \sim e^{-2d/\xi_p},\ee 
	so that 
	\be \tmem \sim p_{\sf grow}\inv \sim e^{2d/p^{1/d}}.\ee 
	For finite $L$, this argument goes through as long as $L \geq \xi_p$. When $L<\xi_p$, the product in \eqref{pgrow} is instead cut off at $L$, and we get 
	\be \tmem \sim e^{-2d L(1+ \ln(\xi_p/L))}.\ee 
	In agreement with \eqref{pseudothresh}, $\tmem$ thus increases exponentially with $L$ until $L\sim\xi_p$, after which it becomes $L$-independent. 
	
	These predictions are verified numerically in fig.~\ref{fig:pseduo_scaling}, where we compute $\tmem$ for the $Z=0$ message passing decoders with i.i.d bit-flip noise of strength $p$ and perfect measurements, for both the $1d$ repetition code and the $2d$ toric code (code producing all numerical results in this paper is available at \cite{code}). In both cases we find $\tmem \sim e^{c/p^{1/d}}$ at the largest values of $L$, and our results are consistent with $\tmem$ plateauing at $L \sim \xi_p$. 
	
	Unfortunately, rigorously showing that \eqref{pseudothresh} holds for i.i.d noise appears to be slightly tedious. To rigorously establish the absence of a threshold, we therefore show the following: 
	\begin{theorem}[pseudothresholds for historyless message passing] \label{thm:nogo}
		When $Z=0$, there exists a $p$-bounded error model for which the message-passage decoder has 
		\be \label{z0tmembound} \lim_{L\ra\infty} \tmem < ae^{\frac{b}{\ln(1/(1-p))}} + c\ee 
		for positive $p$-independent constants $a,b,c$. 
	\end{theorem}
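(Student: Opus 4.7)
The plan is to exhibit a specific $p$-bounded error model and use the screening mechanism of fig.~\ref{fig:screening_fig} to lower-bound the per-time-step logical failure rate of the decoder by $e^{-\Theta(1/p)}$. A union / Borel--Cantelli argument and the identity $p\sim\ln(1/(1-p))$ for small $p$ then convert this into the stated bound.

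I would take as the error model i.i.d.\ Bernoulli bit flips at rate $p$ per spacetime qubit, acting on the $1d$ repetition code with periodic boundary conditions; this is trivially $p$-bounded since $\sfP(A\subset\sfN)=p^{|A|}$. The target bad event, centered on any fixed spacetime window of linear size $\xi_p=\Theta(1/p)$, is that the noise (a) creates a small anyon pair near the centre, and (b) at each subsequent time step $t=1,\dots,\xi_p$ plants an independent ``distractor'' pair in one of the $\Theta(t)$ sites between the two target anyons. The deterministic message-passing rules \eqref{mupdate} then force each target anyon to move toward the nearer distractor rather than toward its partner, so the pair random-walks outward instead of annihilating. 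Independence across time together with the $\Theta(t)$ combinatorial freedom per step gives a lower bound on the probability of this event of the form $\prod_{t=1}^{\xi_p}(c\,t\,p^2)=c^{\xi_p}(\xi_p!)\,p^{2\xi_p}$, which by Stirling and $\xi_p=\Theta(1/p)$ is $e^{-\Theta(1/p)}$, rigorizing the heuristic around eq.~\eqref{pgrow}.

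Next I would argue that a size-$\xi_p$ pair formed in this way has constant conditional probability of ultimately being matched around the ``wrong'' half of the circle, producing a representative of the logical $X$ operator; since the subsequent random walk of the pair is symmetric between shrinking back and growing to half the circle, and since disjoint windows are independent under i.i.d.\ noise, summing the $e^{-\Theta(1/p)}$ per-window failure probabilities over a time horizon $T=a e^{b/p}+c$ yields total failure probability at least $1/2$ uniformly in $L$. Substituting $\ln(1/(1-p))$ for $p$ in the small-$p$ asymptotics of the exponent then gives the bound in the form \eqref{z0tmembound}.

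The principal obstacle is the conditional logical-failure step. Because \eqref{mupdate} breaks ties deterministically along the diagonals, a generic screened configuration drawn from i.i.d.\ noise can carry subtle biases that systematically correct rather than enlarge the target pair, which is presumably what makes the pure i.i.d.\ case ``tedious'' to analyze. I would circumvent this by refining the i.i.d.\ model into an adversarial $p$-bounded variant in which the orientation (left/right) of each planted distractor pair is independently uniformly randomized, making the screened anyon configuration invariant in distribution under the reflection that swaps the two candidate matchings of the target pair. Orbit averaging then gives an $\Omega(1)$ conditional logical-failure probability at no cost in the scaling of the growth probability, closing the argument.
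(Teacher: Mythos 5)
Your overall architecture matches the paper's: pick an adversarial $p$-bounded model, track the separation $r_t$ of one large anyon pair as a random walk, lower-bound the probability of growth to the screening scale, and use independence across $O(L)$ windows to make the failure time $L$-independent. However, the central growth mechanism is specified backwards, and this is a genuine gap. You plant each distractor pair \emph{between} the two target anyons and assert that moving ``toward the nearer distractor rather than toward its partner'' makes the pair ``random-walk outward.'' Under the feedback rule (each defect moves toward the emitter of its smallest-valued message, i.e.\ toward its nearest neighbor), a distractor lying between the two target anyons is on the \emph{inward} side of both of them: the left anyon moves right, the right anyon moves left, the distractor pair self-annihilates, and the target pair deterministically \emph{shrinks} by $2$. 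Your event therefore accelerates correction rather than causing growth. The paper's single-pair model $\mce_{\sf s.p.}$ places the distractors \emph{outside} the pair (to the left of the left anyon and the right of the right anyon, reflection-symmetrically), so that each target anyon's nearest neighbor lies on its outward side and the separation increases by $2$ unless no distractor lands within distance $\sim r_t$; this yields a birth--death chain with bias $b_r = 1-2(1-p)^{\lfloor (r+2)/4\rfloor}$ that flips sign at $r_*\sim 1/\ln(1/(1-p))$, after which growth is ballistic and free.

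Two further steps would also not survive as written. First, your endgame claims a size-$\xi_p$ pair has ``constant conditional probability'' of matching around the wrong half of the circle because the subsequent walk is symmetric; an unbiased walk started at $\xi_p$ reaches $\Theta(L)$ before $0$ with probability $O(\xi_p/L)\to 0$, so this step fails for fixed $p$ as $L\to\infty$. The paper avoids it by tiling the system into blocks of width $\Theta(r_*)$, running the single-pair model independently in each, and noting that once a constant fraction of blocks each host a pair of separation $\Theta(r_*)$ the offline decoder fails with probability $1$; no wrap-around of any single pair is needed. Second, even granting the mechanism, your product $\prod_{t=1}^{\xi_p}(c\,t\,p^2)$ evaluates to $e^{-\Theta(\ln(1/p)/p)}$ rather than $e^{-\Theta(1/p)}$ (the factorial cancels only one of the two powers of $p$ per step), which would give $\tmem \leq e^{\Theta(\ln(1/p)/p)}$ and would not establish the stated bound $a e^{b/\ln(1/(1-p))}+c$ with $p$-independent $b$; the paper gets one power of $p$ per growth step (each distractor event is charged probability $p$) and uses the exact hitting-time formula for the biased walk up to $r_*$ rather than a monotone-path bound.
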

	The $p$-bounded error model referred to in the theorem statement is chosen in order to simplify the proof, and is rather adversarial. For the same reason, it is also chosen to be effectively one-dimensional, which produces the $1/\ln(1/(1-p)) \ra 1/p$ in the exponent. Since the physical argument and numerics given above indicates that these issues are just technicalities, we will not endeavor to work out the full details for the i.i.d case. 
	
	The idea of the proof is extremely simple, but there are several unilluminating details that need to be accounted for. We therefore give only a proof sketch here, deferring the rest to appendix~\ref{app:nogo}.

	{\it Proof sketch: } The basic idea is to choose a $p$-bounded noise model which always tries to ``expand'' large pairs of anyons. To simplify the analysis, it will be sufficient to restrict to noise models that perform bit-flip errors {\it only} along a given $1d$ strip of the system, and cause no measurement errors. 
	
	First let us define a noise model $\mce_{\sf s.p.}$ that tries to create and expand a single large anyon pair. Schematically, $\mce_{\sf s.p.}$ is defined as follows: 
	\begin{itemize}
		\item If there are no anyons in the system, $\mce_{\sf s.p.}$ creates a single pair of separation $r_0$ with probability $p^{r_0}$, where $r_0$ is some small constant. 
		\item Let $r_t$ be the separation of the largest anyon pair in the system at time $t$. If $r_t > 1$, the noise creates localized anyon pairs according to i.i.d $p$-bounded noise subjected to the following constraints: 1) each pair created by the noise is separated from all other anyons by at least two lattice sites, 2) pairs are created in a way which is reflection-symmetric about the midpoint of the pair of separation $r_t$, and 3) pairs are only created to the left of the left anyon in the pair, or to the right of the right anyon. 
	\end{itemize}
	\begin{figure}
		\centering 
		\includegraphics[width=.48\tw]{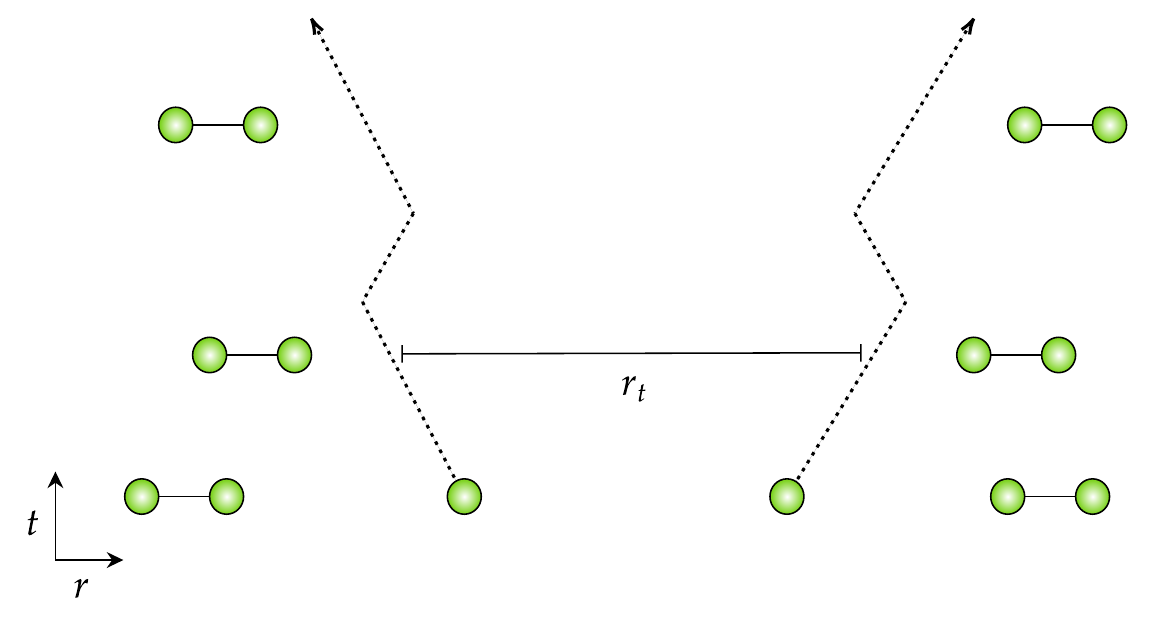} 
		\caption{\label{fig:simple_noise} A schematic of the error model used in the proof of theorem~\ref{thm:nogo}. The noise is restricted to create well-separated nearest-neighbor anyons at reflection-symmetric locations on the ``outside'' of any large pair.   }
	\end{figure}	
	For an illustration of these definitions, see fig.~\ref{fig:simple_noise}. The purpose of 1) is to ensure that each small anyon pair produced by the noise is eliminated at the next time step, and the purposes of 2) and 3) are to simplify the way that large anyon pair expands. It is easy to check that these additional constraints keep the noise $p$-bounded. 
	
	Under $Z=0$ message-passing dynamics with this noise model, the variable $r_t$ undergoes a particular type of random walk on $\zz^{\geq 0}$. In appendix~\ref{app:nogo}, we verify that the bias of this random walk is towards 0 when $r_t < r_* \sim 1/p$, and towards $\infty$ when $r_t > r_*$. Using standard results about hitting times in random walks, the expected time $\tau_l$ for $r_t$ to reach $l>r_*$ (viz., to unbind) is then  $\tau_l < ae^{b/\ln(1/(1-p))}+c$ for some $p$-independent constants $a,b,c$. 
	
	To show that $\tmem \sim \tau_{r_*}$, we may break the system up into boxes of width $\sim r_*$ (along the direction of the $1d$ strip along which the noise acts), and define a noise model $\mce$ that implements $\mce_{\sf s.p.}$ independently in each $r_*$-sized box (this model is thus spatially correlated only over scales of order $r_*$, which is independent of $L$). Since the expected time for an anyon pair to traverse each box is proportional to $\tau_{r_*}$, offline decoding fails with high probability after a constant multiple of this time scale. 		
	\qed 
	
	\ss{Field-based decoders} \label{ss:power_law_general}
	
	We now consider decoders that endow anyons with attractive power-law interactions, following \cite{herold2015cellular,herold2017cellular}. Engineering an arbitrary power law interaction in a local fashion cannot be done using the simple PDE-based approach of these references. While it can be done using message passing (as explained in \cite{lake2025fast}), getting the appropriate interaction requires increasing the number of message flavors, and slightly complicating the architecture. To simplify the discussion, we will therefore assume that the power-law interactions have been put in ``by hand'', without worrying about any local mechanism by which they could have been produced. We will also assume that the interactions are transmitted instantly fast, and will continue to assume that no measurement errors occur. Even in this setting, a threshold is unattainable. 
	
	Let $\{\bfr_i\}$ be the locations of anyons at a given time step. Define the force $\bfF_i$ experienced by the $i$th anyon as\footnote{Unfortunately, our definition of $\a$ differs by 1 from the $\a$ in \cite{herold2015cellular,herold2017cellular}.}
	\be \bfF_i = \sum_{j \neq i} \frac{\bfr_j - \bfr_i}{||\bfr_j - \bfr_i||^{\a+1}}.\ee 
	Following \cite{herold2015cellular,herold2017cellular}, we then take the decoder to, at each time step, move the $i$th anyon by one site in the lattice direction $\phi(\bfF_i)$, where 
	\be \phi : \rr^d \ra \{\pm \uva \, : \, a = 1,\dots, d\}\ee 
	is a function of our choosing; one natural choice is for $\phi(\bfu)$ to select out the unit lattice vector with largest dot product with $\bfu$. In the $v\ra\infty$ limit, the message passing decoder with $Z=0$ becomes equal to the power-law decoder with this choice of $\phi$ when $\a \ra \infty$. 
	
	Consider a well-separated anyon pair with relative separation $\bfr$. A threshold can be achieved only if the anyons are attracted to one another by a force strong enough to prevent them from diffusing to larger separations. We will show that if $\a$ is too large, the attractive force is too weak, and gets screened by the signals sent from nearby noise events (as is the case for $Z=0$ message-passing). On the other hand, if $\a$ is too small, the signals from very distant noise events dominate. In $d>1$ dimensions, there is no intermediate range of $\a$ where at least one of these problems is not present. 
	
	To see this from a quick physical argument, write $d\bfr = d\phi(\bfF)$, where $\bfF$ is thought of as having a random component (from the noise) and a deterministic attractive component (coming from the two anyons in the pair), 
	\be \bfF= -\frac{\bfr}{r^{\a + 1}} + \bfxi,\ee 
	where $\bfxi$ is the force induced by the random arrangement of small noise-induced anyon pairs. The Fokker-Planck equation for the probability density $\r(\bfr)$ to find the anyon pair at separation $\bfr$ is 
	\be \p_t \r = \D_a ( \lan \phi^a(\bfF)\ran \r +\frac12 \D_b( \lan \phi^a(\bfF) \phi^b(\bfF)\ran  \r),\ee
	giving 
	\be \p_t\r = \D_a \( v^a \r + \frac12 D \D^a \r \)\ee 
	where the diffusion constant $D$ is determined by the noise correlation $\lan  \xi^a \xi^b\ran$ (which we have taken to be $\propto \d^{ab}$), and the drift velocity---coming from the attractive confining interaction between the anyons in the pair---is $\bfv = -\phi(\bfr / r^{\a + 1})$. 
	
	For spatially homogeneous noise, we may take an angular average and write 
	\be \p_t\r_R = \p_r \( \wt v_r \r_R + \frac D2\p_r \r_R\),\ee 
	where $\r_R = r^{d-1} \r$ is the radial PDF and the radial velocity $\wt v$ is (letting $\phi(\bfu) = \hat\bfu$ for concreteness)
	\be \wt v_r =- \frac1{r^\a} + \frac{(d-1)D}{2r}.\ee 
	To beat the outward bias caused by the entropic effect of being in $d>1$ dimensions, we clearly need $\a \leq 1$, as if $\a>1$ the anyon pair will simply diffuse as long as $r > \xi_p$, and suffer the same screening problem as our $Z=0$ message-passing architecture. However, if we take $\a \leq 1$, the interactions are so strong that distant noise events override the signals that try to correct the anyon pair. Indeed, for noise which creates localized anyon pairs according to a Poisson point process of density $\sim p$ (which we assume to be immediately corrected at the next time step), the variance $\s^2_F$ in the signal caused by the noise is 
	\be  \s^2_F = \lan \phi^a(\bfxi)\phi^b(\bfxi) \ran \sim \sum_{i,j} \frac{\lan r_i^a r_i^b\ran }{r_i^{\a+1} r_i^{\a+1}} \sim p \int d^2r \frac1{r^{2\a }} ,\ee 
	which diverges as $L\ra\infty$ when $\a \leq 1$. Since the fluctuations in $\bfF$ are divergent when $d>1$ and $\a\leq 1$ (giving $D \ra\infty$ and hence $\wt v_r > 0$), the signal any anyon receives is random, and the anyons in the pair again just execute uncorrectable diffusive motion. 			
	In $1d$, this argument leaves open the possibility of having a memory when $1/2 < \a < 1$ (values for which $\xi$ has finite variance and $\a<1$). For i.i.d noise, it is hard to say from numerics (see below) whether or not this is case, although we regard this as being unlikely. 
	
	As before, we can rigorously demonstrate the absence of a threshold in a particular $p$-bounded noise model (in all dimensions, and for all values of $\a$), following essentially the same analysis as in the message-passing case. Since the details are unilluminating, they are deferred in their entirety to appendix~\ref{app:power_law_nogo}.

	\begin{figure*}
		\hfill \includegraphics[width=.32\tw]{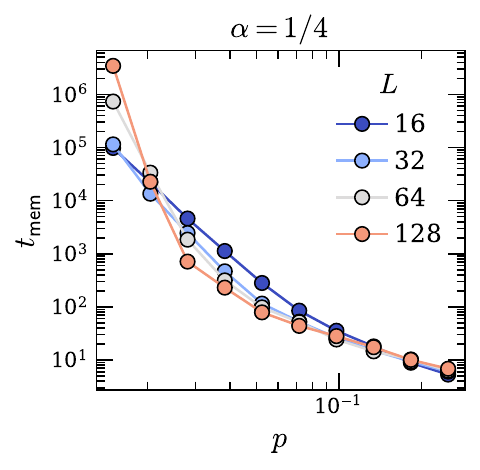} \hfill 
		\includegraphics[width=.32\tw]{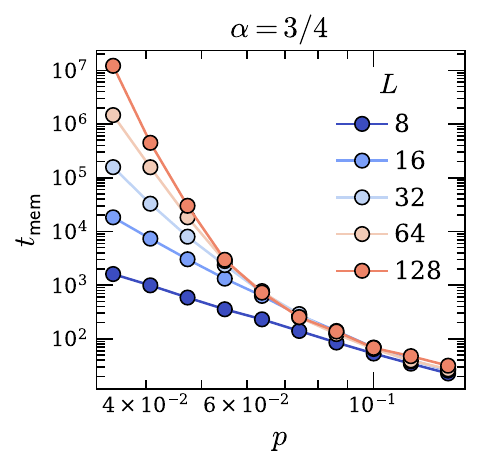} \hfill 
		\includegraphics[width=.32\tw]{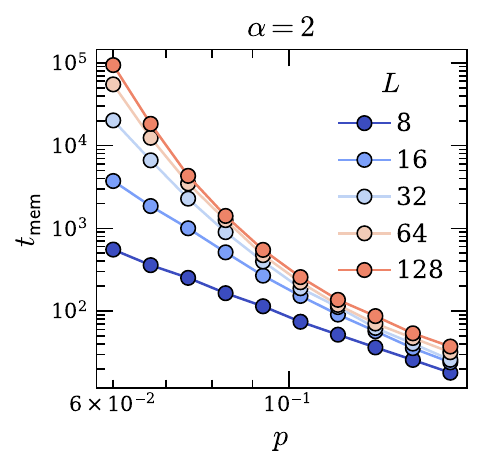} \hfill 
		\caption{\label{fig:no_thresh_power_laws} Memory lifetimes for $1d$ power law decoders under i.i.d noise of strength $p$ and perfect measurements. {\it Left:} $\a = 1/4$, where the noise variance $\s_F^2$ diverges as $\sqrt{L}$ when $L\ra\infty$. This results in $\trel$ decreasing with $L$ over an intermediate range of $p$. {\it Center:} $\a = 3/4$, where $\s_F^2$ is finite and the interactions are strong enough to localize the $1d$ random walk considered in the main text. The data may be consistent with a true threshold, but a pseudothreshold scaling appears more likely (in either case, a threshold is not present in $d>1$, or under the $p$-bounded error model of appendix~\ref{app:power_law_nogo}). {\it Right:} $\a = 2$, which behaves similarly to the local message-passing decoder. }
	\end{figure*}
	
	We now investigate the memory lifetime in Monte Carlo simulations. 
	We focus on the $1d$ repetition code, where the problem of diverging noise signals at small $\a$ is least severe. The results are shown in  fig.~\ref{fig:no_thresh_power_laws}. 
	In the left panel, we set $\a=1/4$. In this case the variance diverges as $\lim_{L\ra\infty}\s_F^2 \sim \sqrt L$, which manifests as a {\it decrease} in $\tmem$ with $L$ in an extended intermediate range of $p$. 
	The center panel shows $\a=3/4$; here the expected force caused by signals coming from one side of a given anyon is infinite but $\lim_{L\ra\infty}\s_F^2<\infty$, and the interaction is strong enough to bind the random walk considered above (but only because $d=1$). Our arguments are consequently not able to rule out a threshold in this case (although no threshold is present for the $p$-bounded noise model used in appendix~\ref{app:nogo}). 
	Finally, the right panel shows $\a =2$. Here $\lim_{L\ra\infty}\s_F^2 < \infty$, and we expect $\tmem$ to behave similarly to the message-passing decoder with $Z=0$; the right panel of fig.~\ref{fig:no_thresh_power_laws} is indeed consistent with this expectation. 

	We close this section with a brief remark on another field-based decoding scheme which lacks a threshold. In some situations, it may be physically natural to assume that the anyons carry a conserved $U(1)$ ``electric'' charge which the noise preserves strongly, viz. which has an associated symmetry generator that commutes with every Kraus operator applied by the noise. Putting aside the non-generic requirement of the symmetry preservation being strong, this scenario could arise naturally in quantum Hall states (such as e.g. the $\nu=1/3$ Laughlin state), or less naturally in certain lattice models.\footnote{E.g. in a modified toric code with additional spins $Z_i, Z_\square$ on the vertices and plaquettes, and with strong symmetries generated by both $Q_e = \sum_i A_i Z_i$ and $Q_m =  \sum_\square B_\square Z_\square$, with $A_i, B_\square$ the conventional vertex and plaquette stabilizers.} In this scenario, we could take the interactions between anyons to reflect their charges by letting 
	\be \bfF_{i} = q_i \sum_{j\neq i} q_j\frac{\bfr_j - \bfr_i}{|\bfr_i - \bfr_j|^{\a +1}},\ee 
	where $q_i\in \zz$ is the charge of the $i$th anyon. 
	
	Naively, this might let us circumvent the general no-go argument in sec.~\ref{ss:power_law_general} if $\a$ is chosen appropriately: since anyons are always created in charge-neutral pairs, the strength of the contribution to $\bfF_i$ from noise-created pairs falls off as $1/r^{\a +1}$ instead of $1/r^\a$ (since it is a dipolar force), while the force from the anyon that anyon $i$ is paired with falls off as $1/r^\a$. This reasoning is however overly optimistic, and in appendix~\ref{app:charged} we show that a threshold is also not present in this case. 
	
	\section{Decoding performance}\label{sec:results}
	
	In this section we 1) prove that our message-passing decoder has a threshold for active decoding as long as $Z = \O(\polylog(L))$, 2) show how its operation can be made asynchronous, 3) identify the threshold error rate in numerics, 4) provide a preliminary investigation of the phase transition that occurs in the $1d$ model, and 5) numerically study our decoder's ability to perform fault-tolerant initialization. 
	
	\ss{Threshold theorems} \label{ss:thresholds}
	
	In this section, we prove the existence of a threshold when $Z$ is large enough. To simplify the proof, we will slightly modify the feedback rules used by the decoder. First, we will add a bias in the defect motion towards $+\uvx$: if 
	\be (s,a) = {\rm argmin}_{s', a'}\{m^{s'a'}_\bfx\} \not\in \{(1,x), (-1,x)\},\ee 
	we move the defect first along $s\uva$, and then along $+\uvx$ (if the defect did not annihilate after the first movement step). The same change is made both in the bulk and on the back wall. The reason for doing this is purely technical, and was explained in \cite{lake2025fast}. Secondly, we increase the speed of the flow along $+\uvz$ by 1, replacing $\bfx-\uvz$ on the RHS of \eqref{rg_cycle} by $\d_{x^{d+1},2}(\bfx-\uvz) + (1-\d_{x^{d+1},2})(\bfx-2\uvz) $. This ensures that all defects move to larger $z$ coordinates at all time steps, even if the nearest message they recieve is from the $-\uvz$ direction. 
	
	Using this modified message-passing dynamics, we show 
	\begin{theorem}[existence of a threshold]\label{thm:thresh}
		There are $O(1)$ positive constants $0<\z,\b,p_c<1,a$ such that as long as $v>2$ and 
		\be Z  \geq  Z_{\sf max} = a \(\frac{\log(L)}{\log(p_c/p)} \)^{1/\z},\ee 
		then for any $p$-bounded error model with $p<p_c$, the memory lifetime of the modified message-passing decoder satisfies 
		\be \label{tmemscaling} \tmem = (p_c/p)^{\O(L^\b)}.\ee 
	\end{theorem}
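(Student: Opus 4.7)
The plan is to follow the outline sketched in the introduction: combine a spacetime version of the linear erosion property of the message-passing dynamics (from \cite{lake2025fast}) with the hierarchical clustering decomposition of Definition~\ref{def:noise_hier} to argue that the back wall at $z=Z$ sees a heavily renormalized effective noise rate, and then feed this renormalized rate into the pseudothreshold bound \eqref{pseudothresh} for the $Z{=}0$ dynamics living on the back wall.

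First I would make precise the inductive cluster-correction statement: for each level $k$ of the noise hierarchy, any level-$k$ cluster $C_{(wn^k,bn^k)}$ whose spatial support has linear extent $\lesssim wn^k$ is erased by the message-passing dynamics within the bulk region $1<z<Z$, provided its spatial diameter is bounded by $Z$ and provided that the buffer region of width $bn^k$ is itself free of lower-level uncorrected noise. The argument should proceed inductively in $k$: assuming all $(k{-}1)$-clusters inside the buffer have been eliminated, the remaining defects produced by the $k$-cluster are well isolated, so the $(d{+}1)$-dimensional confining dynamics simulates a matching that pairs them up in a time and space both of order $O(n^k)$. This is exactly a spacetime analogue of Lemma~\ldots in \cite{lake2025fast}, and should go through essentially unchanged because the $+\uvz$ drift guarantees each defect has an order-$n^k$ time window to be matched before reaching the wall, and the message velocity $v>2$ suffices to make communication outrun diffusion of the isolated cluster (this reproduces the role of $v$ in the offline case).

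Given this, let $k_\star$ be the largest integer with $wn^{k_\star}\le Z$. All clusters of level $\le k_\star$ are corrected in the bulk, so only points that remain uncorrected after successive level-$k$ refinements for $k\le k_\star$ can ever reach the back wall. By \eqref{rarenoise}, the effective $p$-bounded rate at the wall is at most
\begin{equation}
p_{\text{eff}} \;\le\; p_{k_\star+1}\;\le\; (p/p_c)^{2^{k_\star+1}} \;\le\; (p/p_c)^{(Z/w)^{\zeta}},
\end{equation}
for $\zeta = \log_n 2$, since $n^{k_\star}\ge Z/(wn)$. The dynamics on the back wall is precisely the $Z{=}0$ message-passing decoder studied in Section~\ref{sec:nogo}, for which Theorem~\ref{thm:nogo} (and more precisely the pseudothreshold bound $\tmem \gtrsim \exp(c\min(L,p_{\text{eff}}^{-1/d}))$ for noise of strength $p_{\text{eff}}$) applies. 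Substituting the above gives
\begin{equation}
\tmem \;\gtrsim\; \exp\!\bigl(c\min(L,(p_c/p)^{(Z/w)^{\zeta}/d})\bigr).
\end{equation}
Choosing $Z \ge a(\log L/\log(p_c/p))^{1/\zeta}$ makes $(p_c/p)^{(Z/w)^\zeta/d}\ge L^{\beta}$ for some $\beta>0$, and the bound \eqref{tmemscaling} follows. Finally, the modifications to the feedback rules (the $+\uvx$ bias and the doubled $+\uvz$ speed) are used exactly as in the offline proof of \cite{lake2025fast}: the $+\uvx$ bias removes pathological sliding trajectories that could evade matching, and the faster $z$-flow guarantees that messages from the $-\uvz$ direction cannot drag defects downward against the RG flow, so that the time-window for correcting a level-$k$ cluster is strictly bounded.

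The main obstacle will be the erosion step, specifically proving that a $k$-cluster of diameter $\le wn^k$ really is erased inside a spacetime box of size $O(n^k)\times O(n^k)$ before it reaches $z=Z$, and ensuring that errors in the surrounding buffer (including the transient, uncorrected tails of smaller-level clusters) cannot reactivate or transport the defects it produces. This requires a careful simultaneous induction over hierarchy levels, showing that the cleanup of each level-$k$ cluster happens in a spacetime region that the induction hypothesis has already cleared of lower-level contamination. A secondary technical nuisance is the boundary between the bulk dynamics and the back-wall dynamics: defects absorbed into the wall while a nearby bulk cluster is still being processed could in principle create correlated noise on the back wall, and one must check that such correlations remain $p_{\text{eff}}$-bounded in the sense required to invoke the $Z{=}0$ pseudothreshold bound. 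Both of these are morally straightforward extensions of the offline argument in \cite{lake2025fast}, but they are where the actual combinatorial work lives.
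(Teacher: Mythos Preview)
Your first half is essentially correct and matches the paper's Lemma~\ref{lemma:cluster_erosion}: the inductive spacetime erosion argument shows that all clusters of level $k < k_Z = \log_n(Z/c)$ are eliminated before reaching the back wall, and the effective back-wall noise rate is $p_Z \sim (p/p_c)^{(Z/c)^\zeta}$ with $\zeta = \log_n 2$. The paper even notes that this alone already gives a crude superpolynomial bound $\tmem \geq a p_Z/L^d$.

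The second half has a genuine gap. You invoke the pseudothreshold scaling \eqref{pseudothresh} as a \emph{lower} bound on the back-wall memory time, and cite Theorem~\ref{thm:nogo} for it. But Theorem~\ref{thm:nogo} is an \emph{upper} bound: it shows the $Z{=}0$ decoder \emph{fails} to have a threshold by exhibiting a $p$-bounded model under which $\tmem$ remains finite. The scaling $\tmem \sim e^{c/p^{1/d}}$ is established only heuristically and numerically; no rigorous lower bound of this form is proven anywhere in the paper, and you would need to supply one. Moreover, the noise arriving at the back wall is not generic $p_{\rm eff}$-bounded point noise---it lands in spatially extended chunks of diameter up to $\sim w n^{k_Z}$---so even if a rigorous $Z{=}0$ pseudothreshold lower bound existed for i.i.d.\ noise, its applicability to this structured input would need separate justification.

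The paper closes this gap with different machinery. Lemma~\ref{lemma:kmax} shows that assembling a clump of linear size $\ell$ on the back wall from $m$ noise clusters forces at least one of them to have level $\geq \log_n(\ell/(gm^\gamma))$. Since a logical failure requires a clump of size $\Theta(L)$, the paper bounds the failure probability directly by summing over $m$, weighting each term by the minimum noise weight $(m-1)2^{k_Z} + (oL/m^\gamma)^\zeta$ against an entropy factor $\sim (fL)^{m(d+1)}(\log_n L)^m$, and extracting the saddle point at $m_\ast = \Theta((L/Z)^{\zeta/(1+\gamma\zeta)})$. This combinatorial clump-growth argument is what rigorously replaces your informal invocation of \eqref{pseudothresh}, and is where the explicit exponent $\beta = \zeta/(1+\gamma\zeta)$ in \eqref{tmemscaling} actually comes from.
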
 
	We will prove this result assuming that the offline decoder $\mcd_{\sf off}$ used in definition~\ref{def:tmem} is simply the online message-passing decoder run noiselessly until all defects are eliminated, although other choices are certainly possible. Additionally, while this result applies for both synchronous and asynchronous updates, for simplicity we will specialize to the former case for now. The generalization to asynchronous updates is provided in sec.~\ref{ss:lind}. 
	
	We begin by showing under what conditions a spacetime cluster of errors can reach the back wall. 
	\begin{lemma}\label{lemma:cluster_erosion}
		Consider a $k$-cluster $\mcc^{(k)}$ of spacetime errors. There is an $O(1)$ constant $c$ such that if 
		\be k < k_Z = \log_n(Z/c),\ee 
		then all defects in $\mcc^{(k)}$ are annihilated against other defects in $\mcc^{(k)}$ before reaching the back wall, provided $v>2$. 
	\end{lemma}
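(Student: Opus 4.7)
The plan is to reduce the claim to the spacetime analog of the linear erosion property for the message-passing dynamics established in \cite{lake2025fast}. Recall that a $k$-cluster $\mcc^{(k)}$ is, by definition, contained in a spacetime $\infty$-ball of radius $wn^k/2$, and is separated from all other errors in the noise realization by a buffer of thickness $bn^k$. As defects enter the buffer region at $z=0$ and are advected along $+\uvz$ by the flow step \eqref{rg_cycle}, the image of $\mcc^{(k)}$ in the classical control system is contained in a $(d+1)$-dimensional region of diameter $O(wn^k)$: its spatial extent is $wn^k$ by construction, while the time elapsed between the first and last defect in $\mcc^{(k)}$ entering the buffer, multiplied by the vertical advection speed of $2$, gives a $z$-extent also of $O(wn^k)$. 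So inside the buffer, $\mcc^{(k)}$ looks like a spacetime-local blob of size $O(wn^k)$ sitting in an otherwise empty region.

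Next I would argue that during the time $\mcc^{(k)}$ persists in the buffer, it does not interact with defects from any other cluster. Messages propagate with finite speed $v$, and so messages produced by defects outside $\mcc^{(k)}$ can influence it only if they originate within spacetime distance $vT$ of $\mcc^{(k)}$, where $T$ is the erosion time. The buffer condition provides isolation over a spacetime shell of thickness $bn^k$, so provided the erosion completes in time $T\lesssim (b/v)n^k$, the dynamics restricted to $\mcc^{(k)}$ is effectively the message-passing dynamics run on an isolated blob in an otherwise empty $(d+1)$-dimensional slab.

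Now I invoke the linear erosion property from \cite{lake2025fast}: the message-passing dynamics with message speed $v>2$ erodes an isolated configuration of defects of linear size $\ell$ within time $C\ell$ for some $O(1)$ constant $C$. Applying this in our $(d+1)$-dimensional setting with $\ell = O(wn^k)$, every defect in $\mcc^{(k)}$ is annihilated against another defect of $\mcc^{(k)}$ within $T = C' wn^k$ steps, for some $O(1)$ constant $C'$. By choosing the clustering constants so that $b > vC'w$ (which is compatible with the regime $0<w<b<(n-3)w/4$ used in \eqref{rarenoise}, since $n$ can be taken large), the self-consistency of the isolation assumption is guaranteed. During this correction, the $+\uvz$ advection carries $\mcc^{(k)}$ upward by at most $2T = 2C'wn^k$, so the maximum $z$-coordinate reached by any defect of $\mcc^{(k)}$ is bounded by $cn^k$ with $c=2C'w$. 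Requiring $cn^k < Z$ gives exactly $k < \log_n(Z/c) = k_Z$, as claimed.

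The main technical obstacle is making the appeal to the linear erosion lemma of \cite{lake2025fast} rigorous in the present $(d+1)$-dimensional, boundary-driven setting: the buffer has a ``source'' boundary at $z=0$ where new defects of $\mcc^{(k)}$ are continuously injected as time progresses, and an ``absorbing'' boundary at $z=Z$ that we must show is not reached. One needs to verify that injecting the defects of $\mcc^{(k)}$ one time-slice at a time (rather than placing them all at once) does not destroy the erosion bound, which follows because the vertical advection speed $2$ is strictly less than the message speed $v$, so messages emitted by already-injected defects can always reach, and coordinate the motion of, defects that enter later. The remaining details---handling the tie-breaking bias along $+\uvx$ and verifying that the modified flow rule $\bfx \mapsto \bfx - 2\uvz$ does not change the erosion constant by more than an $O(1)$ factor---mirror the analysis in \cite{lake2025fast} and do not raise new conceptual difficulties.
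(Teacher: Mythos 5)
Your reduction to the linear erosion property, the geometric bookkeeping (spacetime extent $O(wn^k)$ for the cluster, vertical drift bounding the maximal $z$-coordinate reached by $cn^k$, hence the condition $k<\log_n(Z/c)$), and the role of $v>2$ relative to the advection speed all track the paper's argument. But there is a genuine gap in your isolation step. You assert that the buffer of thickness $bn^k$ leaves $\mcc^{(k)}$ ``sitting in an otherwise empty region,'' so that the dynamics restricted to it is that of an isolated blob in empty space. This is only true for $k=0$. By definition~\ref{def:noise_hier}, a $k$-cluster is a cluster of the \emph{level-$k$} noise set $\sfN_k$: its buffer is guaranteed free of other points of $\sfN_k$, but it generically contains many lower-level clusters (points of $\sfN\setminus\sfN_k$). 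Those defects, and the messages they emit, live inside the buffer and can a priori screen or redirect the defects of $\mcc^{(k)}$, or fuse with them. (If you instead read the lemma as applying only to clusters whose buffer is free of \emph{all} noise, the statement becomes useless for the threshold proof, since typical noise realizations do not decompose into such clusters; the whole point of the hierarchy and of \eqref{rarenoise} is that buffers are only clean at the matching level.)

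The paper closes this gap by induction on $k$: assuming every cluster of level $k'<k$ annihilates internally, within a time and a ball of size $O(n^{k'})$, the lower-level clusters inside the buffer of $\mcc^{(k)}$ clean themselves up locally before they can interfere---this is exactly where the ratio $b/w$ being large enough matters---after which the correction of $\mcc^{(k)}$ can be analyzed as if the ambient noise were $(k-1)$-sparse. Your argument needs this inductive layer; without it, the appeal to erosion of an isolated configuration is unjustified for every $k\geq 1$. A minor further point: the maximal vertical defect speed is $2+1=3$ rather than $2$, since the feedback step can move a defect one additional site along $\pm\uvz$ on top of the drift; this only changes the value of the constant $c$.
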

	As a consequence, noise containing only clusters of level $k < k_Z$ can never cause a logical failure. The proof is quite similar to the proof of linear erosion under message-passing dynamics given in \cite{lake2025fast}, and we defer it to appendix~\ref{app:details}. 
	
	In light of this result, the back wall can be regarded as undergoing $d$-dimensional message passing decoding in the presence of noise that contains no clusters of level less than $k_Z$. Roughly speaking, this then corresponds to a renormalized noise model with error rate 
	\be p_Z = p^{2^{k_Z}} = p^{(Z/c)^\z},\qq \z = \log_n(2).\ee 
	Since a logical error can only occur when the back wall has accumulated a nonzero number of errors, this gives the easy lower bound 
	\be \tmem \geq a\frac{p_Z}{L^d} = ae^{\ln(1/p) (Z/c)^\z - d \ln L},\ee 
	where $a$ is an unimportant $O(1)$ constant. For any $\l>1$, this then produces a threshold with 
	\be \tmem = (p_c/p)^{\O((\log L)^\l)},\ee 
	provided 
	\be Z = \O((\log L)^{\l/\z}).\ee  
	
	This simple bound already produces a superpolynomial memory lifetime for $Z = \O(\polylog(L))$, but it is much too conservative, since a single $k_Z$-cluster that survives to hit the back wall cannot by itself cause an error (provided $Z \ll L$), and instead will itself be corrected by the message passing dynamics on the back wall with high probability. 
	To get a better bound, we may use essentially the same argument as in sec.~\ref{ss:nogo_message_passing}, except with $p$ replaced by $p_Z$. In this argument, one uses that the typical spacing between $k_Z$-clusters that survive to imprint themselves on the back wall is $\sim p_Z^{-1/d}$, and a $k$-cluster must then have level 
	\be k > \log_n(cp_Z^{-1/d})\ee 
	in order to not be ballistically shrunk by the message-passing dynamics. Creating such a cluster requires a noise event of weight at least  $\sim2^{\log_n(p_Z^{-1/d})}$, which becomes equal to the weight of a $\lfloor\log_n(L)\rfloor$-cluster when $Z = \polylog(L)$, thereby giving the scaling in \eqref{tmemscaling}.  
	
	We now make this argument more rigorous. To do so, we need some definitions. We will first construct a graph defined from the pattern of defects arranged on the back wall. Let $\{\mcc_\a\}$ denote the set of spacetime noise clusters, and let $\bw(\mcc_\a,t)$ denote the set of defects on the back wall at time $t$ which originated from cluster $\mcc_\a$ (an association of a back-wall defect to a unique noise cluster is well-defined because the message-passing dynamics on the back wall never creates new defects). Form a graph $\scg(t)$ as follows: 
	\begin{itemize}
		\item The nodes of $\scg(t)$ are those clusters such that $\bw(\mcc_\a,t) \neq \emp$. 
		\item If both $\mcc_\a$ and $\mcc_\b$ are nodes in $\scg(t)$, an edge is drawn between these nodes if a defect in $\mcc_\a$ has fused with one in $\mcc_\b$ by time $t$. 
	\end{itemize}
	
	\begin{definition}[clumps]
		Let $\scg_\l(t)$ be a particular connected component of $\scg(t)$. A clump $\scc_\l$ at time $t$ is the set of back-wall defects contained in the clusters constituting this component at this time: 
		\be \scc_\l = \bigcup_{\mcc_\a \in \scg_\l(t)} \bw(\mcc_\a,t).\ee 
		We say that $\scc_\l$ has size $r$ if $\scc_\l$ is contained in an $\infty$-norm ball of width $r$. 
	\end{definition}
	
	The occurrence of a logical error mandates that a clump of size $\ct(L)$ be created by the noise and a subsequent noiseless evolution (the latter coming from the $\mcd_{\sf off}$ in \eqref{tmemdef}). To show how unlikely such an event is, we need the following lemma: 
	\begin{lemma}\label{lemma:kmax}
		Consider an event where $m$ noise clusters $\{\mcc_\a^{(k_\a)}\}$ of levels $k_\a$ create a clump of size $\ell$. Then there exist $O(1)$ positive constants $\g,g$ such that 
		\be \label{kmax} \max_\a k_\a \geq \max \(\log_n(\ell/(gm^\g)),k_Z\).\ee 
	\end{lemma}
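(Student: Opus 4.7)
The plan is to establish the two lower bounds in \eqref{kmax} separately. The first, $\max_\a k_\a \geq k_Z$, follows immediately from Lemma~\ref{lemma:cluster_erosion}: that lemma guarantees that any $k$-cluster with $k < k_Z$ is entirely self-annihilated in the bulk before any of its defects reach the back wall, so it cannot contribute to $\bw(\mcc_\a,t)$ and cannot appear as a node of $\scg_\l(t)$. Every cluster participating in the clump therefore satisfies $k_\a \geq k_Z$.

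For the second bound, $\max_\a k_\a \geq \log_n(\ell/(gm^\g))$, I would proceed in two stages: first bound the spatial reach of each individual cluster, then use a spanning-tree walk on $\scg_\l(t)$ to bound the clump diameter by a sum of reaches. Define the reach $R_\a$ of $\mcc_\a^{(k_\a)}$ so that $\bw(\mcc_\a,t)$ is contained in an $\infty$-ball of radius $R_\a$ about the spatial projection of $\mcc_\a$'s center. Two contributions combine: (i) the spacetime diameter of $\mcc_\a^{(k_\a)}$ is at most $w n^{k_\a}$, so the initial back-wall footprint of its defects is contained in a ball of radius $\leq wn^{k_\a}/2$; and (ii) once on the back wall, these defects evolve only under noiseless $d$-dimensional message-passing dynamics, and by the same linear-erosion estimate used to prove Lemma~\ref{lemma:cluster_erosion} any such defect can travel at most $c' n^{k_\a}$ before annihilating with another. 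Hence $R_\a \leq (w/2 + c') n^{k_\a}$ for an $O(1)$ constant $c'$.

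Next, pick any spanning tree of $\scg_\l(t)$. Each tree edge from $\mcc_\a$ to $\mcc_\b$ certifies a defect from each has coincided at some time $t' \leq t$, so by the triangle inequality the distance between their spatial centers is at most $R_\a + R_\b$. Telescoping along a path in the spanning tree between the extremal centers, the diameter of $\bigcup_\a \bw(\mcc_\a,t)$ is at most a constant multiple of $\sum_\a R_\a$, giving
\be \ell \leq C \sum_{\a=1}^m R_\a \leq C(w/2 + c') \sum_{\a=1}^m n^{k_\a} \leq g\, m\, n^{\max_\a k_\a} \ee
for an $O(1)$ constant $g$. Taking $\log_n$ of both sides yields the claimed bound with $\g = 1$. (If bookkeeping overlaps or cumulative displacement across long spanning chains forces a more careful analysis, the worst effect is to inflate the combinatorial factor from $m$ to $m^\g$ for some $\g \geq 1$, preserving the stated form.)

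The main obstacle is the second half of the reach bound, namely controlling part (ii) in a setting where defects from $\mcc_\a$ are not evolving in isolation but coexist on the back wall with defects from many other clusters. A priori a defect from a small cluster could be pulled far from its cluster's footprint by the attractive message-passing signals of large distant clusters. Ruling this out requires adapting the pairing/erosion bookkeeping of Lemma~\ref{lemma:cluster_erosion} to a multi-cluster defect configuration on the back wall, so that each defect's travel distance can still be charged to the level of the cluster it came from; this is where the bulk of the technical work will reside.
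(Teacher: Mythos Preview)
Your handling of the $k_Z$ bound is fine and matches the paper. The second bound, however, has a genuine gap exactly where you flag it, and the gap is not just a bookkeeping nuisance: claim (ii) is false as stated. A defect from a small cluster $\mcc_\a$ on the back wall can be pulled by a neighboring defect from a much larger cluster $\mcc_\b$ over distances of order $n^{k_\b}\gg n^{k_\a}$; its travel is bounded by its distance to its nearest neighbor, not by the scale of its own cluster. So the per-cluster reach bound $R_\a\leq c'' n^{k_\a}$ cannot hold in the multi-cluster setting, and the spanning-tree telescoping built on it collapses. Your proposed fix---``charging each defect's travel distance to the level of its own cluster''---is precisely what does not work.

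The paper takes a different route that sidesteps reach entirely. Rather than bounding how far individual defects wander, it tracks how clump \emph{diameters} combine at merge events. To the clump $\scc$ it associates a binary tree whose leaves are the constituent noise clusters and whose internal nodes are intermediate clumps formed by pairwise mergers. The key dynamical input (from linear erosion at $v=\infty$) is that when two clumps $\scc_1,\scc_2$ first begin to interact, the time until they merge is at most $c\min(r(\scc_1),r(\scc_2))$, which yields the combination inequality
\be r(\scc_{1\odot2})\leq \max(r(\scc_1),r(\scc_2))+c\min(r(\scc_1),r(\scc_2)).\ee
Maximizing this recursion over trees with $m$ leaves (the maximum is attained on the balanced tree) gives $r(\scc)\leq wn^{k_{\max}}(1+c)^{\lceil\log_2 m\rceil}$, hence $\g=\log_2(1+c)>1$. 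Note that this exponent exceeding $1$ is not an artifact of loose bookkeeping: the balanced-tree recursion genuinely allows the clump diameter to grow multiplicatively at each merger, so your claimed $\g=1$ would require a sharper dynamical estimate than the paper provides (and than your reach argument can supply). The substantive idea you are missing is the $\max+c\cdot\min$ merge inequality and the tree recursion built on it.
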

	This means that, when $m$ is small, creating a clump of size $\ell$ requires that at least one of the clusters involved in its assemblage be not too much smaller than $\ell$. The proof is given in appendix~\ref{app:details}. 
	
	We now use this result to complete the proof of theorem~\ref{thm:thresh}. In what follows, otherwise-undefined roman letters will represent unimportant $O(1)$ positive constants ($d$ will continue to denote the spatial dimension). 
	\begin{proof} 
		As mentioned above, a logical failure can occur only when a clump of linear size $cL$ forms on the back wall. Consider a scenario where such a clump is formed from $m$ noise clusters at levels $k\geq k_Z$, where all the noise clusters that reach the back wall between the formation of the clump and the previous time where the back wall contained no clumps are included. From lemma \ref{lemma:kmax}, at least one of these clusters be of level $\kmax \geq \log_n(o L / m^\g)$, and all the rest trivially must be at least of level $k_Z$. The total weight $w$ of the noise in this set of clusters is thus at least 
		\be w \geq  (m-1)2^{k_Z} + \(\frac{oL}{m^\g}\)^\z\ee 
		where recall that $\z = \log_n(2)$. 
		
		We now need to determine the entropy of such a set of clusters. Each of the $m$ clusters must be located within a temporal distance of $fL$ of another such cluster: if this were not true, linear erosion implies that the clump would have been eliminated by back-wall message passing. Since there are at most $\log_nL$ different levels of clusters to choose from, the total number of ways $N_m$ of getting $m$ such clusters is very loosely upper bounded by 
		\be N_m \leq (fL)^{m(d+1)} (\log_n(L))^m.\ee 
		Therefore, the probability $p_{\sf err}(T)$ of an error happening in time $T$ is 
		\bea p_{\sf err}(T) & \leq T \sum_{m=1}^\infty \exp\big(m(d+1)\ln(fL) + m\log_n(L) \\ & \qq - \ln(p_c/p) \((m-1)Z^\z+ (oLm^{-\g})^\z \)\big). \eea 
		Assuming $Z^\z > q\ln(L)/\ln(p_c/p)$, the saddle point is at 
		\be m_* = \ct((L/Z)^{\z / (1+\g \z)}),\ee 
		and so
		\be p_{\sf err}(T) \leq T p^{\O(L^{\z / (1+\g \z)} Z^{\g \z^2 / (1+\g \z) })}, \ee 
		which gives the desired result. 
	\end{proof}

	\ms 
	\begin{corollary}[generalization to arbitrary Abelian anyon models] \label{cor:generalization}
		Consider using the same message-passing architecture to perform error correction on an arbitrary topological code containing only Abelian anyons: all defects emit messages (regardless of their associated topological charge), and defect motion is additionally independent of topological charge. In this setting, the results of theorem~\ref{thm:thresh} hold. 
	\end{corollary}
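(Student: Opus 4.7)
The plan is to observe that the arguments establishing theorem~\ref{thm:thresh} are almost entirely geometric in character, depending on the clustering structure of the noise and the erosion dynamics of defects, but not on the specific fusion rules beyond the property that defects generated by a single connected noise event fuse back to the vacuum. I would propagate this observation through lemma~\ref{lemma:cluster_erosion}, lemma~\ref{lemma:kmax}, and the final counting argument, checking at each step that no use was made of the $\zz_2$ structure beyond this fusion property.

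First I would generalize the erosion lemma. The key input is that any noise cluster $\mcc^{(k)}$ acts on the code as a product of local string operators applied to the codespace, so the defects it produces have trivial total topological charge. Under the charge-blind message-passing dynamics, defects within the cluster move according to the same geometric rules as in the $\zz_2$ setting: each defect sources the same type of messages, and the argmin-based direction selection is unchanged. Whenever two defects meet, the Abelian fusion rule collapses them into a single defect carrying the product charge, strictly reducing the defect count; crucially, the resulting defect (if nontrivial) still emits messages and participates in the subsequent dynamics. By associativity and commutativity of Abelian fusion, repeated coalescence within the cluster eventually merges all its defects at a single spacetime point whose charge equals the total charge of the cluster, which is trivial. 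Hence the defects of the cluster disappear before reaching the back wall within exactly the same time window in which the $\zz_2$ erosion argument would have pair-annihilated them.

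Next I would verify that lemma~\ref{lemma:kmax} and the clump-counting argument in the proof of theorem~\ref{thm:thresh} transfer verbatim. Both of these statements concern only the sizes, multiplicities, and spacetime arrangement of noise clusters relative to the clustering hierarchy of definitions~\ref{def:clustering}--\ref{def:noise_hier}, together with the $p$-bounded hypothesis; neither step invokes the Pauli algebra or the specific identification of defects as $\zz_2$-valued syndromes. Combining the generalized erosion bound with these unchanged estimates reproduces the lifetime scaling $\tmem = (p_c/p)^{\O(L^\b)}$.

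The main obstacle I anticipate is the coalescence step in the extended erosion argument: one must confirm that charge-blind direction selection, applied to defects of mutually distinct topological charges, does not cause subsets of a cluster's defects to systematically fly apart instead of merging. This is handled by observing that the direction rule is determined solely by the spatial arrangement of defects (via their emitted messages) and is identical to the rule analyzed in \cite{lake2025fast}, so the geometric bounds controlling the rate of cluster shrinkage apply unchanged. The Abelian property enters only at the end, to certify that the final fused anyon carries vacuum charge; all intermediate fusions produce single defects, ensuring that the defect count is monotonically nonincreasing throughout and that the transferred geometric analysis faithfully bounds the lifetime of the cluster's defects.
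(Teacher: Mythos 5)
Your proposal follows essentially the same route as the paper's (very short) proof: the clustering, erosion, and counting arguments are charge-blind and transfer verbatim, and Abelian fusion guarantees that the defects of an isolated cluster coalesce into a vacuum-charge object. That part is fine, and your observation that intermediate fusions keep the defect count nonincreasing is a correct and useful elaboration of what the paper leaves implicit.

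However, you have located the use of Abelianness only at the syndrome level, and this misses the one point the paper's proof explicitly invokes: the \emph{correction operators} applied to the physical qubits when defects at different $z$ coordinates are moved must commute up to a phase. In this architecture the buffered history means that string-operator corrections associated with defects living at different depths $z$ (and hence originating from different times) are applied to the code in an order determined by the dynamics, not by the causal order of the errors. Clearing the spacetime syndrome history is not by itself enough to conclude the absence of a logical error; one also needs the product of all applied feedback operators to be a well-defined total correction homologically equivalent to the error, independent of the order in which the various layers emit their corrections. For an Abelian anyon model the relevant string operators commute up to phases (which do not affect the channel), so this holds; for a non-Abelian model it would fail even though the charge-blind syndrome dynamics you analyze would look identical. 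Your proof should add this operator-level consistency step to close the argument.
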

	
	\begin{proof}
		This is a direct consequence of clustering and the fact that since the anyons are Abelian, the correction operators applied when defects at different $z$ coordinates are moved commute up to a phase (see also the discussion in \cite{lake2025fast}). 
	\end{proof}
	
	\ss{Desynchronization} \label{ss:lind} 
	
	
	As discussed in the introduction, synchronicity is a nonlocal resource, which a self-organized error-correction process must not rely on. In what follows we discuss various desynchronization schemes to remove this dependence. 
	
	\sss{Asynchronous Non-Markovian updates} \label{sss:asynch_nonmark}
	
	In order to rigorously prove a threshold in a model with asynchronous updates, it is most convenient to work with a {\it non-Markovian} continuous-time update scheme. In this scheme, each processor $\sfP_\bfr$ updates after waiting times $\d_\bfr$ that are i.i.d random variables distributed as 
	\be \label{nonmarkupds} \d_\bfr \sim [1-\ep,1+\ep],\ee 
	where $\ep < 1$ is a positive constant. This update schedule is fully asynchronous, but mandates that a small counter be present at each site to keep track of the time elapsed since the last update.\footnote{Since we are already assuming access to reliable classical bits on each site, we will not be concerned with this small additional requirement.} 
	
	To analyze this update scheme, we need to extend our noise models to the present continuous-time setting. We will do this by marking out the points in spacetime where each processor updates,  and working with an effectively discrete-time $p$-bounded noise model defined on this set of spacetime points (for the techniques used to get a noise model like this from a more realistic continuous-time process see e.g. \cite{gottesman2024surviving}). 
	
	The point of using this update scheme is that in a time interval of duration $T$, the minimal and maximal number of updates any given processor makes are bounded by quantities proportional to $T$. This makes our threshold proof extend quite directly to the asynchronous case.
	
	To simplify the proofs, we will need to first slightly embellish the automaton rules to prevent defects from being moved too fast. This is done by enlarging the classical variable space to include ``cooldown'' variables $c_\bfx \in \{0,1\}$, and adding the following additional automaton rules: 
	\begin{enumerate}
		\item If $s_\bfx=-1$, feedback may only be applied to move the defect at $\bfx$ to $\bfx'$ if $c_\bfx = 0$. 
		\item If an update applies feedback to move a defect to $\bfx'$, then $c_{\bfx'} \mt 1$. 
		\item $c_\bfx$ is reset to $0$ each time $\bfx$ updates. 
	\end{enumerate}
	This prevents a conspiratorial arrangement of update times (arranged in a ``staircase'' with small step heights) from moving defects arbitrarily quickly. Note that we do not include cooldown fields for the $m^{\pm a}_\bfx$, so the messages in principle may be transmitted arbitrarily quickly (but not arbitrarily slowly). 
	We take the vertical drift along the $+\uvz$ direction to be implemented on messages and defects each time they move (again taking each drift step to be by two lattice sites). 
	
	There are several ways of implementing a message velocity $v>1$ in this setting. We will choose to simply take a message propagation step at $\bfx$ to result in the messages $m^{\pm a}_\bfx$ propagating to all sites that they would have propagated to under $v$ steps of synchronous message passing. This choice ensures that the longest time $t_{\sf max}(l)$ a message can take to propagate a distance $l$ is 
	\be\label{vmin} t_{\sf max}(l) = l\frac{1+\ep}{v}.\ee 
	We may thus think of the messages as having a velocity distributed between $v/(1+\ep)$ and $\infty$, depending on the (stochastic) choice of update schedule. 
	
	\begin{proposition}[threshold with asynchronous updates]
		For the message-passing decoder with cooldown variables and updates applied according to \eqref{nonmarkupds}, the threshold results of theorem~\ref{thm:thresh} hold, provided $v > 2(1+\ep)/(1-\ep)$. 
	\end{proposition}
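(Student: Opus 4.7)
The plan is to reduce the asynchronous case to the synchronous threshold argument of Theorem~\ref{thm:thresh} by extracting quantitative worst-case speed bounds from the cooldown rules and the update schedule \eqref{nonmarkupds}. The proof of Theorem~\ref{thm:thresh} only invokes synchronicity through the linear erosion statement of Lemma~\ref{lemma:cluster_erosion}; once an asynchronous analogue of erosion is established, the remainder (clustering of noise into $k$-clusters, the clump entropy counting of Lemma~\ref{lemma:kmax}, and the saddle-point bound on $p_{\sf err}(T)$) transfers verbatim, since none of those ingredients uses synchronicity.

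First I would extract two speed bounds. Because the waiting times $\d_\bfr$ are supported in $[1-\ep,1+\ep]$, any processor performs at most $T/(1-\ep)+1$ and at least $T/(1+\ep)-1$ updates in a window of length $T$. The cooldown rules imply that a defect entering a site $\bfx$ must wait through at least one full update cycle of $\bfx$ before it can leave, so its maximum lattice speed is bounded above by $1/(1-\ep)$ sites per unit time. On the other hand, each update at $\bfx$ advances the messages $m^{\pm a}_\bfx$ by $v$ lattice sites in one shot (per the convention surrounding \eqref{vmin}), so the minimum message speed is $v/(1+\ep)$. The hypothesis $v > 2(1+\ep)/(1-\ep)$ is then precisely the statement that the minimum message speed exceeds twice the maximum defect speed, which is the asynchronous analogue of the $v>2$ gap exploited in the proof of Lemma~\ref{lemma:cluster_erosion}.

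Next I would reprove Lemma~\ref{lemma:cluster_erosion} in the continuous-time setting by rerunning the synchronous argument with these worst-case speeds. The original erosion proof shows that after all noise events in a $k$-cluster of size $O(wn^k)$ have concluded, messages emitted by the cluster's outermost defects catch up to and annihilate those defects before they can drift away, with net erosion rate $v-2$ per synchronous step. Replacing the two relevant speeds by their asynchronous bounds gives a net erosion rate of at least $v/(1+\ep)-2/(1-\ep)$, which is positive exactly under the hypothesized bound on $v$. A parallel calculation shows that the vertical drift speed toward the back wall is bounded between $2/(1+\ep)$ and $2/(1-\ep)$, so one only needs to re-verify that the cluster is annihilated before any of its defects can traverse the buffer; this recovers the same asymptotic form $k < k_Z = \log_n(Z/c)$ with a modified constant $c = c(\ep)$. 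Feeding this asynchronous erosion lemma into the clustering and clump-entropy arguments of Section~\ref{ss:thresholds} then yields the stated scaling $\tmem = (p_c/p)^{\O(L^\b)}$, and the generalization to arbitrary Abelian anyon models follows as in Corollary~\ref{cor:generalization}.

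The main obstacle is making the erosion step rigorous against adversarial orderings of updates: one must verify that no arrangement of the $\d_\bfr$ consistent with $[1-\ep,1+\ep]$ can locally violate the speed separation on any individual cluster. This requires tracking worst-case (not average) trajectories for each defect and each message and carefully ruling out the ``staircase'' update configurations that motivated introducing the cooldown variables in the first place; in particular, one must check that the rule sequencing in the bulk update cycle \eqref{rg_cycle} together with the cooldown reset really does prevent a defect from being relayed across several sites within a single message-propagation window. A secondary subtlety is that the modified constants $c(\ep)$ and the boundary corrections in the update counts must be threaded through the saddle-point calculation to confirm that the exponent $\b$ in the final bound remains bounded below by a positive $\ep$-dependent constant, rather than being driven to zero by the slowdown of the dynamics.
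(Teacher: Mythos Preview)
Your proposal is correct and follows essentially the same approach as the paper: both extract worst-case defect and message speeds from the bounded waiting times and cooldown rules (defect speed $\leq 1/(1-\ep)$, message speed $\geq v/(1+\ep)$), identify the hypothesis on $v$ as the condition that the message speed exceeds twice the defect speed, rerun the erosion argument of Lemma~\ref{lemma:cluster_erosion} with these constants, and then observe that Lemma~\ref{lemma:kmax} and the clump-entropy calculation carry over unchanged. Your discussion of the obstacles (adversarial update orderings, staircase configurations, threading $\ep$-dependent constants) is in fact more careful than the paper's own treatment, which dispatches these points in a sentence.
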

	\begin{proof}
		In a time interval $T$, any given processor experiences $n_{{\sf up}}(T)$ updates, where $n_{\sf up}(T)$ satisfies 
		\be \label{nupbounds} \frac{T}{1+\ep} \leq n_{\sf up}(T) \leq \frac{T}{1-\ep}.\ee 
		This ensures that the correction of a cluster of defects is (at worst) only slowed down by a constant factor, which does not compromise either lemma~\ref{lemma:cluster_erosion} or lemma~\ref{lemma:kmax}, and hence does not change the existence of a threshold (although it will in general reduce the value of $p_c$). 
		
		In slightly more detail, consider showing that  lemma~\ref{lemma:cluster_erosion} continues to hold with this update scheme. Let $\De a(t)$ denote the amount by which a particular defect moves along the $\pm\uva$ direction during a time $t$.
		Because of the cooldown fields,
		\be \label{deltaa} \De a(t) \leq \frac{(2\d_{a,z} + 1)t}{1-\ep} .\ee 
		Consider a 0-cluster $\mcc^{(0)}$. If the first defect in $\mcc^{(0)}$ enters the system at time $t_0$, all of the defects it contains will be at $z>0$ by time $t_0 + w(1+\ep)$, and by this time they will be contained within a ball of linear size not more than $3w(1+\ep)/(1-\ep)$. Because of \eqref{vmin}, by a time $cw$ all defects in $\mcc^{(0)}$ will be at $z>0$, and each will have another defect in $\mcc^{(0)}$ in its past light front. A straightforward generalization of the erosion property proved in \cite{lake2025fast} then shows that defects in $\mcc^{(0)}$ annihilate among themselves before reaching defects in any other cluster. The synchronous version of this result requires that the maximum relative velocity between defects be less than the message velocity, which in the present setting translates into the inequality (using \eqref{vmin} and \eqref{deltaa})
		\be \frac2{1-\ep} < \frac{v}{1+\ep}.\ee 		
		The rest of the argument then proceeds as in the synchronous case, and the generalization of lemma~\ref{lemma:kmax} and the subsequent arguments in the proof of theorem~\ref{thm:thresh} are similar. 
		
		This shows that lemma~\ref{lemma:cluster_erosion} holds in the present setting. The remainder of the proof is basically identical to that of theorem~\ref{thm:thresh}; the only things that change are the unimportant $O(1)$ constants appearing at various places. 
	\end{proof}

	\sss{Poissonian desynchronization} 
	
	From a physical perspective, the most natural asynchronous scheme is a Markovian one whereby each processor updates according to independent Poisson processes. This allows the dynamics to be straightforwardly implementable by a local Lindbladian, but makes proofs of thresholds rather cumbersome due to the ballistic correction of isolated clusters of defects now being only an event which holds with high probability. Due to its simplicity, we will use this scheme when numerically studying the performance of our decoders in Sec.~\ref{ss:threshold_numerics}.
	
	\sss{Marching soldiers}
	
	A simple way of desynchronizing synchronous computations is the {\it marching soldiers} desynchronization scheme (see e.g. \cite{gacs2001reliable,cook2008self}), which was used in ref.~\cite{lake2025fast} to convert synchronous offline decoders into asynchronous Lindbladian ones. 
	As a brief review, the marching soldiers scheme works by having each processor $\sfP_\bfr$ propose updates---which may be rejected---according to independent Poisson processes of a fixed common rate. Let $\tsim(\bfr,t)$ denote the number of updates of the synchronous computation that processor $\sfP_\bfr$ has simulated. In the absence of noise occuring during the computation, any desired synchronous computation can be simulated in the asynchronous system by requiring that $\sfP_\bfr$ accept a proposed update only if $\tsim(\bfr,t) \leq \tsim(\bfr',t)$ for all sites $\bfr'$ neighboring $\bfr$. $\sfP_\bfr$ thus advances only if doing so would not ``leave behind'' any of its neighboring processors.

	Let us refer to the set $\{ \tsim(\bfr,t)\}$ at a fixed $t$ as the {\it simulation surface} of the computation at time $t$. 
	One may show that with high probability, any point $\bfr$ on the simulation surface has $a t \leq \tsim(\bfr,t) \leq b t$ for $O(1)$ constants $a,b$ \cite{berman1988investigations,cook2008self,lake2025fast}, so that the rate at which the asynchronous computation progresses differs from that of the synchronous computation by only a constant factor. However, this fact is {\it not} enough to guarantee that the asynchronous computation is reliable in the presence of noise. This is because processors located in regions of the simulation surface where long slopes are present will go long periods of time without updating, and will thus accumulate errors at a higher rate than they would in the synchronous computation. 
	
	To understand the prevalence of these ``delay regions'', it is illuminating to note that the manner by which the simulation surface grows is identical to the way surfaces grow in the restricted solid-on-solid model of surface growth \cite{kim1989growth}. A renormalization group analysis \cite{kim1989growth} shows that the surface is always algebraically rough,\footnote{The {\it equilibrium} restricted solid-on-solid model has a smooth phase. Out of equilibrium (the setting relevant to marching soldiers) there is nothing to prevent the KPZ term from being generated under RG; once generated, it precludes a smooth phase.} here meaning that the connected correlation functions of the simulation time behave as $\lan \tsim(\bfr,t) \tsim(\bfr',t)\ran_c \sim |\bfr-\bfr'|^{2\l}$, where $\l > 0$ is a (dimension-dependent) constant. This roughness complicates a rigorous study, and is perhaps why ref.~\cite{cook2008self} was unable to resolve the question of whether or not the  marching soldiers scheme can be used for  synchronization in the setting of active error correction. 
	
	\subsection{Gerrymandering}\label{ss:gerry}
	
	In this section we show that a price must be paid for doing decoding locally, in that the sub-threshold logical error rate is not suppressed with $L$ as quickly as it is under optimal (nonlocal) decoding. This property in fact holds for all local decoders in the literature the author is aware of, including classical memories like Toom's rule, and will be discussed in more detail in future work.
	
	We begin with some definitions: 
	\ms 
	\begin{definition}[distance of a decoder]
		Consider a decoder $\mcd$ for a code $\mcc$. The distance $d_\mcd$ of the decoder is the smallest weight of a physical error that causes a logical error.\footnote{For active decoding, by ``causing a logical error'' we mean that a logical error is produced after running an appropriate offline decoder (which in this discussion will always be taken to be the active decoder in the absence of noise).}
	\end{definition}
	
	For optimal decoding we have $d_\mcd = \ct(d_\mcc)$, where the code distance $d_\mcc$ is the smallest weight of a nontrivial logical operator (and $\ct$ denotes scaling with respect to $L$). In general though, $d_\mcd$ may be asymptotically less than $d_\mcc$. When this occurs, we will refer to $\mcd$ as being ``Gerrymandered'', since a small minority of errors can lead to a logical failure:  
	\ms 
	\begin{definition}[Gerrymandering] \label{def:gerry}
		Consider a code $\mcc$ with code distance $d_\mcc$ and a decoder with distance $d_\mcd$. $\mcd$ is said to be {\it Gerrymandered} if 
		\be d_\mcd = o(d_\mcc).\ee 
	\end{definition}
	Note that in our definition we require $d_\mcd$ to be asymptotically (in the $L\ra\infty$ limit) less than $d_\mcc$; just being smaller by a constant amount does not count. 
	
	In the following, we show that not only are the message passing decoders Gerrymandered, but that under i.i.d noise, the entropy of the Gerrymandered events is large enough to affect the asymptotic scaling of $\tmem$ as $L\ra\infty$: 
	\ms 
	\begin{theorem}[message passing decoders are Gerrymandered on average]\label{thm:typ_gerry}
		Consider the limit where $L$ is taken to $\infty$ with $p < p_c$ held fixed. Then under i.i.d noise, there exists a constant $p_* \leq p_c$ such that when $p<p_*$, 
		\be \label{ploggerry} \tmem = (p_c/p)^{o(L^{\log_6(5) + \ep})} \ee 
		for any $\ep > 0$.
	\end{theorem}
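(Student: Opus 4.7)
The plan is to combine the explicit Cantor-set Gerrymandering construction of \cite{lake2025fast} with the clustering analysis of Theorem \ref{thm:thresh}, now run to produce an \emph{upper} bound on $\tmem$. The strategy is to exhibit a family of spacetime noise events, each guaranteed to cause a logical failure, whose total probability per unit time is at least $p^{|S|(1+o(1))}$ for a Cantor set $S$ with $|S| = \ct(L^{\log_6 5})$.

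First I would invoke the construction from \cite{lake2025fast}: there is an explicit pattern of errors $S$ with $|S| = \ct(L^{\log_6 5})$ such that an instantaneous application of $S$ followed by the offline message-passing decoder $\mcd_{\sf off}$ produces a logical failure. The exponent $\log_6 5$ is the fractal dimension of the recursive construction (partition each interval into six sub-intervals and keep five). This gives $d_\mcd = O(L^{\log_6 5}) = o(L)$ at once, establishing Gerrymandering in the sense of Definition \ref{def:gerry}.

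Second I would produce a spacetime noise event that reproduces $S$ on the back wall. Within a spacetime window of temporal depth $\tau = \ct(Z)$, define $\mce_S$ to be the event that the noise contains one isolated point error at each translated copy of a site in $S$, with no other level-$k_Z$ noise in a $\polylog(L)$-neighborhood of these points. By the linear-erosion property underlying Lemma \ref{lemma:cluster_erosion}, lower-level clusters elsewhere self-correct, while the isolated errors propagate unchanged through the buffer and land on the back wall in exactly the Cantor pattern, forcing a logical failure.

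The main technical step, and the source of the $\ep$, is lower-bounding $\sfP(\mce_S)$ under i.i.d.\ noise. I would use the clustering bound $p_k \leq (p/p_c)^{2^k}$ from \eqref{rarenoise}: a fixed $\polylog(L)$-neighborhood of a point contains a level-$k_Z$ defect with probability $O(p_{k_Z}\polylog L)$, which is $\ll 1$ once $p<p_\ast$ with $p_\ast$ small enough. By FKG for the increasing ``quiet neighborhood'' events, the joint probability that all $|S|$ neighborhoods are simultaneously quiet is at least $\exp(-O(|S|\,p_{k_Z}\polylog L))$, which is subleading to the $p^{|S|}$ factor from placing the required errors. Summing over the $\ct(L^d T)$ translations of $\mce_S$ in time $T$ and demanding the total failure probability reach $1/2$ gives $T \leq (p_c/p)^{|S|(1+o(1))}$. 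Since $|S|(1+o(1)) = o(L^{\log_6 5 + \ep})$ for every $\ep > 0$, this yields \eqref{ploggerry}.
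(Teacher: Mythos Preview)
Your overall strategy matches the paper's, but there is a real gap in the robustness step. You use the fully iterated Cantor string $S$, whose smallest segments have $O(1)$ length, and then condition only on the absence of level-$k_Z$ noise in a $\polylog(L)$-neighborhood. This still permits arbitrary low-level clusters (in particular, size-$O(1)$ clusters) adjacent to the Cantor defects. Such clusters can change which defect is the nearest neighbor of a given Cantor endpoint at the lowest level of the hierarchy, spoiling the specific sequence of wrong matchings that the Cantor construction relies on. Your claim that ``isolated errors propagate unchanged through the buffer and land on the back wall in exactly the Cantor pattern'' is also not what happens: the buffer performs message passing on these defects, and the point is that this processing effects the wrong matching at every scale---not that the pattern is transported intact.

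The paper's fix is to \emph{coarsen} the Cantor string: halt the recursion when the remaining segments have length $(\log L)^\l$, so that $w(\mcs) = L^{\log_6 5}(\log L)^{\l(1-\log_6 5)}$ is still $o(L^{\log_6 5+\ep})$. One then conditions on $\sfN\setminus\mcs$ containing no clusters of size $\geq (\log L)^\d$ with $\d<\l$; by the level-$k$ bound \eqref{rarenoise} this event has probability $1-o(1)$. Linear erosion now guarantees that the background noise perturbs each Cantor endpoint by at most $o((\log L)^\l)$ before being eliminated, after which an induction on scale shows the remaining defects fuse exactly as they would for $\sfN=\mcs$. Without the coarsening, you would need to forbid \emph{all} noise in a neighborhood of $S$ over the full $\ct(L)$ decoding time, and the probability of that event is $e^{-\O(pL\,\polylog L)}$, which swamps the $p^{|S|}$ factor and ruins the bound. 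As a minor aside, your FKG step is applied to ``no level-$k_Z$ noise here'' events, which are not obviously monotone in the underlying i.i.d.\ variables; the paper avoids this by using a union bound on the complementary event.
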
 
	The i.i.d assumption on the noise simplifies the proof but is likely not necessary (and in any case, most correlated error models one would reasonably consider would give larger values of $\plog$). Note also that in the theorem statement we have introduced the Gerrymandering scale $p_*$ as being separate from $p_c$, and our analysis does not rule out the possibility that $p_*<p_c$. The proof is given in appendix~\ref{app:details}. 
	
	\begin{figure*}
		\makebox[\textwidth][s]{%
			\hspace{\fill}%
			\includegraphics[width=.44\textwidth]{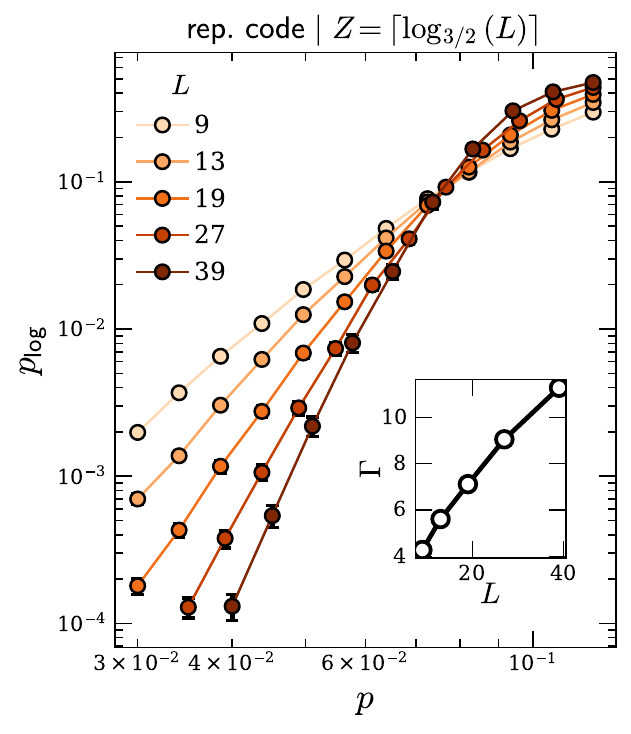}%
			\hspace{\fill}%
			\includegraphics[width=.45\textwidth]{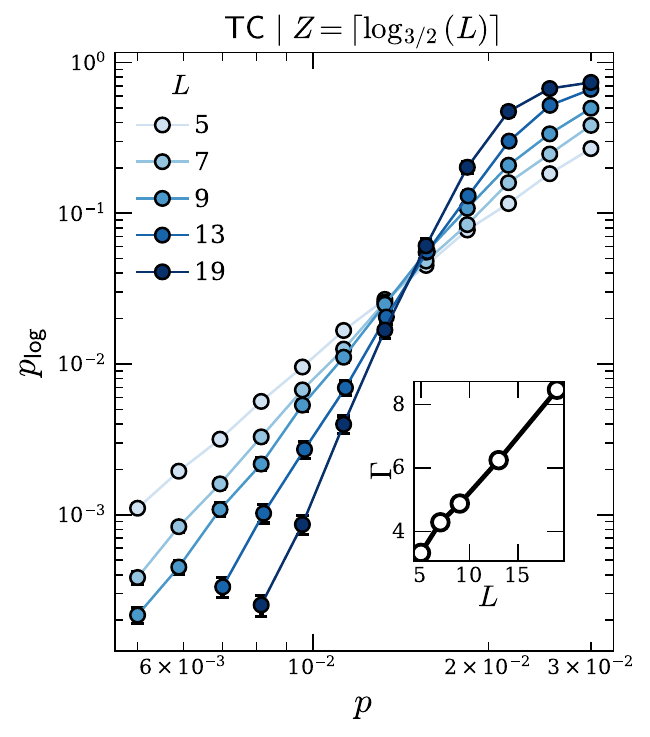}%
			\hspace{\fill}%
		}
		
		\caption{\label{fig:plog_vs_p} Scaling of the logical error rate with synchronous updates and i.i.d Pauli and measurement noise of strength $p$, shown for the $1d$ repetition code (left) and the $2d$ toric code (right). In both plots, we take $Z = \lceil \log_{3/2}(L) \rceil$. The values of $L$ are chosen to be the smallest odd integers which yield a strictly increasing sequence of $Z$ values. The insets show the power $\G$ appearing in a fit of the sub-threshold logical error rate to the form $\plog = (p/p_c)^\G$. Error bars are drawn at two standard deviations.  
		} 
	\end{figure*}

	\ss{Numerics}\label{ss:threshold_numerics}

	We now numerically study the performance of the message-passing decoders\footnote{In numerics we will use the architecture described in sec.~\ref{ss:setup}, instead of the modified (more analytically friendly) construction of sec.~\ref{ss:thresholds}.} for the $1d$ repetition code and the $2d$ toric code using Monte Carlo simulations (code generating the plots to follow is available at~\cite{code}). All of our simulations will assume i.i.d Pauli noise and measurement errors of strength $p$, and we will fix the message velocity at $v=3$ throughout. When computing $\plog$, we will decode by taking $\mcd_\off$ to be the noiseless message-passing decoder in $2d$, and a simple global majority vote in $1d$. 
	
	The scaling of $\plog$ for synchronous updates is shown in fig.~\ref{fig:plog_vs_p}. We set $Z = \lceil \log_{3/2}(L)\rceil$, the base of the logarithm being chosen so that an array of different $Z$ values may be reached without going to very large system sizes. This scaling is strictly speaking not fast enough with $L$ to guarantee a threshold from the results of theorem~\ref{thm:thresh}, but this is likely due to the rather loose bounds used during the proof (and in any case, a difference would likely not be visible at the present system sizes). 
	
	For the $1d$ repetition code a threshold is observed at 
	\be p_{c,1d}  \approx 7.5\%,\ee 
	which as expected is very close to the threshold the message-passing decoder achieves for {\it offline} decoding in the $2d$ toric code \cite{lake2025fast}, and only a few percent below the optimal value of $\approx 11\%$ \cite{Dennis_2002}. Below threshold, we fit the logical error rate to a function of the form $\plog = (p/p_c)^{\G}$ for an $L$-dependent number $\G$, whose value for different system sizes is shown in the inset. We observe a nearly-linear dependence of $\G$ on $L$, although the results of sec.~\ref{ss:gerry} mandate that $\G = o(L)$ at small enough $p$ (regardless of $Z$). 
	
	For the $2d$ toric code we find a threshold at 
	\be p_{c,2d} \approx 1.5\%,\ee 
	which is again very close to the offline decoding threshold of the message-passing decoder in $3d$, and respectably close (for a local decoder) to the optimal value of $\approx 2.9\%$ \cite{wang2003confinement,stephens2014fault}. The scaling of $\G$ with $L$ is similar as in $1d$.

	\begin{table*}[ht]
		\centering
		\renewcommand{\arraystretch}{1.5}
		\begin{tabular}{lcccc}
			
			\textbf{}  &\,  \textbf{1d, synch} \, & \, \textbf{1d, asynch} \, &\, \textbf{2d, synch} \,&\, \textbf{2d, asynch}  \, \\
			\hline\hline
			perfect measurements & $17.5\%$ & $3\%$ & $3.5\%$ & $1\%$\\ 
			\hline
			noisy measurements & $7.5\%$ & $2\%$ & $1.5\%$ &  $0.5\%$ \\ 
			\hline 			
		\end{tabular}
		\medskip 
		\caption{\label{tab:thresholds} Approximate thresholds under phenomenological noise with perfect measurements (top) and measurements with a noise rate equal to the qubit bit-flip rate (bottom); all estimates are rounded to the nearest $0.5\%$. The asynchronous decoders are simulated with Poissonian updates, and thresholds are higher when using the non-Markovian desynchronization scheme of sec.~\ref{sss:asynch_nonmark}. All quoted thresholds are obtained from numerics with $Z = \lceil \log_{3/2}(L) \rceil$. 
		}
	\end{table*}
	
	We also perform a similar analysis with perfect measurements, and for decoders with Poissonian asynchronous updates. We implement the latter by randomly choosing sites to update and performing defect-motion and message-passing automaton updates with probabilities $1/(1+v)$ and $v/(1+v)$, respectively, although it is very likely there exist other choices that yield higher thresholds. 
	Rough estimates for the values of $p_c$ obtained in these settings are tabulated in Table~\ref{tab:thresholds}. Poissonian updates  produce thresholds quite a bit lower than their synchronous counterparts, and (more significantly) have a reduced suppression in the sub-threshold error rate; an example for the $1d$ repetition code with noisy measurements is shown in fig.~\ref{fig:1dasynch}, where the values of $\G$ are approximately halved compared their synchronous counterparts. In the future it would be valuable to identify more noise-robust Markovian desynchronization schemes. 
	
	\begin{figure}
		\centering 
		\includegraphics[width=.44\textwidth]{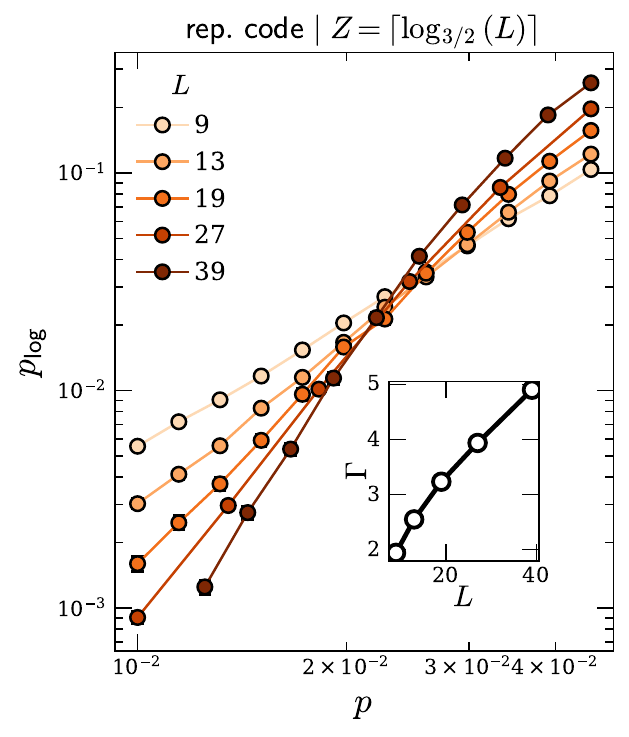}%
		\caption{\label{fig:1dasynch} Scaling of the logical error rate in the $1d$ repetition code with Poissonian asynchronous updates, under i.i.d bit-flip and measurement noise of strength $p$. As in fig.~\ref{fig:plog_vs_p} we take $Z = \lceil \log_{3/2}(L)\rceil$, and the inset plots a fit to $\plog = (p/p_c)^\G$ in the low-noise regime. }
	\end{figure}

	\ss{Phase transitions}
	
	We now briefly examine the phase transitions that occur as the memory is destroyed upon increasing $p$, which we numerically observe to be continuous. This is particularly interesting in the case of the $1d$ repetition code, as it yields a continuous phase transition with a local order parameter in one dimension.\footnote{The (logarithmically small) classical RG dimension should not be regarded as an extra dimension in this case, as the noise and order parameter are restricted to a completely $1d$ system (we also achieve stretched-exponential scaling of $\tmem$ with $L$ below threshold, which would not be achievable by e.g. placing Toom's rule on a cylinder of circumference $\log(L)$.} This is something which cannot happen in equilibrium, and the only other $1d$ cellular automaton with a continuous noise-driven transition not in the percolation universality class the author is aware of is Tsirelson's automaton \cite{balasubramanian2024local}, whose dynamics is strongly inhomogeneous (it is also possible to have phase transitions driven by changing the degree of synchronicity in the CA updates \cite{fates2007asynchronism,pajouheshgar2025exploring}; these transitions are different from the ones observed here).
	Future work will be needed to understand the nature of the phase transitions in our decoders, and here we content ourselves with a brief preliminary study of the transition in the $1d$ repetition code decoder. 
	
	\begin{figure*}
		\makebox[\textwidth][s]{%
			\hspace{\fill}%
			\includegraphics[width=.35\textwidth]{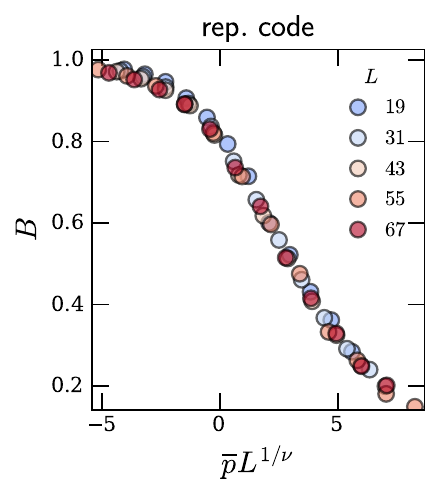}%
			\hspace{\fill}%
			\includegraphics[width=.36\textwidth]{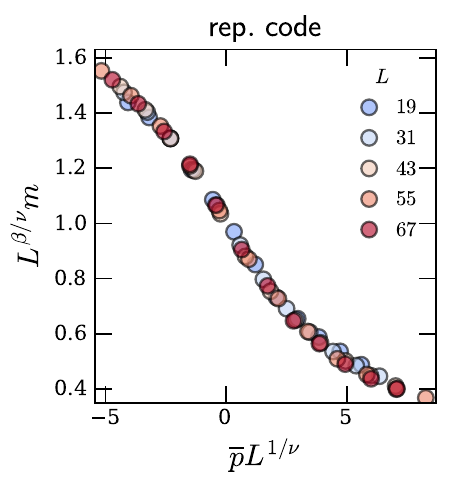}%
			\hspace{\fill}%
		}
		\caption{\label{fig:phase_transitions} Scaling collapses of the normalized Binder cumulant (left) and magnetization (right) for the 1d repetition code with synchronous updates; here $\bar{p} = p/p_c-1$ and the values of $p_c,\nu,\b$ are as in Tab.~\ref{tab:scaling_exponents}. }
	\end{figure*}
	
	As an order parameter for the phase transition, we use the expected value of the magnetization 
	\be\lan m \ran = \frac2L\lan   \bfs \cdot \bfs_{\sf corr} \ran - 1\ee 
	where $\bfs$ are the spins at $z=0$, $\bfs_{\sf corr}$ is the correction that the decoder would have applied at that time step, and $\lan \cdot \ran$ is taken in the non-equilibrium steady state. We also compute the fluctuations of the (absolute value of the) magnetization, defined as $\chi = L( \lan |m|^2\ran  - \lan |m|\ran^2 )$.
	Finally, to assist in identifying the critical value of $p$, we also compute the normalized Binder cumulant
	\be B = \frac32 - \frac{\lan m^4 \ran}{2\lan m^2\ran^2},\ee 
	which satisfies $\lim_{p \ra 0} B = 1, \lim_{p \ra 1/2} B = 0$ (the latter following from Wick's theorem), and is such that $B$ is independent of $L$ at $p_c$.

	\begin{table}[ht]
		\centering
		\renewcommand{\arraystretch}{2}
		\begin{tabular}{lcccc}
			
			\textbf{}  &  \textbf{$p_c$}\,\,\,&\,\,\,\textbf{$\nu$}\,\,\, &\,\,\,\textbf{$\b$}\,\,\, &\,\,\,\textbf{$\g$} \,\,\, \\
			\hline\hline
			synch & $7.5\%$ & $3/2$ & $1/4$ & $1$ \\ 
			\hline
			asynch & $2\%$ & $3/2$ & $1/8$ &  $ 5/4$ \\ 
			\hline 			
		\end{tabular}
		\medskip 
		\caption{\label{tab:scaling_exponents} Rough estimates for $p_c$ and scaling exponents in the 1d phase transition with (again taking measurement errors and bit-flip errors to occur with equal probability). The values are consistent with hyperscaling with $d=1$.
		}
	\end{table}
	
	We estimate the expectation values in $B,\lan m\ran $ in Monte Carlo, and perform a scaling collapse of $B$, $\lan m\ran L^{\b/\nu}$, and $\chi L^{-\g/\nu}$ against $\overline{p}L^{1/\nu}$, where we define $\overline{p} = (p_c - p)/p_c$. Rough estimates for the critical exponents $\b,\nu,\g$ obtained from this procedure are provided in  Table~\ref{tab:scaling_exponents} for both synchronous and asynchronous updates, and the scaling collapses for the magnetization and Binder cumulant are shown for synchronous updates in fig.~\ref{fig:phase_transitions}. The figure shows a decent scaling collapse for $\nu = 3/2, \b=1/4$, and we also find a respectable collapse of $\chi$ for $\g=1$, consistent with the hyperscaling relation $\b=(d\nu - \g)/2$ with $d=1$. When a similar analysis is performed for asynchronous updates, we find a fairly good collapse with the same value of $\nu=3/2$, but different values of $\b,\g$. 
	
	It is known in the literature that there exist critical points whose critical exponents depend on synchronicity. This has been studied for Toom's rule, where it was found that the exponents could depend on synchronicity, but with the ratios $\b/\nu,\g/\nu$ fixed to the values they take in the 2d Ising model \cite{ray2024protecting,takeuchi2006can}. A partial understanding for this result follows from the fact that the critical point of Toom's rule for asynchronous updates flows to the model-A universality class under RG \cite{ray2024protecting,squeezing}. As in the non-equilibrium phase transitions in the memories of ref.~\cite{squeezing}, more work will be needed to understand what happens in the present case, where the flow cannot be to any model with an effectively equilibrium description. 
	

	\ss{Initialization} \label{ss:ftinit}
	
	\begin{figure*}
		\makebox[\textwidth][s]{%
			\hspace{\fill}%
			\includegraphics[width=.36\textwidth]{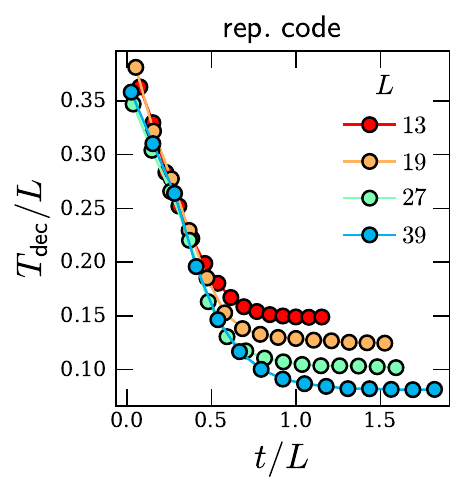}%
			\hspace{\fill}%
			\includegraphics[width=.35\textwidth]{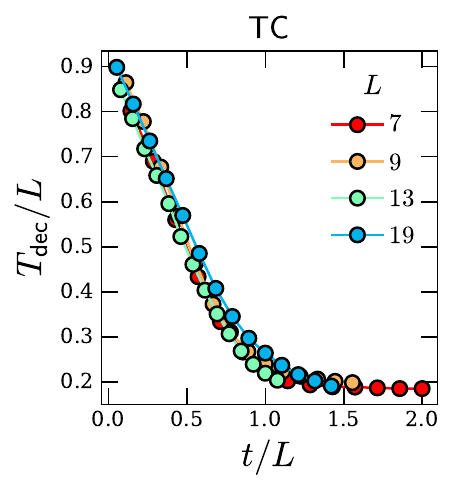}%
			\hspace{\fill}%
		}
		
		\caption{\label{fig:prep} Sub-threshold decoding times $\tdec$ (shorthand for $\tdec(t;\s_\tp(\rlog))$) for synchronous decoders following a quench from a random state, shown for the 1d repetition code with $p = 4.5\%$ (left) and the 2d toric code with $p = 0.75\%$ (right). In both plots we set $Z = \lceil \log_{3/2}(L) \rceil$.}
		
	\end{figure*}
	
	In addition to acting as a memory, a useful active decoder $\mcd$ should also be able to fault-tolerantly prepare logical states, reliably taking specific product states to targeted logical ones. We say that a decoder prepares a logical state $\rlog$ fault-tolerantly in time $t$ if there is a product state $\s_\tp(\rlog)$ such that the dynamics begun on $\s_\tp(\rlog)$ is $\ep$-close in trace distance to the dynamics begun on $\rlog$: 
	\be ||\mcd^t(\s_\tp(\rlog)) - \mcd^t(\rlog)||_1 < \ep \ee 
	for some small $O(1)$ constant $\ep$. We define the {\it initialization time} $\tprep$ of the decoder (for a fixed choice of $\s_\tp(\rlog)$) as (see also the discussion in
	\be \tprep = \max_{\rlog} \min\{t \, : \, ||\mcd^t(\s_\tp(\rlog)) - \mcd^t(\rlog)||_1 < \ep \}.\ee 
	
	For the present decoder, we analyze initialization following the discussion in \cite{Dennis_2002,balasubramanian2024local}. To prepare the $|\overline{00}\rangle\langle \overline{00}|$ logical state of the toric code, we choose $\s_\tp(|\overline{00}\rangle\langle \overline{00}|)$ to be the all-0 product state; the decoder is then initialized in a state with trivial $Z$ syndromes and $X$ syndromes which are equivalent to those one would obtain under maximally dephasing noise ($p=1/2$). Since the $X$ and $Z$ sectors of the decoder function independently, $\tinit$ is fixed by the time needed for the initial large density of $X$ syndromes to be corrected away. 
	
	Under the decoding dynamics, an initial state with $p=1/2$ will have a defect density which exhibits an initial exponential decay. Once the density becomes sufficiently low---and all the defects in the initial noisy state are pushed off to the back wall---we will be left with a small number of well-separated pairs, generically separated on scales of order $L$. The linear erosion property then suggests $\tprep = \ct(L)$. This will be true only if the remaining well-separated anyons on the back wall are able to establish direct causal contact. This only happens if the error rate $p_Z$ on the back wall satisfies $L < p_Z^{-1/D}$. If instead $L \gg p_Z^{-1/D}$, the incoming anyons will intervene and lead to the well-separated anyons moving diffusively, increasing $\tprep$ to $\ct(L^2)$ at the largest system sizes (of course, this is only relevant if we simultaneously have $\tmem > L^2$). 
	As a comparison, note that currently studied RG-based hardwired decoders have $\tprep = O(L^{\a>1})$ \cite{balasubramanian2024local}. 
	We cannot expect $\tprep = o({\rm poly}(L))$, since relaxation this fast would indicate that our decoder was phase-equivalent to a trivial channel, and we believe that $\tinit = \ct(L)$ is optimal for any local decoder. 
	
	Bounds on $\tinit$ can be proven by studying the erosion of system-spanning noise clusters, but the details are rather tedious. We will instead content ourselves by numerically evaluating a simple proxy for $\tinit$. We define the expected decoding time $\tdec$ as 
	\be \tdec(t;\r) = \min\{ T \, : \, \Tr[ \mcd_\off^T[\mcd^t(\r)] \Pi_\bfzero ] = 1 \},\ee 
	where $\Pi_\bfzero$ is the projector onto the defect-free subspace. The time $t_{\sf indep}$ after which $\tdec(t;\s_\tp(\rlog))$ becomes independent of $t$ defines the time after which system has effectively equilibrated as far as the readout of logical information (using the offline decoder) is concerned. We will thus take $t_{\sf indep}$ as a proxy for $\tinit$. 
	
	Fig.~\ref{fig:prep} shows $\tdec(t;\s_\tp(\rlog))$ at $p<p_c$ for both the $1d$ repetition code and $2d$ toric code. In both cases, we extract a value of $t_{\sf indep}$ that scales roughly as $L$, supporting the expectation of $\tinit = \ct(L)$ put forward above. 
	
	\section{Discussion and outlook} \label{sec:disc}
	
	In this work, we showed how to perform fault-tolerant active error correction in topological codes by using a local message-passing architecture to simulate a confining interaction between anyons. In what follows we briefly discuss some areas that would be interesting to address in future work. 
	
	{\it Fast real-time decoding:} 
	The parallelization and simplicity of our decoder means that very little time overhead is incurred when running error correction. This is of most relevance to superconducting platforms, where its simplicity opens up the possibility of the error-correcting hardware being implemented in-fridge. How close to on-chip it can get is determined by the amount of heat that the classical control architecture dissipates, which is ultimately set by total the number $n_{\sf bits}$ of classical bits used by the decoder (we need $n_{\sf bits} = \ct(L^2 \polylog(L))$ for our analytics to guarantee a maximal suppression of $\plog$, but our numerics indicate that $n_{\sf bits} = \ct(L^2 \log(L))$ is likely sufficient). In more detail, if we allow messages to propagate up to half the system size, we need $\lceil 6Z \log_2 (L/2) \rceil$ bits of memory per stabilizer location to store the messages, plus a few extra bits to store the syndrome value and potential feedback operations to be applied on incident edges, as well as to perform comparisons between different messages. Taking $L = 21$ and $Z = 5$, this gives $n_{\sf bits} \approx 10^5$ bits in total, although below threshold most of these bits would be inactive during the large majority of time steps (below threshold, the activity of a bit is exponentially suppressed in its $z$ coordinate, and only $\log_2 \log_2(L/2)$ bits are needed to represent messages created by typical errors).		
	This power demand should be modest enough to fit in the 4K stage of a dilution refrigerator; if the classical processing is implemented with single flux quantum logic or cryo-CMOS \cite{mukhanov2011energy,liu2023single,11037551,patra2017cryo} it may be possible to do better (e.g. on-chip) with a small enough value of $Z$, which in practice may be enough to achieve a sufficient amount of error suppression even if no strict threshold is present (an approach similar in spirit to the proposals of e.g.~\cite{park2025enhancing,holmes2020nisq+}).  
	Going forward, it will be worthwhile to make a detailed comparison between these resource requirements and those of decoders based on MWPM and Union-Find (see e.g.~\cite{liyanage2024fpga,chan2023actis,wu2023fusion,higgott2025sparse}), which, since they implement sliding-window schemes, require $n_{\sf bits} = \O(L^3)$.

	Of course, speed and efficiency are only one part of the story, and the accuracy of our approach is still appreciably below nonlocal strategies like MWPM and Union-Find. However, we have not attempted to optimize the failure rate in our construction, and future work will be needed to identify modifications of the message-passing architecture that improve the decoding accuracy; given that the naive construction presented here has a threshold under phenomenological noise that is only a factor of $\approx 2$ smaller than global MWPM (this comparison being made with synchronous message-passing dynamics), we are optimistic that such improvements can be made (one easy modification that improves performance is to add an extra rule that eliminates pairs of anyons separated by a single link, before they have a chance to send out messages).

	{\it Many-body physics:} From the perspective of quantum many-body physics, our construction gives a local dissipative process that is capable of maintaining quantum coherence for a thermodynamically long time in the presence of generic noise suffered by the quantum components of the system. It relies on coupling to a noiseless classical automaton to do this, and it would be interesting to understand the principles underlying the characterization of such classically-controlled quantum systems. In particular, for the message-passing decoders discussed in this work, the following questions are of interest: 
	
	\medskip 
	
	\begin{itemize} 
		\item How do we characterize the steady states of the dynamics in the error-correcting phase? 
		\item What are the universality classes of the phase transitions that occur when the noise strength is tuned, particularly in one dimension?
		\item Can we develop effective continuum descriptions for the anyon dynamics?
	\end{itemize}

	{\it Fault tolerance with no classical overhead:} Finally, the biggest outstanding open theory question left unaddressed by this work is the construction of a local dissipative decoder that uses {\it no} reliable classical bits. In particular, can one construct a decoder in $d\leq 3$ dimensions that is 1) local, 2) asynchronous, 3) homogeneous in spacetime, and 4) robust also against noise in any classical bits used to assist quantum error correction? 
	The construction in this work satisfies criteria 1-3 but not 4. 
	Of the other decoders in the literature, the $3d$ construction of \cite{balasubramanian2024local} comes closest: it only lacks homogeneity in space and robustness against desynchronization (and has the benefit of storing $\ct(L)$ logical qubits), and these problems may be easier to overcome. Finding out how to solve these problems---and ideally, doing so in a way that is practically useful---is an exciting challenge for future work.

	\section{Acknowledgments}
	
	I thank Ehud Altman, Tarun Grover, Jeongwan Haah, Yuri Lensky, Yaodong Li, and Charles Stahl for helpful discussions and suggestions, Shankar Balasubramanian, Aditya Bhardwaj, Margarita Davydova, and  Nathaniel Selub for discussions and collaboration on related work, and Dominic Williamson for alerting me to David Poulin's unpublished work on simulated confinement. I am supported by a Miller research fellowship. 
	
	\appendix 
	
	\begin{widetext}

		\section{No-go results}\label{app:nogo}
		
		\ss{Decoders with message-passing dynamics}
		
		In this subsection, we show the following result: 
		\ms 
		\begin{theorem}[no threshold for the message-passing decoder with $Z=0$]\label{thm:no_thresh_msg_passing} 
			There exists a $p$-bounded error model for which the local message-passing decoder with $Z=0$ has $\tmem$ bounded as in \eqref{z0tmembound}. This holds in any dimension, and in the absence of measurement errors. 
		\end{theorem}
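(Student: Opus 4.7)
The plan is to flesh out the proof sketch from the main text into a rigorous argument. The overall strategy is to reduce the $d$-dimensional decoding dynamics under the adversarial noise model $\mce_{\sf s.p.}$ to an effective one-dimensional random walk on the separation $r_t$ of a distinguished large anyon pair, then apply standard hitting-time estimates. First, I would formally define $\mce_{\sf s.p.}$ as a noise model that only flips qubits along a single designated $1d$ strip of the lattice, check that the joint probability of any set $A$ of spacetime locations appearing in the noise realization is bounded by $p^{|A|}$ (which follows because the reflection-symmetry and separation constraints only restrict, rather than enlarge, the support of the noise relative to i.i.d.\ bit flips), and then note that since the noise is completely trivial outside the strip, the $d$-dimensional dynamics factorizes and it suffices to analyze the $1d$ effective model.

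Next I would track $r_t$ under the $Z=0$ message-passing dynamics. The crucial observation is that conditions (1)–(3) on $\mce_{\sf s.p.}$ guarantee that every small nearest-neighbor pair created by the noise is annihilated at the next time step (because its two anyons mutually exchange the strongest possible message), and that the reflection symmetry about the midpoint of the large pair cancels the transverse contributions of the noise to the messages received by the two large-pair anyons. Hence only the signals along the axis of the large pair matter, and the sign of the net signal at each endpoint is determined by whether its partner's message arrives before, or simultaneously with, any stronger message from a noise-induced pair between it and the partner. This reduces the evolution of $r_t$ to a random walk on $\zz^{\geq 0}$ whose transition probabilities $p_r^{\leftarrow},p_r^{\rightarrow}$ are explicit polynomials in $p$ and $r$.

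The main technical step is then to identify the crossover point $r_*$ of this walk. The probability that a noise event places a pair somewhere in a radius-$r$ region around one of the large-pair anyons is $1-(1-p)^{cr}$, so the walk drifts inward when $cr \ln(1/(1-p)) \ll 1$ and outward when $cr\ln(1/(1-p)) \gg 1$, giving $r_* = \Theta(1/\ln(1/(1-p)))$. A direct computation of the stationary measure $\pi_r$ shows that $\pi_r/\pi_{r_*} \leq e^{-c|r-r_*|\ln(1/(1-p))}$ for $r<r_*$. Standard one-dimensional random-walk results (the Kemeny–Snell formula for mean hitting times) then give
\be
\EE[\tau_{r_*} \mid r_0 = O(1)] \leq a\, e^{b/\ln(1/(1-p))} + c,
\ee
while once $r_t \geq r_*$, the positive drift makes the expected time to reach any larger $l$ at most an additive $O(l)$, which is absorbed into the additive constant $c$ in \eqref{z0tmembound}.

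To upgrade this per-strip estimate to a bound on $\tmem$ for the full $d$-dimensional code, I would partition the system into $L^{d-1}$ parallel strips of width $O(1)$ and define a $p$-bounded error model $\mce$ that runs an independent copy of $\mce_{\sf s.p.}$ on each strip (independence preserves the $p$-bounded property). A logical error occurs as soon as any single strip produces a sufficiently large pair to span the code distance, so $\tmem$ is upper-bounded by $\tau_{r_*}$ divided by a positive constant, giving the desired $L$-independent upper bound and thereby the absence of a threshold. The hardest step will be the second one — namely, verifying that the $Z=0$ message-passing updates really do produce the clean random walk described, including handling corner cases such as a small pair being created exactly adjacent to one of the large-pair anyons, and checking that the tie-breaking rule for degenerate messages does not generate a spurious inward drift at large $r$; these are tedious case analyses but do not change the scaling.
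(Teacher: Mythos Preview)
Your proposal follows the paper's overall strategy but has two genuine gaps.

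First, you place the noise-induced small pairs ``between it and the partner'', i.e.\ inside the large anyon pair. This is backwards: a small pair located between the two large anyons sends each of them its smallest-valued message from the inside, so each large anyon moves \emph{toward} its partner and the pair contracts. The adversarial model $\mce_{\sf s.p.}$ must place the small pairs only on the \emph{outside} of the large pair (to the left of $r_{l,t}$ and, by reflection, to the right of $r_{r,t}$), so that each large anyon is lured away from its partner. The role of the reflection symmetry is not to cancel ``transverse contributions'' but simply to guarantee that both large anyons move symmetrically, so that $r_t$ changes by exactly $\pm 2$ each step and the reduction to a one-dimensional walk is clean. Your later claim of an outward drift for large $r$ is the correct conclusion, but it does not follow from the mechanism you describe.

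Second, and more seriously, your final step does not produce an $L$-independent bound. Once $r_t \geq r_*$ the pair expands with positive drift, but it still needs time $\Theta(L)$ to reach a separation of order the code distance; that term cannot be ``absorbed into the additive constant $c$''. Partitioning into $L^{d-1}$ parallel strips does not help: every strip still requires $\Theta(L)$ time before its pair spans the distance, and for $d=1$ the partition is vacuous anyway. The paper's fix is orthogonal to yours: instead of parallel strips, partition the \emph{single} strip lengthwise into $\sim L/r_*$ boxes of width $\sim r_*$, and run an independent copy of $\mce_{\sf s.p.}$ in each box. After time $O(\oEE[\tau_{r_*}])$ every box hosts an unbinding pair of size $\sim r_*$ with probability at least $9/10$; the resulting error pattern along the strip then causes the offline decoder (majority vote in $1d$, noiseless message passing in $2d$) to fail with probability $1$ as $L\to\infty$, without any single pair ever having to reach size $L$. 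This is what makes $\lim_{L\to\infty}\tmem$ finite.
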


		As described in the main text, it suffices to show that the result holds in $1d$; the result for $d>1$ then follows immediately by considering a noise model that acts only on a $1d$ strip of the lattice and using the isotropy properties of the message passing architecture. The proof proceeds by designing a $p$-bounded error model that always tries to ``expand'' well-separated anyon pairs. Since we are interested in bounding $\lim_{L\ra\infty}\tmem$, we will directly work in the thermodynamic limit. Finally, for convenience we will work with synchronous updates throughout; the extension to asynchronous updates involves more bookkeeping details but no new ideas. 
		\ms\begin{definition}[single-pair error model]\label{def:singlepair_noise_model}
			The {\it single-pair error model} $\mce_{\sf s.p.}$ is a particular $p$-bounded error model defined as follows:
			\begin{enumerate}
				\item Suppose that at time $t$ there is an anyon pair at coordinates $r_{l,t},r_{r,t}$ with $r_{l,t} < r_{r,t}$ and separation $r_t = r_{r,t} - r_{l,t} >2$. Then the noise $\sfN_t$ at time $t$ creates only localized anyon pairs which are separated from one another and from the anyons in the pair by at least two lattice sites, whose spatial arrangement is reflection-symmetric about the midpoint of the anyon pair, which are created only at coordinates $r>r_{r,t}$ and $r<r_{l,t}$,
				and which are otherwise created independently with probability $p$. Define the following set of pairs of lattice sites: 
				\be \label{noisesites} {\sf NoiseSites}_t = 
				\{(r_{l,t}-3n+1,r_{r,t}+3n) \, : \, n \in \zz^{>0} \},\ee
				then with probability $p$ independently for each tuple $(r_1,r_2) \in {\sf NoiseSites}_t$, $r_1,r_2\in \sfN_t$. If a site $r$ is not contained in any tuple in ${\sf NoiseSites}_t$, then $r \not\in \sfN_t$. 
				
				\medskip 
				
				\item If at time $t$ there is no anyon pair in the system with separation greater than 2, then $\sfN_t$ creates one pair of separation $r_0$ centered at the origin with probability $p^{r_0}$, with $r_0$ an odd integer greater than 2, and does not create any other pairs. 
			\end{enumerate}
		\end{definition}\ms
		
		The proof is conceptually very simple---it is ultimately just a statement about a biased random walk on $\zz^{\geq 0}$---but there are unfortunately a few details to be kept track of. 
		
		We will first show that the noise rapidly creates a ballistically-expanding pair when the message velocity is $v=\infty$. We will then describe how to extend this result to the local version where $v$ is an $O(1)$ constant. To show the $v=\infty$ case, we use the following proposition: 
		\begin{proposition}\label{prop:biased_pair_sep}
			Let $r_t$ be the separation between the anyons in the largest anyon pair at time $t$. For the message-passing decoder with $v=\infty$ subjected to $\mce_{\sf s.p.}$, the variable $r_t$ satisfies 
			\be \sfP(r_{t+1} = r_t\pm 2) = \frac{1\pm b_{r_t}}2,\qq \sfP(r_{t+1} = r_t) = 0,\ee 
			where the bias of the noise is 
			\be b_r = 1 - 2 (1-p)^{\lfloor \frac{r+2}4 \rfloor},\ee 
			and where $\sfP(\cdot)$ is calculated by averaging over $\sfN_t$. 
		\end{proposition}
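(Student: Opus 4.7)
The plan is to analyze one update of the $v=\infty$ message-passing dynamics directly in terms of Def.~\ref{def:singlepair_noise_model}. The key property of $v=\infty$ is that messages propagate across the whole lattice before feedback, so at the feedback step each anyon simply moves one lattice site toward its nearest other anyon (with ties broken by the convention of \cite{lake2025fast}). Thus the update of the distinguished-pair separation $r_t$ is determined entirely by the distances from $a_l$ (and $a_r$) to the other anyons present just before feedback.

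First I would enumerate the noise-produced anyons on one side of the pair. By \eqref{noisesites} together with the two-site buffer requirement of Def.~\ref{def:singlepair_noise_model}, the $n$-th sampled tuple creates a small pair to the left of $a_l$ whose nearer member sits at some distance $d_n$ from $a_l$, and by reflection symmetry a mirror pair whose nearer member sits at distance $d_n$ from $a_r$. A direct count of vertices (each small pair occupies two adjacent vertices separated by a two-site buffer from its neighbors) gives $d_n$ linear in $n$ with an integer offset fixed by the bond-vertex convention, and one verifies that the condition for $a_l$'s nearest neighbor to be a noise anyon rather than $a_r$ is equivalent to $n \leq N_r := \lfloor (r_t + 2)/4 \rfloor$.

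Next I would split into two cases. If at least one of the $N_r$ close-range tuples is sampled, let $n_*$ be the smallest; reflection symmetry, together with the fact that no noise pairs lie between $a_l$ and $a_r$, implies that $a_l$'s nearest neighbor is the closest noise anyon to its left, and $a_r$'s nearest neighbor is the mirror image to its right. Each therefore steps outward by one lattice site and $r_{t+1} = r_t + 2$. If none of the $N_r$ close tuples is sampled, every other anyon is at distance strictly greater than $r_t$ from $a_l$ (and $a_r$), so $a_l$ and $a_r$ see each other as nearest neighbors, step inward by one, and $r_{t+1} = r_t - 2$. This dichotomy immediately gives $\sfP(r_{t+1} = r_t) = 0$. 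Since the $N_r$ close tuples are independent Bernoulli$(p)$ events by construction, the probability that none is sampled is $(1-p)^{N_r}$, and the stated formulas for $\sfP(r_{t+1} = r_t \pm 2)$ and $b_{r_t} = 1 - 2(1-p)^{N_{r_t}}$ follow.

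Two minor edge cases must be checked: (i) the noise-induced small pairs must themselves annihilate cleanly during the same update, which they do because the two-site buffer makes the two members of each small pair each other's nearest anyon, and (ii) tuples with $n > N_r$, which produce anyons at distance $>r_t$, cannot accidentally become $a_l$'s nearest neighbor, which is immediate since $a_r$ is closer by construction. The only genuinely careful part of the proof is the lattice-site bookkeeping that produces precisely $N_r = \lfloor (r_t+2)/4 \rfloor$; this amounts to tracking parity conventions and the odd-integer initialization $r_0$ from Def.~\ref{def:singlepair_noise_model}, and is routine but easy to get off by one. Everything else is a direct consequence of the reflection symmetry of $\mce_{\sf s.p.}$ and the nearest-neighbor feedback rule at $v = \infty$.
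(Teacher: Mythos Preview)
Your proposal is correct and follows essentially the same approach as the paper's proof: both use that $v=\infty$ reduces feedback to ``move toward the instantaneous nearest neighbor,'' invoke the reflection symmetry of $\mce_{\sf s.p.}$ to reduce the question to whether $a_r$'s nearest neighbor lies to its right, and then count the $\lfloor (r_t+2)/4\rfloor$ tuples in ${\sf NoiseSites}_t$ whose right coordinate is within distance $r_t$ of $a_r$, concluding via independence. The paper's write-up is terser and makes slightly more explicit use of the odd-parity invariant $r_t\in 2\zz+1$ (which guarantees every anyon has a \emph{unique} nearest neighbor, so the dichotomy is exhaustive and $\sfP(r_{t+1}=r_t)=0$); you fold this into your ``lattice-site bookkeeping'' remark, which is fine, but it is the cleanest way to dispatch the tie-breaking issue you allude to in edge case (ii).
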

		\begin{proof}
			The result follows from unpacking the definition of $\mce_{\sf s.p.}$ and the fact that when $v=\infty$, each anyon moves towards its instantaneous nearest neighbor (it will not be necessary to decide what to do in the case of ties). From the choice of ${\sf NoiseSites}_t$ in \eqref{noisesites}, as well as the fact that the created anyon pairs are always separated by an odd number of lattice sites (in particular, that $r_t \in 2\zz +1$; c.f. the second point of the above definition), it follows that $r_{t+1} = r_t \pm 2$ for all $t$, and that every anyon in the system always has a unique nearest neighbor. Thus $\sfP(r_{t+1}=r_t) = 0$. 
			
			To calculate $\sfP(r_{t+1} = r_t - 2)$, we observe that $r_{t+1} = r_t - 2$ only if the noise does not create pairs at any of the $\lfloor \frac{r+2}4\rfloor$ possible elements of ${\sf NoiseSites}_t$ with second coordinate between $r_{r,t}$ and $r_{r,t}+r_t$. Therefore 
			\bea \sfP(r_{t+1} = r_t-2) = (1-p)^{\lfloor \frac{r_t+2}4 \rfloor},\eea 
			which gives the stated expression for $b_t$. 
		\end{proof}
		
		We can now complete the proof of theorem~\ref{thm:no_thresh_msg_passing}. Since the details are slightly tedious and the analysis is very elementary, we will be slightly schematic at places. As previously, we will let undefined roman letters denote unimportant $p$-independent constants, which we will let change values in between equations. 
		
		\begin{proof} 
			Under $\mce_{\sf s.p.}$, the large pair separation $r_t$ undergoes a random walk on $2\zz^{\geq 0}+1$. From Prop.~\ref{prop:biased_pair_sep}, this walk is biased towards large $r_t$ when 
			\be \label{rstar} r_t \geq r_* = 4 \frac{\ln2 }{\ln[1/(1-p)]} - 2,\ee 
			and is biased towards small $r_t$ when $r< r_*$. To simplify the notation a bit, consider equivalently a walk on $\zz^{\geq0}$ where $\sfP(r_{t+1} = r_t \pm 1) = (1\pm b_r)/2$, with $b_r =1 - a\g^r$ and $\g=(1-p)^{1/4}$. The expected time to hit $\wt r_* = 10 \ln(1/a)/\ln(1/\g)$ (where the walk is comfortably biased towards $\infty$) starting from the origin has expectation value (see e.g. \cite{levin2017markov})
			\be \label{hitting} \oEE[\tau_{\wt r*}] \leq a e^{b/\ln(1/(1-p))} + c.\ee 
			
			By itself, this noise model will not create an error on a system of size $L$ until a time going like $\oEE[\tau_{\wt r_*}] + \ct(L)$, since it takes $\ct(L)$ time for the pair to expand to a separation of order $L$, and there is only ever one large pair present in the system. To bring down the expected time to create an error to something proportional to $\oEE[\tau_{\wt r_*}] $, we simply divide the system into blocks of width $a\wt r_*$, and define a noise model $\mce$ which implements $\mce_{\sf s.p.}$ on each block. After a time $b\oEE[\tau_{\wt r_*}]$, a large pair will be present on each block with probability at least $9/10$ (for an appropriate choice of $b$), and decoding with the message passing decoder (or a global majority vote, in the case of the $1d$ repetition code) will produce a logical failure with probability 1 in the $L\ra\infty$ limit; combining this with \eqref{hitting} then gives what we wanted to show.\footnote{A particularly careful and / or pedantic reader will notice that we have swept some annoying details about what happens near the boundary between blocks under the rug: the noise added by the $\mce_{\sf s.p.}$s on each block must be terminated by the block boundaries, and this allows the large anyon pairs from adjacent blocks to fuse with one another. These effects just decrease $\tmem$, and we will not bother to spell them out explicitly.}
			
			\begin{figure}
				\includegraphics[width=.5\tw]{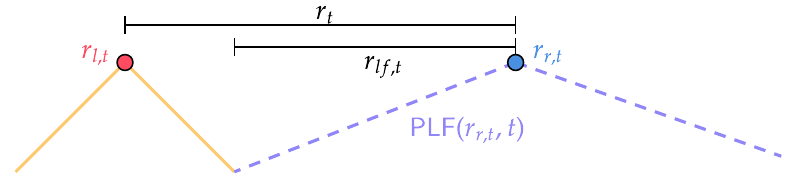} \caption{\label{fig:geo_argument} The considerations relevant to showing \eqref{rlct}; time runs vertically upward. The dashed lines denote $\plc(r_{r,t},t)$ and are drawn with slope $1/v$.}
			\end{figure}
			
			Now we need to generalize this argument to a constant message velocity.  
			Consider again the random walk that a large anyon pair undergoes under $\mce_{\sf s.p.}$.
			The probability for the pair separation to decrease at a given time is now
			\be \label{vprob}\sfP(r_{t+1} = r_t -2) = (1-p)^{\lfloor \frac{r_{lf,t} + 2}4 \rfloor },\ee 
			where		
			\be r_{lf,t} = |r_{r,t} - \{r_{l,t'<t}\} \cap \plc(r_{r,t},t)|,\ee 
			where $\plc(r,t)$ is the past light front (with respect to the velocity-$v$ messages) of the spacetime point $(r,t)$. 
			By the reflection symmetry of the noise we additionally have $r_{lf,t} =  |r_{l,t} - \{r_{r,t}\} \cap \plc(r_{l,t'<t},t)|$; thus having $r_t \in 2\zz+1$ for all $t$ holds independent of $v$. To get an upper bound on $\tmem$ it suffices to use the upper bound 
			\be\label{rlct} r_{lf,t} \geq \left\lfloor \frac{vr_t}{1+v}\right\rfloor,\ee 
			which follows from a simple geometric argument shown in fig.~\ref{fig:geo_argument}. Replacing $r_{lf,t}$ in \eqref{vprob} by the RHS of \eqref{rlct} gives a random walk that moves to the right with strictly less probability than the actual random walk of interest, and hence an upper bound on $\oEE[\tau_{\wt r_*}]$ for this random walk can be used to upper bound $\tmem$. The new random walk however clearly still has a hitting time like that in \eqref{hitting} (just with different $p$-independent constants), and so taking $v < \infty$ does not qualitatively change the result. 
		\end{proof}

		\ss{Decoders with power-law interactions} \label{app:power_law_nogo}
		
		We now show the analogue of theorem~\ref{thm:no_thresh_msg_passing} for decoders with attractive power-law interactions between anyons. 
		
		\ms 
		
		\begin{theorem}[no threshold for power-law decoders]\label{thm:no_thresh_powerlaws} 
			There exists a local $p$-bounded error model for which the power-law decoders have $\tmem$ bounded as in \eqref{z0tmembound}. This holds in any dimension, for any value of $\a$, and in the absence of measurement errors. 
		\end{theorem}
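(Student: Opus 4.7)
The plan is to mimic the structure of the proof of Theorem~\ref{thm:no_thresh_msg_passing}, constructing an adversarial $p$-bounded noise model tailored to the power-law decoder. As before, it suffices to work in one spatial dimension (the higher-$d$ case follows by restricting the noise to a $1d$ strip and invoking the rotational symmetry of the architecture), with perfect measurements, and to work directly in the thermodynamic limit.

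First I would define a single-pair noise model $\mce_{\sf s.p.}^{\rm pl}$ along the lines of Definition~\ref{def:singlepair_noise_model}. The noise nucleates an initial anyon pair of separation $r_0$ about the origin with probability $p^{r_0}$; thereafter, given a pair of separation $r_t$ with members at $r_{l,t} < r_{r,t}$, the noise may only place small pairs at the candidate sites
\be {\sf NoiseSites}_t = \{(r_{r,t}+3m, r_{r,t}+3m+1) \, : \, m \geq 1\} \cup \{(r_{l,t}-3m-1, r_{l,t}-3m) \, : \, m \geq 1\},\ee
each present independently with probability $p$, with the extra constraint that the realization be reflection-symmetric about the midpoint of the main pair. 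Spacing the candidates by at least two sites guarantees that every small pair annihilates itself at the next time step without ever interfering with the main pair, and the reflection symmetry ensures that the random forces on the left and right members of the main pair are mirror images, so that $r_t$ undergoes a genuine 1d random walk on a sublattice of $\zz^{\geq 0}$.

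The crux of the proof is the force-domination step. I want to show that for $r_t$ above a $p$- and $\alpha$-dependent threshold $r_*$, the probability that the decoder moves both main anyons outward exceeds $1/2$. The key observation is that a \emph{single} symmetrically-placed pair of noise events at offset $m \leq c_\alpha r_t$ (for an $\alpha$-dependent constant $c_\alpha > 0$) is already enough to flip the sign of the net force on both main anyons: the outward dipolar contribution $1/m^\alpha + 1/(m+1)^\alpha$ from the close pair dominates the inward attraction $1/r_t^\alpha$ from the partner together with the residual cross-attraction $\sim 2/(r_t+3m)^\alpha$ from the mirror noise pair, whenever $m \ll r_t$. The probability that no such useful symmetric pair is present among the $\Omega(r_t)$ candidates decays as $(1-p)^{\Omega(r_t)}$, so the pair separation random walk becomes outward-biased once $r_t \gtrsim 1/\ln(1/(1-p))$, and the standard hitting-time estimate used in \eqref{hitting} gives an expected time to reach this threshold bounded by $a e^{b/\ln(1/(1-p))} + c$.

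The main obstacle is proving the force-domination inequality cleanly and uniformly in $\alpha$, particularly for $\alpha \leq 1$ where contributions from distant noise events are not obviously negligible. The reflection symmetry built into $\mce_{\sf s.p.}^{\rm pl}$ is essential here: it guarantees that force contributions from \emph{all} noise events whose left-right mirror also occurred cancel pairwise in their inward components, so that only the outward dipolar terms survive in the net bias, independent of how slowly the power-law tails decay. A secondary technical issue is that the $\phi$-map may resolve near-ties between distinct lattice directions in a way that slightly weakens the bias; this can be absorbed into the $O(1)$ constants $a,b,c$ by choosing $r_0$ sufficiently large and by adopting the tie-breaking convention of Section~\ref{ss:thresholds}. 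Once the random walk analysis is in place, the final step -- partitioning the system into blocks of width $\ct(r_*)$ and running independent copies of $\mce_{\sf s.p.}^{\rm pl}$ in each to build a genuinely local $p$-bounded noise model that creates a logical error in expected time $\ct(ae^{b/\ln(1/(1-p))}+c)$ -- goes through exactly as in the proof of Theorem~\ref{thm:no_thresh_msg_passing}.
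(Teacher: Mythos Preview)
Your proposal is essentially correct and follows the same route as the paper: reuse the reflection-symmetric single-pair noise model, show that the nearest noise pair at offset $\delta \lesssim c_\alpha r_t$ suffices to make the force on the outer anyon point outward, conclude that the bias satisfies $b_r \geq 1 - 2(1-p)^{\Omega(r)}$, and finish via the hitting-time estimate and block decomposition exactly as in Theorem~\ref{thm:no_thresh_msg_passing}.

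One point needs sharpening. Your explanation of why the argument is uniform in $\alpha$---that reflection symmetry makes inward force components ``cancel pairwise''---is not correct, and this is precisely the place where the $\alpha\leq 1$ worry you flag would bite if the mechanism were as you describe. There is no cancellation: a noise pair at offset $\delta$ to the right of $r_{r,t}$ pulls it outward by $\sim 2/\delta^\alpha$, while its mirror (at offset $\delta$ to the left of $r_{l,t}$) pulls $r_{r,t}$ \emph{inward} by $\sim 2/(r_t+\delta)^\alpha$. The correct observation (which the paper states explicitly) is that each symmetric noise event contributes a \emph{net positive} amount to the outward force on $r_{r,t}$, simply because $\delta < r_t+\delta$ and the force is monotone in distance; this holds for every $\alpha>0$. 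Hence additional noise pairs can only help, and one may safely lower-bound $\sfP(F_{r_{r,t}}>0)$ by keeping only the nearest pair together with the partner at $r_{l,t}$. With this replacement your force-domination step goes through cleanly, yielding (as in the paper) $\sfP(r_{t+1}=r_t-2)\leq (1-p)^{\lfloor c_\alpha r_t\rfloor}$ with $c_\alpha = 2^{1/\alpha}3^{-1/\alpha-1}$.
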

		
		\begin{proof}
			The proof preceeds in the same way as the proof of theorem~\ref{thm:no_thresh_msg_passing} and uses the same noise model, and so we will be rather succinct. 
			
			Consider again the effects of $\mce_{\sf s.p.}$ in the $v=\infty$ limit. We just need to establish the analogue of prop.~\ref{prop:biased_pair_sep}. To calculate $\sfP(r_{t+1} = r_t-2)$, focus on the force at the location of the right anyon in the pair $F_{r_{r,t}}$. Then $\sfP(r_{t+1} = r_t -2) = \sfP(F_{r_{r,t}} > 0)$. Due to the reflection symmetry of $\mce_{\sf s.p.}$, $F_{r_{r,t}}$ recieves a positive contribution from every small pair created by $\sfN_t$. We can thus lower bound $\sfP(F_{r_{r,t}} < 0)$ by considering only the pair created by $\sfN_t$ which is nearest to $r_{r,t}$, together with the anyon at $r_{l,t}$. If the nearest pair is at location $r_{r,t} + \d$ with $\d$ a random variable determined by $\sfN_t$, then the force created by it, its reflection-related partner, and the anyon at $r_{l,t}$ gives 
			\bea \sfP(F_{r_{r,t}} < 0) & \leq \sfP\( \frac1{\d^\a} + \frac1{(\d+1)^\a} - \frac1{r_t^\a} - \frac1{(r_t + \d)^\a} - \frac1{(r_t+\d+1)^\a} < 0\)\\ 
			& \leq \sfP\( \frac2{(\d +1)^\a}- \frac 3{r_t^\a} < 0 \) \\ 
			& = \sfP\( \d > r_t(2/3)^{1/\a} - 1\)\\
			& \leq (1-p)^{\lfloor 2^{1/\a} 3^{-1/\a-1} r_t \rfloor }\eea 
			where the last inequality is the probability for the $\lfloor (\d +1)/3 \rfloor$ potential sites at which $\sfN_t$ could create a pair between $r_{r,t}$ and $r_{r,t}+\d$ to all be noise-free. 
			Therefore the bias of the noise satisfies 
			\be b_r \geq 1 - 2 (1-p)^{\lfloor 2^{1/\a} 3^{-1/\a-1} r \rfloor }.\ee
			Since $b_r$ is again exponentially close to 1 at large $r$, the remainder of the proof can be done by exactly following that of theorem~\ref{thm:no_thresh_msg_passing}.
		\end{proof}

		\ss{Power-law interactions with charged anyons}\label{app:charged}
		
		We now show that endowing anyons with $U(1)$ charge does not help construct a decoder that operates by way of power-law interactions, even when the associated conservation law is a strong symmetry obeyed by the noise. In this model, the force on the $i$th anyon is 
		\be \bfF_{i} = q_i \sum_{j\neq i} q_j\frac{\bfr_j - \bfr_i}{|\bfr_i - \bfr_j|^{\a +1}},\ee 
		where the charges $\{q_i\}$ will be assumed for simplicity to be valued in $\pm1$. 
		
		Consider as before a well-separated anyon pair at locations $\bfr_1,\bfr_2$, and let the remaining anyon pairs by at locations $\bfr_{i,1}, \bfr_{i,2}$. Since paired anyons always have opposite charge (by imposition of the strong $U(1)$ symmetry), we may re-organize the sum in the expression for the force $\bfF_1$ on anyon $1$ as 
		\be \bfF_{1} = \frac{\bfr_2 - \bfr_1}{|\bfr_2-\bfr_1|^{\a +1}} + \bfxi_{\bfr_1},\ee 
		where now 
		\be \bfxi_{\bfr_1} = \sum_i \frac{\bfd_i}{|\bfr_1 - (\bfr_{i,2} + \bfr_{i,1})/2|^{\a +1}},\ee 
		where the dipole moment $\bfd_i$ has magnitude of order $|\bfr_{i,1} - \bfr_{i,2}|$. Assuming that the anyon pair separation can be treated as a random variable with $O(1)$ variance, the variance $\s^2_F$ of the noise is then suppressed relative to the uncharged case, with 
		\be \s^2_F \sim \int d^dr \, \frac1{r^{2(\a +1)}},\ee 
		which now has a divergence at large $r$ only when $\a \leq d/2-1$. Unlike the uncharged case, we now have room to choose $\a < 1$ without $\s_F^2$ diverging, and so the assumption that the pair separation has finite variance has the possibility of being self-consistent. 
		
		These models nevertheless still do not possess robust thresholds. For i.i.d noise, the charged nature of the interactions does not help once the classical communication is rendered non-instantaneous: any amount of retardation results in anyons recieving a transient non-screened signal from a pair created by the noise, which is enough to re-introduce a divergence to $\s_F^2$ when $\a \leq 1$ (numerics interestingly appear to indicate that charged models actually perform {\it worse} than uncharged ones). 
		
		We may also rigorously show that charged models lack a threshold under a particular $p$-bounded error model, again proceeding along the same lines as in the proof of theorem~\ref{thm:no_thresh_powerlaws}. We take the noise model to create at most one large pair with opposite charges $q_l,q_r$. We furthermore let the pairs created by $\sfN_t$ at  to be such that if $q_i$ is the charge of the $i$th anyon in the system with the anyons ordered according to their $r$ coordinate, then $q_i = -q_{i+1}$. Following the logic in the proof of theorem~\ref{thm:no_thresh_powerlaws} and using the same notation, we have 
		\bea \sfP(r_{t+1} = r_t-2) & \leq \sfP\(\frac1{\d^\a} - \frac1{(\d+1)^\a} - \frac1{r_t^\a} + \frac1{(r_t+\d)^\a} - \frac1{(r_t + \d + 1)^\a} < 0\) \\
		& \leq \sfP\(\frac1{\d^\a} - \frac1{(\d+1)^\a} - \frac1{r_t^\a} <0\) \\
		& \leq \sfP\(\frac{\min(\a,1/\a)}{\d^{\a+2}}- \frac1{r_t^\a} \) \\ 
		& = \sfP\( \d < C_\a r_t^{1/(1+2/\a)}\) \\ 
		& \leq (1-p)^{\lfloor \frac{C_\a r_t^{1/(1+2/\a)} + 1}3 \rfloor},
		\eea 
		where we defined $C_\a = \min(\a,1/\a)^{1/(2+\a)}$ and used the inequality 
		\be \frac1{\d^\a} - \frac1{(\d+1)^\a} > \frac{C_\a^{2+\a}}{\d^{\a+2}},\qq \d \in \zz^{\geq 0} + 2,\ee 
		which follows from some unilluminating calculus. 
		The bias of the resulting random walk now converges to 1 only as a stretched exponential function of $r_t$, but this is still (more than) enough to unbind the random walk undergone by $r_t$, and leads to a similar type of scaling \eqref{tmemscaling} as derived previously. 
		
		\section{Proof details} \label{app:details} 
		
		\ss{Proof of lemma~\ref{lemma:cluster_erosion}}
		We now prove lemma \ref{lemma:cluster_erosion} from the main text. Since the proof strategy is essentially identical to the analogous result in \cite{lake2025fast}, we will be somewhat laconic. 
		
		\begin{proof} We proceed by induction on the cluster level $k$, showing that the defects in each cluster expand at most by an amount proportional to the linear cluster size before messages are exchanged between defects it contains, after which point it contracts until all the defects it contains mutually annihilate. If $b/w$ is chosen to be large enough, the contraction of each cluster is completed before it has a chance to link up with any other cluster, provided it does not survive to reach the back wall.
			
			Consider the base case of $k=0$, and let $\mcc^{(0)}$ be a particular $0$-cluster. All the defects in $\mcc^{(0)}$ will be guaranteed to be fully ``loaded'' into the control architecture (viz. will all be at $z>0$) over a time window of length $\leq w$, after which they will be contained within an $\infty$-norm ball of size no more than $3w$, regardless of the message background\footnote{Here and below we will abuse notation by describing the defects created by the noise in $\mcc^{(0)}$ as being ``in $\mcc^{(0)}$'' even after they have been loaded into the control architecture at $z>0$.} (the defects in $\mcc^{(0)}$ can expand by up to $w$ in all spatial directions, and move with maximum velocity $2\pm1$ along $\pm\uvz$, due to the modified rules discussed above). After a further time of $3w/(v-1)$, any defect in $\mcc^{(0)}$ at a spacetime location $u$ is guaranteed to have another defect from $\mcc^{(0)}$ in its past lightfront $\plf(u)$, defined as the set of all spacetime points $u'$ such that a message emitted from $u'$ can reach $u$. Provided $b/w>c$ is sufficiently large (e.g. $c \geq 10+12/(v-1)$ suffices), then after at most $w(1+3/(v-1))$ time steps, 
			\begin{itemize} 
				\item Each defect in $\mcc^{(0)}$ cannot have fused with a defect in any other spacetime noise cluster: this follows from $\mcc^{(0)}$ being spacetime separated from other noise clusters by a sufficiently large amount, and the fact that defects must always move to larger $z$ coordinates at each time step.
				\item The closest defect in $\plf(u)$ of any defect in $\mcc^{(0)}$ at the spacetime point $u$ is guaranteed to also be a (distinct) defect in $\mcc^{(0)}$. 
			\end{itemize} 
			By the linear erosion property proved in \cite{pajouheshgar2025exploring}---which shows that if $v>2$, a well-isolated cluster is guaranteed to be eliminated in a time linear in its size---the defects in $\mcc^{(0)}$ will thus annihilate against other defects in $\mcc^{(0)}$ after at most another $w(4+3/(v-1))$ time steps. In particular, they will be corrected before they reach the back wall provided $Z > c$ for some (other) constant $c$. 
			
			Assume now that the statement about cluster annihilation is true at all levels $k'<k$, and consider a $k$-cluster $\mcc^{(k)}$. By the inductive assumption, its correction can be analyzed while assuming that the spacetime noise is $(k-1)$-sparse, provided that it does not reach the back wall. For the same reasons as in the base case, the defects in $\mcc^{(k)}$ are guaranteed to be annihilated against each other by time  $cn^k$, provided $cn^k < Z$ and $b/w$ is again sufficiently large. The induction hypothesis thus holds provided $k < k_Z = \log_n(Z/c)$ (it fails when $k\geq k_Z$, since in this case $\mcc^{(k)}$ can reach the back wall before being corrected). 
		\end{proof} 
		
		\ss{Proof of lemma~\ref{lemma:kmax}}
		
		We now prove lemma~\ref{lemma:kmax} from the main text. Our argument will yield a fairly poor bound on $\g$, but this will not matter for our present purposes. In the proof, undefined roman letters (viz. those other than $b,n,w,d,v,p$) will be stand-ins for $O(1)$ positive constants. To simplify the proof we will also work in the $v\ra\infty$ limit; the generalization to $v = O(1)$ (which only slightly slows down error correction; c.f.~\cite{lake2025fast}) is straightforward but tedious. 
		
		\begin{proof} 
			Consider building a clump $\scc$ from $m$ spacetime clusters $\{\mcc_{\a}^{(k_\a)}\}$, and set $\kmax = \max_\a k_\a$. To upper bound the linear size $r(\scc)$ of $\scc$, we may without loss of generality consider the case where $k_\a = \kmax\,\, \forall \, \, \a$ (we will accordingly omit the superscripts on clusters in what follows). 
			
			As long as $b/w$ is sufficiently large, if $\mcc_\a,\mcc_\b$ are distinct $\kmax$-clusters, no defect from $\mcc_\a$ can fuse with one in $\mcc_\b$ before the defects from both clusters reach the back wall. Clumps are thus formed by well-separated events occurring when the defects from distinct noise clusters are deposited on the back wall.  
			
			To each clump $\scc$ we may define a binary\footnote{Strictly speaking the tree need not be binary if more than two clumps join together at a single time step, but resolving $(n>3)$-valent nodes intro trivalent ones can be done without changing the maximum size of clumps that can be generated.} tree $\sfT(\scc)$ capturing the way in which $\scc$ was formed by back-wall cluster deposition: the leaves in $\sfT(\scc)$ are associated to noise clusters, the nodes are clumps, and the parents of a node are the smaller clumps from which it was formed. 
			
			Consider now two clumps $\scc_1,\scc_2$ in the vertices of $\sfT(\scc)$ which are the children of a common parent cump $\scc_{1\odot2}$. Suppose that at the time $t_0$ when a defect in $\scc_{1/2}$ first had as its nearest neighbor a defect in $\scc_{2/1}$, these clumps had linear sizes $r(\scc_1)$ and $r(\scc_2)$, respectively. Under $v=\infty$ message-passing dynamics, the maximal time $t_{\sf max}$ for which these clumps can expand before merging is 
			\be t_{\sf max} \leq c \min(r(\scc_1),r(\scc_2)),\ee 
			and therefore the size $r(\scc_{1\odot2})$ of $\scc_{1\odot2}$ when it forms is at most 
			\be \label{cluster_combination} r(\scc_{1\odot2}) \leq  \max(r(\scc_1),r(\scc_2)) + c \min(r(\scc_1),r(\scc_2)),\ee 
			where $c>1$. 
			
			
			
			We now upper bound the largest size of $\scc$ over all choices of trees $\sfT(\scc)$ with $m$ leaves, each of which is a $\kmax$ cluster. To simplify the analysis, assume instead that we are instead given a tree with $\lceil \log_2(m) \rceil$ leaves. The largest size of the root is obtained by maximally balancing the tree, since \eqref{cluster_combination} is maximized at fixed $r(\scc_1)+r(\scc_2)$ when $r(\scc_1) = r(\scc_2)$. 				
			Then the clump size of a node at depth $\sfd$ into the tree is at most  $wf^\sfd n^\kmax$ (where $f=1+c$), and hence 
			\be r(\scc) \leq wn^{\kmax} f^{\lceil \log_2(m) \rceil}, \ee 
			giving  \eqref{kmax} with $\g = \log_2(f)$. 
		\end{proof} 
		
		\ss{Proof of theorem~\ref{thm:typ_gerry}}
		
		We now give the proof of theorem~\ref{thm:typ_gerry} from the main text. For simplicity, we will present the argument for offline decoding only, showing that the failure rate of the offline message-passing decoder scales as \eqref{ploggerry} at small enough $p$. The generalization to active decoding is nearly immediate, since the same noise configurations we consider below will lead to a logical failure in the active setting. We will also simplify the proof by taking $v=\infty$; the same arguments go through when $v$ is finite but more $v$-dependent constants need to be tracked (the needed inequalities are in \cite{lake2025fast}). 
		
		\begin{proof} 
			Showing that $d_\mcd = o(d_\mcc)$ was done in \cite{lake2025fast}, where certain ``Cantor strings'' of weight $o(L)$ were created that were shown to produce a logical failure when fed into the offline message-passing decoder (see also \cite{wootton2015simple,hutter2015improved} for a similar construction). These error strings are created by beginning with a system-spanning error string, breaking the string into $q$ equal-length segments, deleting the last segment from the string, and then iterating on the remaining $q-1$ segments. When this iteration is continued until all remaining strings individually have $O(1)$ weight, the resulting error pattern has weight 
			\be L \cdot \( \frac{q-1}q\)^{\log_q(L)} = L^{\g} = o(L), \qq \g = \log_q(q-1)\ee 
			with the explicit construction in \cite{lake2025fast} showing that a logical failure is produced with certainty (regardless of $v$) if $q \geq 6$. 
			
			This result by itself does not show \eqref{ploggerry}: there are only $O(L)$ error patterns whose noise contains a single Cantor string, and {\it typical} error patterns---which are the ones germane to the present discussion since the $L\ra\infty$ limit is being taken first---must therefore be analyzed in order to show the desired result. Nevertheless, we may use clustering to show that, as long as $p$ is sufficiently small, typical error patterns for i.i.d noise which cause a logical failure are indeed likely to contain a pattern equivalent in some sense to a Cantor string. 
			
			Let us first define a ``coarsened'' Cantor string $\mcs$ obtained by only iterating until all remaining strings have length at least $(\log L)^\l$, with the value of $\l$ to be chosen later. After $k$ iterations, the length of the remaining string segments is $(q-1)Lq^{-k}$, and so we accordingly stop the iteration at level\footnote{Omitting floor functions here and below to declutter the notation.}
			\be k_\star = \log_q(L\cdot (q-1)) - \log_q(\l \log L),\ee 
			which is quite close to the maximum number of iterations $\log_q(L)$. 
			The weight $w(\mcs)$ of $\mcs$ is thus only slightly increased, and in particular remains $o(L)$:
			\be w(\mcs) = L \cdot \(\frac{q-1}q\)^{k_\star} = L^\g (\log L)^{\l(1-\g)}.\ee 
			
			We will now show that $\mcs$ has a relatively large probability of being present in i.i.d noise. We do this by conditioning on $\sfN \setminus \mcs$ being composed only of clusters of size $<(\log L)^\d$ for some $\d < \l$; that is, we condition on 
			\be \label{noisecond} (\sfN \setminus \mcs) \cap \sfN_{k_\d} = \emp, \quad k_\d = \log_n[(\log L)^\d]. \ee 
			When this is done, the linear erosion property proved in \cite{lake2025fast} (defects in a $k$-cluster $\mcc^{(k)}$ annihilate only on other defects in $\mcc^{(k)}$) means that endpoints of the strings in $\mcs$  move by a distance $o((\log L)^\l)$ before all errors except these strings are cleaned up. Once the defects in $\sfN\setminus \mcs$ have been eliminated, we claim that the defects in $\mcs$\footnote{Again, we will abuse notation by referring to a defect as being ``in $\mcs$'' if it was either in $\mcs$ initially, or fused with a defect in $\mcs$.} then attract each other and fuse in exactly the same pattern that they would fuse when $\sfN = \mcs$ (and consequently produce a logical error).This can be shown on induction in the level of iteration used to produce $\mcs$. At the deepest level---where defects are separated by amounts of order $(\log L)^\l$---this follows from linear erosion and the fact that the positions of the defects in $\mcs$ differ from their initial positions by only $o((\log L)^\l)$ (this is where taking $v \ra\infty$ simplifies things slightly). After the first round of $\mcs$ defect fusions, the remaining defects are now separated by an amount of order $q (\log L)^\l + o((\log L)^\l)$, and the same argument goes through: at each level, the correction to the defect positions is parametrically smaller in $L$ than the separation they have in the case where $\sfN = \mcs$. 
			
			Clustering, a union bound, and the i.i.d nature of the noise means that the probability of $\sfN$ being appropriately sparse away from $\mcs$ is 
			\be \sfP( (\sfN \setminus \mcs) \cap \sfN_{k_\d}\neq  \emp \, | \, \mcs \subset \sfN) \leq L^d (p/p_c)^{(\log L)^{\d\log_n(2)}},\ee 
			which vanishes as $L\ra\infty$ provided $\l > \d \geq  1/\a + \ep$ for some positive constant $\ep$. 
			We may therefore take e.g. $\l = \d + \ep$ to show that
			\be \sfP(\exists \, \text{Cantor string $\mcs$} \subset \sfN \, \wedge \, (\sfN \setminus \mcs) \cap \sfN_{k_\d} = \emp ) = p^{o(w(\mcs)^{1+\ep})} = p^{o(L^{\g + \ep })},\ee  
			which implies the desired statement about $\tmem$, since any noise realization $\sfN$ that contains $\mcs$ and is appropriately sparse when $\mcs$ is excised from $\sfN$ leads to a logical error.				
		\end{proof} 
		
	\end{widetext} 
	\bibliographystyle{unsrt}
	\bibliography{online_decoders}
	
\end{document}